\newtheorem{thm}{Theorem}
\newtheorem{prop}{Proposition}
\newtheorem{lemma}{Lemma}
\begin{document}

\newcommand{\no}{\noindent}
\newcommand{\be}{\begin{eqnarray}}
\newcommand{\ee}{\end{eqnarray}}
\newcommand{\beeq}{\begin{equation}}
\newcommand{\eeeq}{\end{equation}}
\newcommand{\beeqs}{\begin{eqnarray*}}
\newcommand{\eeqs}{\end{eqnarray*}}
\newcommand{\bms}{\boldsymbol}
\newcommand{\expec}{\mathbf{E}}

\headheight 0in

\title{Jamming in Fixed-Rate Wireless Systems with Power Constraints - Part II: Parallel Slow Fading Channels}
\author{George T. Amariucai, Shuangqing Wei and Rajgopal Kannan}
\maketitle
\footnotetext[1]{G. Amariucai and S. Wei are with the Department of ECE, Louisiana State
University. E-mail: gamari1@lsu.edu, swei@ece.lsu.edu.}
\footnotetext[2]{R. Kannan is with the Department of CS, Louisiana State University,
E-mail: rkannan@bit.csc.lsu.edu. }

\begin{abstract}
This is the second part of a two-part paper that studies the problem of
jamming in a fixed-rate transmission system with fading. In the first
part, we studied the scenario with a fast fading channel,
and found Nash equilibria of mixed strategies for short term power
constraints, and for average power constraints with and without channel
state information (CSI) feedback. We also solved the equally important
maximin and minimax problems with pure strategies. Whenever we dealt with
average power constraints, we decomposed the problem into two levels
of power control, which we solved individually. In this second part of
the paper, we study the scenario with a parallel, slow fading channel,
which usually models multi-carrier transmissions, such as OFDM. Although
the framework is similar as the one in Part I \cite{myself3}, dealing
with the slow fading requires more intricate techniques. Unlike in
the fast fading scenario, where the frames supporting the transmission
of the codewords were equivalent and completely characterized by the
channel statistics, in our present scenario the frames are unique, and
characterized by a specific set of channel realizations. This leads to
more involved inter-frame power allocation strategies, and in some cases
even to the need for a third level of power control.  We also show that
for parallel slow fading channels, the CSI feedback helps in the battle
against jamming, as evidenced by the significant degradation to system
performance when CSI is not sent back.  We expect this degradation to
decrease as the number of parallel channels $M$ increases, until it
becomes marginal for $M\to \infty$ (which can be considered as the case
in  Part I).
\end{abstract}

\noindent {\bf \underline{Keywords:}}  Slow fading channels, outage probability, jamming, zero-sum game,
fixed rate, power control.

\section{Introduction}

The concept of jamming plays an extremely important role in ensuring the quality and security of
wireless communications, especially at this moment when wireless networks are quickly becoming ubiquitous.
Although the recent literature covers a wide variety of jamming problems \cite{uluk1, basar3,
basar1, medard, hughes, mallikscholtz, altman}, the investigation of optimal jamming and anti-jamming
strategies for the parallel slow-fading channel is missing.

The parallel slow-fading channel is a widely used model for OFDM transmission \cite{tsevisw}.
Since the usual definition of capacity does not provide a positive performance indicator for this model, a
more adequate performance measure is the probability of outage \cite{tsevisw}, defined as the probability
that the instantaneous mutual information characterizing the parallel channel, under a given channel realization,
is below a fixed transmission rate $R$. Under the optimal diversity-multiplexing tradeoff, the parallel slow-fading
channel with $M$ subchannels is known \cite{tsevisw} to yield an $M$-fold diversity gain over the scalar single
antenna channel. However the diversity-multiplexing tradeoff only gives an approximative analytical evaluation
of the probability of outage for a given rate $R$ and a signal-to-noise ratio (SNR), and this approximation is
usually accurate only in the high SNR region. Thus, for evaluating a system which functions at a moderate SNR,
the exact probability-of-outage vs. transmission-rate curve is often computed numerically. Moreover, the high
SNR assumption is clearly not adequate for studying a practical uncorrelated jamming situation, where the jammer's
power should be considered at least comparable to the legitimate transmitter's.

Therefore, we aim at deriving the exact probability of outage achievable in the presence of a jammer, over
our parallel slow fading channel, for a fixed transmission rate $R$. Our channel model is depicted in  Figure
\ref{fig_channel_model}. The span of a codeword is denoted by ``frame''. To model our parallel slow fading channel, each
frame is divided into $M$ ``blocks'' (corresponding to the $M$ subchannels), each of which consists of $N$ channel uses,
like in Figure~\ref{fig_frames}.

\begin{figure}[b]
\centering
\includegraphics[scale=0.5]{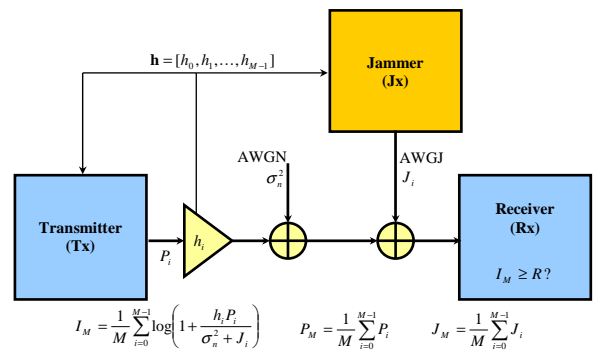}
\caption{Channel model}\label{fig_channel_model}
\end{figure}

The channel fading is slow, such that the corresponding channel coefficients remain constant over each block and vary
independently across different blocks. The channel coefficients are complex numbers, and their squared absolute values are denoted as
$h_m$. The vector $\mathbf{h}=[h_0, h_1, \ldots, h_{M-1}]$ of channel coefficients over a whole frame is assumed to be perfectly
known to the receiver, and can be made available by feedback (if the receiver wishes) to the transmitter (Tx), and jammer
(Jx) before the transmission begins.
It was shown in \cite{caire} that the feedback of channel state information (CSI) (i.e. the $M$ coefficients of a frame)
brings moderate benefits for the parallel slow-fading channel without jamming. Thus, by employing optimal power
control strategies, the transmitter can lower the probability of outage for fixed transmission rate and SNR.
In this paper, we study both the scenarios when the CSI is fed back by the legitimate receiver -- and hence
all $M$ channel coefficients characterizing a frame are available to both transmitter and jammer in a non-causal
fashion (it is only natural to assume that if the transmitter has full CSI, the jammer can get the same information
by eavesdropping) -- and the scenario when no feedback takes place and thus the CSI is only available to the receiver.

\begin{figure}
\centering
\includegraphics[scale=0.5]{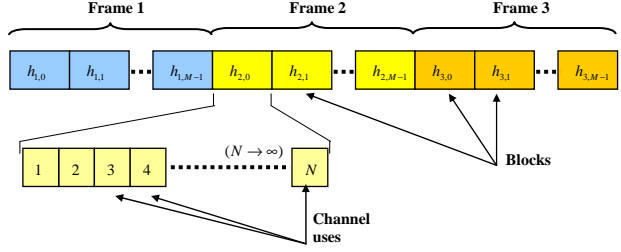}
\caption{Frames, blocks and channel uses}\label{fig_frames}
\end{figure}

In addition to fading, the transmission is affected by additive white complex Gaussian noise (AWGN), and by a jammer.
The jammer has no knowledge about the transmitter's output, or even the codebook that the transmitter is using,
and hence it deploys its most harmful strategy: it transmits white complex Gaussian noise \cite{diggavi} (AWGJ in
Figure \ref{fig_channel_model}).

The transmitter (Tx) uses a complex Gaussian codebook. Over a given frame, it allocates power $P_m$ to block $m$, $0\leq m\leq M-1$,
while the jammer (Jx) invests power $J_m$ in jamming the same block with noise.
As assumed in \cite{caire}, the number of channel uses per block is large $N\to \infty$ in order to average out the impact of
the Gaussian noise. Under these assumptions, the instantaneous mutual information characterizing a subchannel $m$ is given by
$I(h_m,P_m,J_m)=\log (1+\frac{h_m P_m}{\sigma_N^2+J_m})$, where $\sigma_N^2$ is the variance of the ambient AWGN.
The following denotations will be repeatedly used in the sequel:

\begin{itemize}
\item Power allocated by the transmitter over a
frame:\\$P_{M}=\frac{1}{M}\sum_{m=0}^{M-1}P_m$;
\item Power allocated by the jammer over  a
frame:\\$J_{M}=\frac{1}{M}\sum_{m=0}^{M-1}J_m$;
\item Instantaneous mutual information between the transmitter and the receiver over a
frame:\\$I_{M}=\frac{1}{M}\sum_{m=0}^{M-1}I(h_m,P_m,J_m)$.
\end{itemize}

Note that $P_M$ is a function of the channel realization $\mathbf{h}$, so we often write $P_M(\mathbf{h})$ when this relation
needs to be explicitly emphasized. $P_M(\mathbf{h})$  can also be interpreted as the function giving the power distribution across
different frames. We also use $P_M(h)$ and $J_M(h)$ to denote inter-frame power allocation for the case $M=1$, since in this case a frame
only contains one block. Like in \cite{myself3}, throughout this paper we shall also use the notation $c=\exp(MR)$ for simplicity.

As depicted  in Figure \ref{fig_channel_model}, our channel model is similar to that of \cite{uluk1}.
The difference, however, is that we investigate the jamming problem in slow-fading channels and  hence the
probability of outage, defined as the probability that the instantaneous mutual information
$I_{M}$ of the channel is lower than the fixed transmission rate $R$ \cite{caire} is considered as an
objective function $P_{out}=Pr(I_M<R)$ (while \cite{uluk1} assumes fast fading and uses the ergodic capacity as objective).
Our problem is still formulated as a two-player, zero-sum game. The transmitter wants to achieve reliable communication
and hence minimize the outage probability, while the jammer wants to induce outage and maximize the outage probability.
Strategies consist of varying transmission powers based on the CSI (i.e. the perfect knowledge of $\mathbf{h}$) if available,
or solely on the channel's statistics if CSI is not available.
The properties of our different objective function make our new jamming and anti-jamming problem much more challenging to solve.

It is easy to find similarities to the fixed rate system with fast fading which was studied in the first part
of this paper \cite{myself3}. In fact, the fast fading scenario of \cite{myself3} can be obtained as a particular case
of the current setup, by allowing a large number of blocks per frame $M\to \infty$ (corresponding to an infinite
number of subchannels). In doing so, the different frames are no longer characterized by their respective channel
realizations, but instead they become long enough to display the statistical properties of the channel coefficient
and thus become equivalent. This is why our present parallel slow fading scenario is more involved than the fast fading
model of Part I of this paper \cite{myself3}, especially when it comes to resolving the optimal power allocation between different frames.
Sometimes this additional complexity leads to an additional level of power control, as we shall see in Section \ref{section3}.

Our contributions are summarized below:

\begin{itemize}

\item We first investigate the case where the receiver feeds back the channel state information (CSI)
which becomes available to both transmitter and jammer. For the short-term power constraints case we show the existence
of and find a Nash equilibrium of pure strategies. Note that for a two-person, zero-sum game,
all Nash equilibria have the same value \cite{meyerson}. Since an equilibrium of pure strategies
is also an equilibrium of mixed strategies, our Nash equilibrium of pure strategies provides the
complete solution of the game.

\item For the case with long-term power constraints we find the maximin and minimax solutions of pure strategies,
and show they do not coincide (hence the non-existence of a Nash equilibrium of pure strategies).
Traditional methods of optimization, such as the KKT conditions, cannot be
applied to solve for these solutions completely. Therefore we provide a new, more intuitive approach based on
the special duality property discussed in Appendix II-D of the first part of this paper \cite{myself3}. 
As argued in \cite{myself3}, Nash equilibria of mixed strategies may not always be the best solutions to jamming
problems. A smart jammer could eavesdrop the channel and detect both the legitimate transmitter's presence and
its power level. Therefore, we believe that the maximin and minimax problem formulations with pure strategies are
of great importance in understanding and resolving the practical jamming situations
(in the worst case, they provide upper and lower bounds on the system's performance).

\item The optimal pure strategies of allocating power between frames, for the maximin and minimax
formulations, are found as the solutions of two simple numerical algorithms. These algorithms
function according to two different techniques which we explain in the sequel and we dub as
``the vase water filling problems''.

\item Mixed strategies are discussed next. We show that for completely characterizing this scenario
we need three different levels of power control. We then particularize and obtain numerical results
for the special simple case with only one block per frame ($M=1$).

\item Finally, we compare our results to the case when the channel state information is only available to the
receiver. We derive a Nash equilibrium for $M=1$, and show that unlike in the fast fading scenario (where CSI
feedback brings negligible improvements), under our current parallel slow fading channel model, perfect knowledge
about the CSI at all parties can substantially improve performance.
\end{itemize}

The paper is organized as follows. Section \ref{section1}
deals with the short term power constrained problem when full CSI is available to all parties.
Section \ref{section2} studies the scenario with long term power constraints and pure strategies
under the same assumption of available CSI. Mixed strategies are discussed in Section \ref{section3}.
For comparison purposes, Section \ref{section4} presents results for the case with no CSI feedback.
Finally, conclusions are drawn in Section \ref{section5}.


\section{CSI Available to All Parties. Jamming Game with Short-Term Power Constraints}\label{section1}

The game with short-term power constraints is the less complex of the two games we discuss in the sequel.
In this game, the transmitter's goal is to:
\begin{gather}\label{game11}
\left\{ \begin{array}{cc} \textrm{Minimize} & \mbox{Pr}(I_M(\mathbf{h},P(h),J(h))<R)\\
\textrm{Subject to} & P_M(\mathbf{h}) \leq \mathcal{P},\textrm{with prob. 1}
\end{array} \right.
\end{gather}
while the jammer's goal is to:
\begin{gather}\label{game12}
\left\{ \begin{array}{cc} \textrm{Maximize} & \mbox{Pr}(I_M(\mathbf{h},P(h),J(h))<R)\\
\textrm{Subject to} & J_M(\mathbf{h})\leq \mathcal{J},\textrm{with prob. 1}. 
\end{array} \right.
\end{gather}

We shall prove that this game is closely related to a different two player, zero-sum game,
which has the mutual information between Tx and Rx as a cost/reward function:

\begin{gather} \label{game21}
\textrm{Tx}\left\{ \begin{array}{cc} \textrm{Maximize} & I_M(\mathbf{h},P(h),J(h))\\
\textrm{Subject to} & P_M(\mathbf{h})\leq \mathcal{P},
\end{array} \right.
\end{gather}

\begin{gather}\label{game22}
\textrm{Jx}\left\{ \begin{array}{cc} \textrm{Minimize} & I_M(\mathbf{h},P(h),J(h))\\
\textrm{Subject to} & J_M(\mathbf{h}) \leq \mathcal{J}. 
\end{array} \right.
\end{gather}

This latter game is characterized by the following proposition:
\vspace*{4pt}
\begin{prop}\label{prop_short_term}
The game of (\ref{game21}) and (\ref{game22}) has a Nash equilibrium point given by the following strategies:

\begin{gather}\label{sol11}
P^*(h_m)=\left\{ \begin{array}{ccc} (\frac{1}{\eta}-\frac{\sigma_N^2}{h_m})^+ & \textrm{if} & h_m<\frac{\sigma_N^2 \eta}{1-\sigma_N^2 \nu}\\
\frac{h_m}{\eta(h_m+\frac{\eta}{\nu})} & \textrm{if} & h_m\geq\frac{\sigma_N^2 \eta}{1-\sigma_N^2 \nu}
\end{array} \right.
\end{gather}

\begin{gather}\label{sol12}
J^*(h_m)=\left\{ \begin{array}{ccc} 0 & \textrm{if} & h_m<\frac{\sigma_N^2 \eta}{1-\sigma_N^2 \nu}\\
\frac{h_m}{\nu(h_m+\frac{\eta}{\nu})}-\sigma_N^2 & \textrm{if} & h_m\geq\frac{\sigma_N^2 \eta}{1-\sigma_N^2 \nu}
\end{array} \right.
\end{gather}
 where $\eta$ and $\nu$ are constants that can be determined from the power constraints.
\end{prop}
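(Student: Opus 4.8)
The plan is to recognize that, for a two-player zero-sum game, a Nash equilibrium in pure strategies is exactly a saddle point of the payoff $I_M$: the pair $(P^*,J^*)$ is an equilibrium if and only if $P^*$ maximizes $I_M(\mathbf{h},\cdot,J^*)$ over the transmitter's feasible set and $J^*$ minimizes $I_M(\mathbf{h},P^*,\cdot)$ over the jammer's feasible set. Since the short-term constraints decouple across frames (the constraint $P_M(\mathbf{h})\le\mathcal{P}$ must hold for every realization $\mathbf{h}$), it suffices to fix one frame with gains $h_0,\dots,h_{M-1}$ and verify the two one-sided problems there. I would first record that $I_M$ is concave in $\mathbf{P}$ (a sum of logs of affine functions) and convex in $\mathbf{J}$ (each term $\log(1+h_mP_m/(\sigma_N^2+J_m))$ has positive second derivative in $J_m$), and that both feasible sets are convex and compact. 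Consequently the KKT conditions for each one-sided problem are both necessary and sufficient, so checking them certifies global optimality of each best response.

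For the transmitter's best response, with $J^*$ held fixed the problem of maximizing $\frac{1}{M}\sum_m\log(1+h_mP_m/(\sigma_N^2+J_m^*))$ subject to $\frac{1}{M}\sum_m P_m\le\mathcal{P}$ is a classical water-filling against the effective noise $\sigma_N^2+J_m^*$, with solution $P_m=(1/\lambda-(\sigma_N^2+J_m^*)/h_m)^+$. I would substitute the claimed $J^*$ and show that the single water level $\lambda=\eta$ reproduces (\ref{sol11}) on both branches: below threshold $J_m^*=0$ gives directly $(1/\eta-\sigma_N^2/h_m)^+$, while above threshold the algebra $1/\eta-(\sigma_N^2+J_m^*)/h_m$ collapses exactly to $h_m/(\eta(h_m+\eta/\nu))$.

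For the jammer's best response I would set up the Lagrangian of minimizing $\frac{1}{M}\sum_m\log(1+h_mP_m^*/(\sigma_N^2+J_m))$ under $\frac{1}{M}\sum_m J_m\le\mathcal{J}$, and compute the marginal harm $-\partial_{J_m}I_M=h_mP_m^*/[(\sigma_N^2+J_m)(\sigma_N^2+J_m+h_mP_m^*)]$. The key computation is to show that, on the active set ($h_m\ge$ threshold), plugging in $P_m^*$ and $J_m^*$ yields the convenient identities $\sigma_N^2+J_m^*=h_m/(\nu h_m+\eta)$ and $\sigma_N^2+J_m^*+h_mP_m^*=h_m/\eta$, which make the marginal harm equal to the constant $\nu$ for every active channel; this is precisely the KKT stationarity condition with multiplier $\nu$. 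On the inactive set ($J_m^*=0$) the same expression evaluated at $J_m=0$ reduces to $1/\sigma_N^2-\eta/h_m$, and the requirement that this not exceed $\nu$ is equivalent to $h_m\le\sigma_N^2\eta/(1-\sigma_N^2\nu)$ --- exactly the threshold separating the two regimes in (\ref{sol11})--(\ref{sol12}). This matching of the complementary-slackness boundary with the stated threshold is the heart of the argument and the step I expect to be most delicate, since it is what ties the two players' solutions together through the common constants $\eta,\nu$.

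Finally, I would settle the existence of these constants. Since $I_M$ is strictly increasing in each $P_m$ and strictly decreasing in each $J_m$, both power constraints are tight at the optimum, so $\eta$ and $\nu$ are pinned down by $\frac{1}{M}\sum_m P^*(h_m)=\mathcal{P}$ and $\frac{1}{M}\sum_m J^*(h_m)=\mathcal{J}$; a monotonicity and continuity argument (as $\eta$ grows the total transmit power decreases continuously over its whole range, and similarly for the jammer in $\nu$) guarantees that a feasible pair $(\eta,\nu)$ exists. Having verified that $P^*$ is a global best response to $J^*$ and vice versa, the pair $(P^*,J^*)$ is a saddle point of $I_M$, which is the claimed Nash equilibrium.
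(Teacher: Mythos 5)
Your proposal is correct and follows essentially the same route as the paper: both rest on the concavity of $I_M$ in $\mathbf{P}$ and convexity in $\mathbf{J}$ to make the KKT conditions necessary and sufficient, and both reduce to the same stationarity equations (water-filling for the transmitter, constant marginal harm $\nu$ for the jammer) with the same threshold $h_m = \sigma_N^2\eta/(1-\sigma_N^2\nu)$ separating the active and inactive jamming regimes. The only difference is presentational --- you verify the stated pair satisfies the KKT system, while the paper (following its cited reference) solves the KKT system case by case to derive it --- and your explicit check of the identities $\sigma_N^2+J_m^*=h_m/(\nu h_m+\eta)$ and $\sigma_N^2+J_m^*+h_mP_m^*=h_m/\eta$ is a valid, slightly more detailed account of the same computation.
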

\vspace*{4pt}

\begin{proof}
The proof is a straightforward adaptation of Section IV.B in \cite{uluk1},
and is outlined in Appendix \ref{app1}.
\end{proof}
\vspace*{4pt}

The connection between the two games above is made clear in the following theorem, the proof of which
follows in the footsteps of \cite{caire} and is given in Appendix \ref{app1}.
\vspace*{4pt}
\begin{thm}\label{thm_short_term}
Let $P^*(h)$ and $J^*(h)$ denote the Nash equilibrium solutions of the game
described by (\ref{game21}) and (\ref{game22}).
Then the original game of (\ref{game11}), (\ref{game12}) has a Nash equilibrium point, which is given by
the following pair of strategies:

\begin{gather}
\widehat{P}(h_m)=\left\{ \begin{array}{ccc} P^*(h_m) & \textrm{if} & \mathbf{h}\in \mathcal{U}(R,\mathcal{P}, \mathcal{J})\\
P_a(h_m) & \textrm{if} &\mathbf{h}\notin \mathcal{U}(R,\mathcal{P}, \mathcal{J})
\end{array} \right.
\end{gather}

\begin{gather}
\widehat{J}(h_m)=\left\{ \begin{array}{ccc} J_a(h_m) & \textrm{if} &\mathbf{h}\in \mathcal{U}(R,\mathcal{P}, \mathcal{J})\\
J^*(h_m) & \textrm{if} &\mathbf{h}\notin \mathcal{U}(R,\mathcal{P}, \mathcal{J})
\end{array} \right.
\end{gather}

where $\mathcal{U}(R,\mathcal{P}, \mathcal{J})=\{ \mathbf{h}\in \mathbb{R}_{+}^M : I_M(\mathbf{h},P^*(h),J^*(h))\geq R \}$,
and where $P_a(h)$ and $J_a(h)$ are some arbitrary power allocations satisfying the power constraints respectively.
\end{thm}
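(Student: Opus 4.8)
The plan is to exploit the saddle-point structure of the auxiliary mutual-information game in Proposition~\ref{prop_short_term} and transfer it, realization by realization, to the outage game of \eqref{game11}--\eqref{game12}. The first thing I would record is that, because the short-term constraints $P_M(\mathbf{h})\le\mathcal{P}$ and $J_M(\mathbf{h})\le\mathcal{J}$ hold with probability one and $I_M(\mathbf{h},\cdot,\cdot)$ for a fixed frame depends only on the per-block powers of that same frame, the mutual-information game decouples across frames. Hence the equilibrium $(P^*,J^*)$ of \eqref{game21}--\eqref{game22} is in fact a \emph{pointwise} saddle point: for every channel realization $\mathbf{h}$ and every pair of feasible allocations $P,J$,
\[
I_M(\mathbf{h},P,J^*)\;\le\;I_M(\mathbf{h},P^*,J^*)\;\le\;I_M(\mathbf{h},P^*,J).
\]
Writing $V(\mathbf{h})=I_M(\mathbf{h},P^*,J^*)$, the set $\mathcal{U}(R,\mathcal{P},\mathcal{J})$ is precisely $\{\mathbf{h}:V(\mathbf{h})\ge R\}$.

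The second step is to read off the outage pattern produced by $(\widehat{P},\widehat{J})$. On $\mathcal{U}$ the transmitter plays $P^*$, so the right inequality above gives $I_M(\mathbf{h},P^*,\widehat{J})\ge V(\mathbf{h})\ge R$: no outage occurs no matter what the jammer does. On the complement $\mathcal{U}^c$ the jammer plays $J^*$, so the left inequality gives $I_M(\mathbf{h},\widehat{P},J^*)\le V(\mathbf{h})<R$: outage is unavoidable no matter what the transmitter does. Thus under the proposed strategies the outage event coincides exactly with $\mathcal{U}^c$, and the game value equals $\Pr(\mathbf{h}\in\mathcal{U}^c)=\Pr\!\big(I_M(\mathbf{h},P^*,J^*)<R\big)$; crucially, the arbitrary fillers $P_a,J_a$ never affect this outcome.

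Third, I would verify the two Nash conditions, and the pointwise inequalities make this immediate. Fix the jammer at $\widehat{J}$ and let the transmitter deviate to any feasible $P$: on $\mathcal{U}^c$ the jammer still uses $J^*$, so $I_M(\mathbf{h},P,J^*)\le V(\mathbf{h})<R$ and outage persists, while on $\mathcal{U}$ the outage indicator is already zero; hence no deviation lowers $\Pr(I_M<R)$. Symmetrically, fix the transmitter at $\widehat{P}$ and let the jammer deviate to any feasible $J$: on $\mathcal{U}$ the transmitter still uses $P^*$, so $I_M(\mathbf{h},P^*,J)\ge V(\mathbf{h})\ge R$ and no outage can be forced, while on $\mathcal{U}^c$ the indicator is already one; hence no deviation raises the outage probability. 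Because the improvement is blocked for \emph{every} $\mathbf{h}$ separately, it is blocked in probability, and $(\widehat{P},\widehat{J})$ is a Nash equilibrium.

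The part that deserves the most care is the decoupling claim of the first step: it is what upgrades the equilibrium of Proposition~\ref{prop_short_term} from a statement about the averaged quantities to a genuinely pointwise saddle point, and it relies on the short-term (with-probability-one) nature of the constraints. Under long-term average constraints this step would fail, since the players could reallocate power across frames, which is exactly why the analysis of Section~\ref{section2} is substantially harder. I would therefore state the decoupling explicitly and confirm that the fillers $P_a,J_a$ need only be feasible, never optimal, before invoking the saddle inequalities.
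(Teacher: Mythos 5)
Your proof is correct and follows essentially the same route as the paper's (which adapts Appendix B of the Caire--Taricco--Biglieri argument): both rest on the pointwise saddle-point property of $(P^*,J^*)$ for each realization $\mathbf{h}$, deduce that the outage event under $(\widehat{P},\widehat{J})$ is exactly $\mathcal{U}^c$ independently of the fillers $P_a,J_a$, and then block deviations realization by realization. The only cosmetic difference is that the paper phrases the pointwise inequalities via indicator functions $\chi^*(\mathbf{h})$ and takes expectations, whereas you argue directly on the sets $\mathcal{U}$ and $\mathcal{U}^c$.
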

\vspace*{4pt}


\section{CSI Available to All Parties. Jamming Game with Long-Term Power Constraints: Pure Strategies}\label{section2}

The long-term power constrained jamming game can be formulated  as:
\begin{gather} \label{game31}
\textrm{Tx}\left\{ \begin{array}{cc} \textrm{Minimize} &
\mbox{Pr}(I_M(\mathbf{h},\{P_m\},\{J_m\})<R)\\
\textrm{Subject to} & E[P_M(\mathbf{h})] \leq \mathcal{P}
\end{array} \right.
\end{gather}

\begin{gather} \label{game32}
\textrm{Jx}\left\{ \begin{array}{cc} \textrm{Maximize} &
\mbox{Pr}(I_M(\mathbf{h},\{P_m\},\{J_m\})<R)\\
\textrm{Subject to} & E[J_M(\mathbf{h})]\leq \mathcal{J}
\end{array} \right.
\end{gather}
where the expectation is taken with respect to the vector of channel coefficients
$\mathbf{h}=(h_0, h_1, \ldots, h_{M-1})\in \mathbb{R}_+^M$, and
$\mathcal{P}$ and $\mathcal{J}$ are the upper-bounds on average transmission
power of the source and jammer, respectively.

Contrary to the previous short-term power constraints scenario, if long-term
power constraints are used it is possible to have $P_M(\mathbf{h})> \mathcal{P}$
for a particular channel realization $\mathbf{h}$, as long as the average of $P_M(\mathbf{h})$
over all possible channel realizations is less than $\mathcal{P}$.

Let $\mathfrak{m}$ denote the probability measure introduced by the probability density
function (p.d.f.) of $\mathbf{h}$, i.e., for a set
$\mathscr{A}\subseteq \mathbb{R}_+^M$, we have
$\mathfrak{m}(\mathscr{A})=\int_{\mathscr{A}} f(\mathbf{h})d\mathbf{h}$.
Integrating with respect to this measure is equivalent to
computing an average with respect to the p.d.f. given by $f(\mathbf{h})$, i.e.,
$d\mathfrak{m}(\mathbf{h})=f(\mathbf{h})d\mathbf{h}$.

Both transmitter and jammer have to plan in terms of power allocation, 
considering both the instantaneous realization and the probability distribution of the channel coefficient vector,
as well as their opponent's strategy.

If the number of blocks $M$ in each frame is larger than $1$, the
game between transmitter and jammer has two levels. The first
(coarser) level is about power allocation between frames, and has
the probability of outage as a cost/reward function. This is the
only level that shows up in the case of $M=1$. The second (finer)
level is that of power allocation between the blocks within a
frame.

An important comment similar to that in \cite{myself3} needs to be made.
We should point out that decomposing the problem into several (two or three) levels of power control,
each of which is solved separately,  does not restrict the generality of our solution.
In proving our main results we take a contradictory approach. That is, instead of
directly deriving each optimal strategy, we assume an optimal solution has already been reached and
show it has to satisfy a set of properties. We do this by first assuming that the properties are not satisfied,
and then showing that under this assumption at least one of the players can improve its strategy (and hence
the original solution cannot be optimal). The properties are selected such that they are not only
necessary, but also sufficient for the completely characterizing the optimal solution (i.e. there exists a unique
pair of strategies that satisfy these properties).

\subsection{Power Allocation between the blocks in a Frame}\label{ss1}

In this subsection we only deal with the second (intra-frame) level of power allocation
for the maximin and minimax problems.
The first (inter-frame) level will be investigated in detail in the following two subsections.

The probability of outage is determined by the $\mathfrak{m}$-measure of the set over which the transmitter is
not present or the jammer is successful in inducing outage.
This set is established in the first level of power control.
Note that the first level power allocation strategies cannot be derived
before the second level strategies are available.

In the maximin case (when the jammer plays first), assume that the
jammer has already allocated some power $J_M$ to a given frame.
Naturally, the transmitter knows $J_M$ (the maximin problem assumes
that the transmitter is fully aware of the jammer's strategy).
Depending on the channel realization, the value of $J_M$, and its
own power constraints, the transmitter decides whether it wants to
achieve reliable communication over that frame. If it decides to
transmit, it needs to spend as little power as possible (the
transmitter will be able to use the saved power for achieving
reliable communication over another set of positive $\mathfrak{m}$-measure, and
thus to decrease the probability of outage). Therefore, the
transmitter's objective is to minimize the power $P_M$ spent for
achieving reliable communication. The transmitter will adopt this
strategy whether the jammer is present over the frame, or not. The
jammer's objective is then to allocate $J_M$ between the blocks
such that the required $P_M$ is maximized.

In the minimax scenario (when transmitter plays first)
the jammer's objective is to minimize the power $J_M$ used for jamming the transmission over a given
frame. The jammer will only transmit if the transmitter is present with some $P_M$.
The transmitter's objective is to distribute $P_M$ between blocks
such that the power required for jamming is maximized.

The two problems can be formulated as:

\vspace*{4pt}
{\bf \emph{Problem 1}} (for the maximin solution - jammer plays first)
\begin{gather}
\max_{\{J_m\geq0\}} \Big[ \min_{\{P_m\geq0\}} P_M=\frac{1}{M}\sum_{m=0}^{M-1} P_m, \nonumber\\
~\textrm{s.t.}~  I_M(\{P_m\},\{J_m\})\geq R \Big]
\textrm{s.t.} \frac{1}{M}\sum_{m=0}^{M-1}J_m\leq J_M  ;
\label{probl1}
\end{gather}
\vspace*{4pt}

{\bf \emph{Problem 2}} (for the minimax solution - transmitter plays first)
\begin{gather}
\max_{\{P_m\geq0\}} \Big[ \min_{\{J_m\geq0\}} J_M=\frac{1}{M}\sum_{m=0}^{M-1} J_m, \nonumber\\
~\textrm{s.t.}~  I_M(\{P_m\},\{J_m\})\leq R \Big]
\textrm{s.t.} \frac{1}{M}\sum_{m=0}^{M-1}P_m\leq P_M.
\label{probl2}
\end{gather}
\vspace*{4pt}

These problems can be solved by methods very similar to those presented in the first part of this paper \cite{myself3}. For the
brevity of this presentation, we shall only point out the main results, and defer all proofs to the
Appendix \ref{app2}. The following propositions fully characterize the solutions.

\vspace*{4pt}
\begin{prop}\label{circ_pr_prop1}
The optimal solution of either of the two problems above satisfies both constraints with equality.
\end{prop}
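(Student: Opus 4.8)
The plan is to establish the two equalities — tightness of the mutual-information constraint and tightness of the power-budget constraint — separately, using the contradictory approach announced earlier in this section: assuming a constraint is slack at an optimum, I would exhibit a feasible perturbation that strictly improves the objective of the player who controls that constraint. Since Problems 1 and 2 are formally symmetric (the roles of $\{P_m\}$ and $\{J_m\}$ and the direction of the mutual-information inequality are interchanged), I would carry out the argument for Problem 1 and transcribe it to Problem 2.

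Consider the inner mutual-information constraint of Problem 1, $I_M(\{P_m\},\{J_m\}) \geq R$. For any fixed jammer allocation $\{J_m\}$, the transmitter can meet this constraint with finite power (each $\log(1 + h_m P_m/(\sigma_N^2 + J_m))$ grows without bound in $P_m$), so an optimal response exists. The map $P_m \mapsto I_M$ is continuous and strictly increasing on every block with $h_m > 0$; hence if the inner optimum had $I_M > R$, some active block with $P_m > 0$ could have its power shaved slightly while preserving $I_M \geq R$, strictly lowering $P_M$ and contradicting optimality. Thus $I_M = R$. For Problem 2 the same reasoning applies to the jammer, except for one asymmetry: reducing jamming to restore $I_M = R$ is only possible when the jammer is active. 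I would dispose of the degenerate case (where the transmitter's budget is too small to force any jamming, so $J_M = 0$ and $I_M \leq R$ trivially) by noting that it is resolved at the outer level — the transmitter would then prefer to raise its power until jamming becomes necessary.

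For the outer budget constraint $\frac{1}{M}\sum_m J_m \leq J_M$ of Problem 1, let $P_M^*(\{J_m\})$ denote the inner optimal value. Because each summand of $I_M$ decreases in $J_m$, raising any jammer power shrinks the transmitter's feasible set, so $P_M^*$ is non-decreasing in every $J_m$. If the jammer's budget were slack, the leftover power could be loaded onto a block that is active at the inner optimum ($P_m^* > 0$, which exists since $I_M = R > 0$). Using the water-filling form of the inner solution, $P_m^* = (\mu - (\sigma_N^2 + J_m)/h_m)^+$ with the common level $\mu$ fixed by $I_M = R$, one checks that on an active block the derivative of the required transmitter power with respect to $J_m$ is proportional to $\mu/(\sigma_N^2+J_m) - 1/h_m$, which is strictly positive precisely because $P_m^* > 0$. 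Hence $P_M^*$ increases strictly, contradicting the outer optimality, and the jammer must exhaust its budget. The minimax budget constraint of Problem 2 follows by the symmetric computation.

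The main obstacle is this last strictness claim: bare monotonicity yields only a weak increase of the inner value, and on an \emph{inactive} block ($P_m^* = 0$) extra jamming does nothing, so the perturbation must be aimed at an active block and the strict sign of $\mu/(\sigma_N^2+J_m) - 1/h_m$ verified through the way the water level $\mu$ readjusts to keep $I_M = R$. A secondary difficulty is the degenerate low-budget regime for Problem 2, whose resolution couples the intra-frame problem back to the inter-frame level; I would handle it with a short separate remark rather than inside the main perturbation argument. The full computations are deferred to Appendix \ref{app2}.
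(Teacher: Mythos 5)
Your proposal is correct and shares the paper's overall two-step contradictory structure; the first half (tightness of $I_M=R$ by shaving power off an active block, with your correct caveat that the block must be active) is essentially the paper's argument verbatim. The second half is where you genuinely diverge. The paper avoids any computation with the water-filling form: it supposes the jamming budget is slack, loads the leftover power onto one block, assumes for contradiction that the new minimum required transmitter power does not exceed the old value $P_{M,1}$, and then evaluates that new transmitter allocation against the \emph{original} (smaller) jamming; monotonicity of $I_M$ in that block's jamming power then yields either a strictly cheaper feasible transmitter response (contradicting inner optimality) or an alternative optimal response with $I_M>R$ (contradicting the first half). You instead differentiate the inner optimal value directly through the water-filling solution, obtaining $\partial P_M/\partial J_n \propto \mu/(\sigma_N^2+J_n)-1/h_n$, strictly positive exactly on active blocks, which exist because $I_M=R>0$. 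Your route buys an explicit, quantitative statement of the strict increase (essentially a local preview of what the paper only establishes later, in the proof of Proposition \ref{propconcave}), at the cost of invoking the closed-form inner solution and of needing the active set to be stable under the perturbation — a point you correctly identify as the main obstacle and which one-sided derivatives resolve. The paper's route is more elementary and self-contained within the proposition. Finally, your flagging of the degenerate regime of \emph{Problem 2} (transmitter budget too small to force any jamming, where $J_M=0$ and $I_M\le R$ is genuinely slack) is a legitimate caveat that the paper's ``the same arguments work for \emph{Problem 2}'' silently passes over; your deferral of it to the outer level is an acceptable, if informal, resolution.
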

\vspace*{4pt}

\vspace*{4pt}
\begin{prop}\label{circ_pr_thm}
(I) Take the game given by (\ref{game21}) and (\ref{game22}) and set the constraints to
$P_M(\mathbf{h})\leq P_{M,1}$ and $J_M(\mathbf{h})\leq J_{M,1}$. Denote the resulting value of the objective by
$I_M(\mathbf{h},P(h),J(h))=R_1$. Then solving \emph{Problem 1} above with the constraints
$\frac{1}{M}\sum_{m=0}^{M-1}J_m\leq J_{M,1}$ and $I_M(\{P_m\},\{J_m\})\geq R_1$ yields the objective
$P_M=P_{M,1}$.
Moreover, any pair of power allocations across blocks that makes an optimal solution
of the game in (\ref{game21}) and (\ref{game22}) is also an optimal solution of \emph{Problem 1}, and conversely.

(II)Take the game given by (\ref{game21}) and (\ref{game22}) and set the constraints to
$P_M(\mathbf{h})\leq P_{M,1}$ and $J_M(\mathbf{h})\leq J_{M,1}$. Denote the resulting value of the objective by
$I_M(\mathbf{h},P(h),J(h))=R_1$. Then solving \emph{Problem 2} above with the constraints
$\frac{1}{M}\sum_{m=0}^{M-1}P_m\leq P_{M,1}$ and $I_M(\{P_m\},\{J_m\})\leq R_1$ yields the objective
$J_M=J_{M,1}$.
Moreover, any pair of power allocations across blocks that makes an optimal solution
of the game in (\ref{game21}) and (\ref{game22}) is also an optimal solution of \emph{Problem 2}, and conversely.

(III) If $J_{M,1}$ is the value used for the second constraint in
\emph{Problem 1} above, and $P_{M,1}$ is the resulting value of the cost/reward function,
then solving \emph{Problem 2} with $P_M=P_{M,1}$ yields the cost/reward function
$J_M=J_{M,1}$. Moreover, any pair of power allocations across blocks that makes an optimal solution
of \emph{Problem 1}, should also make an optimal solution of \emph{Problem 2}, and conversely.
\end{prop}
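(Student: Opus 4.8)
The plan is to derive all three parts from two ingredients: the saddle-point inequalities that define the Nash equilibrium of the mutual-information game (\ref{game21})--(\ref{game22}), and an elementary monotone duality between constrained power minimization and constrained rate maximization for a single player. Let $(\{P_m^*\},\{J_m^*\})$ be the equilibrium of that game under budgets $P_{M,1},J_{M,1}$, achieving value $I_M=R_1$. Because $I_M$ is strictly increasing in each $P_m$ and strictly decreasing in each $J_m$, both budget constraints are active at equilibrium, so $\frac1M\sum P_m^*=P_{M,1}$ and $\frac1M\sum J_m^*=J_{M,1}$. The first thing I would isolate is a \emph{duality lemma}: for a fixed opponent allocation, minimizing $P_M$ subject to $I_M\geq R_1$ and maximizing $I_M$ subject to $P_M\leq P_{M,1}$ share the same optimizer whenever $(P_{M,1},R_1)$ is a consistent pair, and symmetrically with the transmitter and jammer (and the inequalities) interchanged. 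I would prove this by a short exchange argument: if a feasible allocation reached rate $R_1$ with strictly less total power than $P_{M,1}$, one could add an arbitrarily small amount of power to exceed $R_1$ while staying within the budget $P_{M,1}$, contradicting optimality of the rate-maximizing solution; the converse inequality is immediate since that solution already attains $R_1$ at power $P_{M,1}$. Strict monotonicity of $I_M$ then yields the biconditional form I will need: against a fixed $\{J_m\}$, the transmitter must spend \emph{more} than $P_{M,1}$ to reach $R_1$ if and only if its best rate under budget $P_{M,1}$ is strictly below $R_1$.

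With these tools Part (I) follows in two moves. The equilibrium pair is feasible for \emph{Problem 1} (it satisfies $I_M=R_1\geq R_1$ and $\frac1M\sum J_m^*=J_{M,1}$), and by the duality lemma applied against the fixed jammer $\{J_m^*\}$ the transmitter's minimal power to attain $R_1$ is exactly $P_{M,1}$, attained at $\{P_m^*\}$; this settles the inner minimization. It remains to show that no jammer allocation with budget $J_{M,1}$ forces more transmitter power. I would argue by contradiction: if some feasible $\{J_m\}$ forced $P_M>P_{M,1}$, the biconditional form of the lemma gives that the transmitter's best rate against $\{J_m\}$ under budget $P_{M,1}$ is strictly below $R_1$, so in particular $I_M(\{P_m^*\},\{J_m\})<R_1$. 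But the equilibrium inequality for the jammer reads $I_M(\{P_m^*\},\{J_m\})\geq I_M(\{P_m^*\},\{J_m^*\})=R_1$ for every feasible deviation, a contradiction. Hence $\{J_m^*\}$ is optimal for \emph{Problem 1} with value $P_{M,1}$, and since strict concavity/convexity makes the inner and outer optimizers unique, the optimal block allocations of \emph{Problem 1} coincide with the equilibrium allocations, and conversely.

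Part (II) is the mirror image: by the dual statement of the lemma, $\{J_m^*\}$ is the minimal-power response that pushes $I_M$ down to $R_1$ against $\{P_m^*\}$, with cost $J_{M,1}$, and if some transmitter allocation with budget $P_{M,1}$ forced the jammer to spend more than $J_{M,1}$, the transmitter's guaranteed rate against the best jammer would exceed $R_1$, contradicting the equilibrium inequality $I_M(\{P_m\},\{J_m^*\})\leq R_1$. For Part (III) I would chain the first two parts through an auxiliary game. Starting from \emph{Problem 1} solved with jammer budget $J_{M,1}$ at the system rate $R$, let $P_{M,1}$ be its optimal cost; instantiate the game (\ref{game21})--(\ref{game22}) with budgets $(P_{M,1},J_{M,1})$ and call its value $R^*$. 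Part (I) then shows that \emph{Problem 1} with budget $J_{M,1}$ at rate $R^*$ also has cost $P_{M,1}$, with solution equal to the equilibrium. Since the optimal cost of \emph{Problem 1} is strictly increasing in the target rate at fixed jammer budget (a higher rate strictly raises the transmitter's required power, and the outer maximization preserves this monotonicity), the two targets must coincide, $R^*=R$, and by uniqueness the \emph{Problem 1} and equilibrium allocations are identical. Applying Part (II) to the same auxiliary game, now with value $R$, yields that \emph{Problem 2} with budget $P_{M,1}$ at rate $R$ has cost $J_{M,1}$ and the same block allocation, which is exactly the claim.

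I expect the main obstacle to be the duality lemma itself, and in particular pinning down the strict-versus-nonstrict monotonicity that powers the biconditional equivalences and the contradiction steps: one must handle channel realizations in which the optimal waterfilling assigns zero power to some blocks, where $I_M$ fails to be strictly monotone in the corresponding variables, and one must verify that the constraints are genuinely active, which is precisely where Proposition \ref{circ_pr_prop1} enters. The strict monotonicity of the \emph{Problem 1} cost in the rate target, needed to match $R^*=R$ in Part (III), will likewise require a careful argument rather than an appeal to intuition. The remaining bookkeeping---feasibility and uniqueness of optimizers from strict concavity/convexity---is routine once these monotonicity facts are in place.
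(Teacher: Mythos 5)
Your argument takes a genuinely different route from the paper's. The paper proves this proposition by direct appeal to the abstract three-way duality theorem from Appendix II-D of Part I (restated in Appendix \ref{app2}), which relates the two maximin problems and the minimax problem for arbitrary monotone $f$, $g$, $h$; you instead work concretely with the saddle-point inequalities of the game (\ref{game21})--(\ref{game22}) (whose existence is guaranteed by Proposition \ref{prop_short_term}) together with an elementary scaling/exchange lemma. For the value statements in (I)--(III) your chain of implications is sound: the biconditional ``required power exceeds $P_{M,1}$ iff the best achievable rate under budget $P_{M,1}$ falls below $R_1$'' is exactly what converts a hypothetical jammer deviation that forces more transmitter power into a violation of the jammer-side saddle inequality $I_M(P^*,J)\geq R_1$, and the strict monotonicity of the Problem 1 value in the rate target correctly pins down $R^*=R$ in Part (III). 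What the paper's route buys is reusability (the same abstract theorem is invoked in both parts of the two-part paper); what yours buys is a self-contained and more transparent mechanism.

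The one place your proposal falls short of the full statement is the ``Moreover \dots and conversely'' clauses. Your exchange arguments show that the game's equilibrium is an optimal solution of Problem 1 (resp.\ Problem 2); they do not show that an arbitrary optimizer $(\hat P,\hat J)$ of Problem 1 is a saddle point of the game. The transmitter-side inequality $I_M(P,\hat J)\leq R_1$ for all feasible $P$ does follow from your scaling lemma, but the jammer-side inequality $I_M(\hat P,J)\geq R_1$ for all feasible $J$ does not: optimality of $\hat J$ in Problem 1 only says that every feasible $J$ admits \emph{some} transmitter response of power at most $P_{M,1}$ achieving rate $R_1$, not that $\hat P$ itself survives every $J$. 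You close this gap by asserting that strict concavity/convexity makes the optimizers unique, but the inner objectives of Problems 1 and 2 are linear in the powers, so uniqueness is not automatic; the paper needs a separate, nontrivial argument (Proposition \ref{uniqueness_prop}, proved via the KKT/water-filling structure) to establish it, and that argument itself invokes the present proposition, so it cannot be borrowed here without creating a circularity. Either prove the converse inclusion directly or restrict the ``conversely'' to what your argument actually delivers.
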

\vspace*{4pt}

\vspace*{4pt}
\begin{prop}\label{uniqueness_prop}
The optimal solutions of \emph{Problem 1} and \emph{Problem 2} above are unique.
\end{prop}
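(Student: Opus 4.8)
The plan is to reduce the uniqueness of the optimizers of \emph{Problem 1} and \emph{Problem 2} to the uniqueness of the saddle point of the auxiliary game (\ref{game21})--(\ref{game22}), and then to establish the latter from the strict concavity/convexity of the instantaneous mutual information. Concretely, Proposition \ref{circ_pr_thm} already asserts that the optimal across-block allocations of \emph{Problem 1}, of \emph{Problem 2}, and of the game (\ref{game21})--(\ref{game22}) (with the matched constraints $P_{M,1},J_{M,1}$) coincide as \emph{sets}. Hence it suffices to prove that for fixed constraints the game has a unique equilibrium pair $(\{P_m^*\},\{J_m^*\})$; uniqueness for both problems then follows verbatim from those set equalities, and Proposition \ref{circ_pr_prop1} guarantees that the relevant constraints are active, so no slack needs to be tracked.

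For the game I would first record the two convexity facts. For fixed $\{J_m\}$ the map $\{P_m\}\mapsto I_M=\frac1M\sum_m\log(1+\frac{h_mP_m}{\sigma_N^2+J_m})$ is a sum of functions each strictly concave in its own variable $P_m$, hence jointly strictly concave on the convex constraint set $\{P_m\geq0,\ \frac1M\sum_mP_m\leq\mathcal{P}\}$; therefore the transmitter's best response to any $\{J_m\}$ is unique. Symmetrically, for fixed $\{P_m\}$ each term $\log(1+\frac{h_mP_m}{\sigma_N^2+J_m})$ is strictly convex in $J_m$ whenever $P_m>0$, so on the relevant subspace the jammer's best response is unique as well.

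I would then invoke the standard exchangeability of saddle points in a two-player zero-sum game: if $(\{P_m^{(1)}\},\{J_m^{(1)}\})$ and $(\{P_m^{(2)}\},\{J_m^{(2)}\})$ are both equilibria, so are the ``crossed'' pairs, all attaining the common game value. Strict concavity in $\{P_m\}$ forces $\{P_m^{(1)}\}=\{P_m^{(2)}\}$ (both are the unique maximizer of $I_M(\cdot,\{J_m^{(2)}\})$), and strict convexity in $\{J_m\}$ forces $\{J_m^{(1)}\}=\{J_m^{(2)}\}$, giving the claim. As a cross-check one can instead use the explicit equilibrium of Proposition \ref{prop_short_term}: there the pair $(P^*,J^*)$ depends only on the two multipliers $(\eta,\nu)$, and since the power constraints are active at the equilibrium (cf. Proposition \ref{circ_pr_prop1}), it remains to see that the two equations $\frac1M\sum_mP^*(h_m)=\mathcal{P}$ and $\frac1M\sum_mJ^*(h_m)=\mathcal{J}$ pin down $(\eta,\nu)$ uniquely, which follows from the monotonicity of $P^*$ and $J^*$ in $\eta$ and $\nu$.

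The step I expect to be the main obstacle is the degeneracy of strict convexity in the jammer's variables: on blocks where $P_m^*=0$ the mutual information does not depend on $J_m$ at all, so the clean ``strictly convex in $\{J_m\}$'' statement fails globally. The remedy I plan to use is to show that at any equilibrium the jammer never allocates power to a block the transmitter has abandoned (any such power is wasted and could be redeployed to strictly decrease $I_M$ elsewhere, contradicting equilibrium), which confines both players to the common support where strict convexity does hold; the exchangeability argument is then run on that subspace. Verifying that this support is itself common to all candidate equilibria---so that the restriction is legitimate---is the one place where the explicit thresholds $h_m\lessgtr\frac{\sigma_N^2\eta}{1-\sigma_N^2\nu}$ of Proposition \ref{prop_short_term} will be convenient to invoke.
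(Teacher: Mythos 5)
Your argument is correct, but it proves uniqueness by a genuinely different route than the paper. The paper also begins by invoking Proposition~\ref{circ_pr_thm} to identify the solution sets of \emph{Problem 1} and \emph{Problem 2} with the equilibrium set of the game (\ref{game21})--(\ref{game22}), but from there it works entirely with the explicit water-filling form of the solution: it reparametrizes by $\lambda=1/\eta$, $\mu=\nu/\eta$ and the threshold indices $p$ and $j$, and shows through a two-dimensional monotonicity analysis of the constraint equations (total jamming power and $I_M=R$) that first $h_j$, then the pair $(\lambda,\mu)$, and finally $h_p$ are pinned down uniquely. That argument is tied to the formulas one would actually implement, but it is delicate because the discrete thresholds move with the parameters. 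Your route --- joint strict concavity of $I_M$ in $\{P_m\}$, strict convexity in $\{J_m\}$ on the transmitter's support, and exchangeability of saddle points in a zero-sum game --- is more abstract and considerably shorter, and you correctly identify the only real obstacle, namely the degeneracy of $I_M$ in $J_m$ on blocks with $P_m=0$. One simplification you could make: the transmitter's half of the argument needs no support restriction at all (strict concavity in $\{P_m\}$ holds globally whenever $h_m>0$), so you can first conclude that the transmitter's equilibrium allocation is unique, which \emph{then} fixes a common support $S$ for all candidate equilibria; the jammer's best responses to that unique allocation must vanish off $S$ (otherwise redeploying the wasted power onto $S$ strictly lowers $I_M$) and coincide on $S$ by strict convexity. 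Ordered this way, you do not need the explicit thresholds of Proposition~\ref{prop_short_term} at all, and your ``cross-check'' via monotonicity in $(\eta,\nu)$ --- which is essentially the paper's proof and is harder than you suggest, precisely because $p$ and $j$ vary --- can be dropped.
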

\vspace*{4pt}

\vspace*{4pt}
\begin{prop}\label{propconcave}
(I) Under the optimal maximin second level power control strategies (\emph{Problem 1}),
the ``required'' transmitter power $P_M$ over a frame is a strictly increasing, continuous, concave and unbounded function of the power
$J_M$ that the jammer invests in that frame.

(II) Under the optimal minimax second level power control strategies (\emph{Problem 2}),
the ``required'' jamming power $J_M$ over a frame is a strictly increasing, continuous, convex and unbounded function of the power
$P_M$ that the transmitter invests in that frame.
\end{prop}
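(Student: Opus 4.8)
The plan is to establish Part~(I) directly and then obtain Part~(II) from it by an inverse-function argument based on Proposition~\ref{circ_pr_thm}(III). Fix a channel realization $\mathbf{h}$ and a jammer budget $J_M$. For a fixed intra-frame jamming allocation $\{J_m\}$, the transmitter's inner subproblem in \emph{Problem~1} is to minimize $\frac1M\sum_m P_m$ subject to $\frac1M\sum_m\log\!\big(1+\frac{h_mP_m}{\sigma_N^2+J_m}\big)\ge R$; call its value $g(\{J_m\})$. Writing $P_M(J_M)=\max\{g(\{J_m\}):\frac1M\sum_m J_m\le J_M,\ J_m\ge0\}$, the whole claim reduces to properties of $g$ and of this outer maximization.

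The core step is concavity, and the device I would use is a change of variables that decouples the transmitter's feasible set from the noise levels. Put $Q_m=\frac{h_mP_m}{\sigma_N^2+J_m}\ge0$, so that $P_m=\frac{(\sigma_N^2+J_m)Q_m}{h_m}$ and the rate constraint becomes $Q\in\mathcal{Q}:=\{Q\in\mathbb{R}_+^M:\sum_m\log(1+Q_m)\ge MR=\log c\}$, a set that does \emph{not} depend on $\{J_m\}$. In these variables
\[
g(\{J_m\})=\min_{Q\in\mathcal{Q}}\ \frac1M\sum_{m}\frac{Q_m}{h_m}\,(\sigma_N^2+J_m),
\]
which exhibits $g$ as a minimum of a family of functions each affine in $\{J_m\}$; hence $g$ is concave in $\{J_m\}$. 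This sidesteps entirely the difficulty of active-set changes in the water-filling solution, which would otherwise have to be glued across boundaries. Concavity of $P_M(J_M)$ then follows from a one-line perturbation argument: if $\{J_m^{(1)}\},\{J_m^{(2)}\}$ are optimal for budgets $J_M^{(1)},J_M^{(2)}$, their convex combination is feasible for the averaged budget, so by concavity of $g$ we obtain $P_M\big(\lambda J_M^{(1)}+(1-\lambda)J_M^{(2)}\big)\ge\lambda P_M(J_M^{(1)})+(1-\lambda)P_M(J_M^{(2)})$.

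The remaining properties in Part~(I) are comparatively routine. Monotonicity is clear since enlarging the jammer's budget enlarges the feasible set of the outer maximization, so $P_M(J_M)$ is non-decreasing; strictness uses Proposition~\ref{circ_pr_prop1} (both constraints are tight at the optimum) together with the explicit equilibrium of Proposition~\ref{prop_short_term}: the jammer only loads blocks on which the transmitter is active ($Q_m^*>0$), and on those blocks the envelope theorem gives $\partial g/\partial J_m=\frac{Q_m^*}{Mh_m}>0$, so spending the extra budget strictly raises the required $P_M$. Continuity follows from the uniqueness of the optimizer (Proposition~\ref{uniqueness_prop}) via a standard maximum-theorem argument, and on the open half-line it is in any case automatic from concavity. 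Unboundedness is seen by letting $J_M\to\infty$: to keep $I_M\ge R$ the transmitter must scale its power with the effective noise $\sigma_N^2+J_m$ on the blocks it uses, forcing $P_M\to\infty$.

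Part~(II) I would not reprove from scratch. Proposition~\ref{circ_pr_thm}(III) states that \emph{Problem~1} and \emph{Problem~2} trace out the same $(P_M,J_M)$ pairs, so the minimax map of Part~(II) is exactly the inverse of the maximin map of Part~(I). The inverse of a strictly increasing, continuous, unbounded, concave function is strictly increasing, continuous, unbounded and \emph{convex}, which gives Part~(II); the only care needed is the domain, since for transmitter powers below the no-jamming threshold $P_M(0)$ the required jamming power is $0$, and convexity must be checked across that junction. The main obstacle throughout is the concavity in Part~(I); the change of variables above is what makes it clean, and I would spend the most effort verifying that it genuinely removes the $\{J_m\}$-dependence of the feasible set and that the resulting minimum of affine functions is finite, so that the concavity is not vacuous.
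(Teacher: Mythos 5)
Your proof is correct in substance but takes a genuinely different route from the paper's. The paper's Appendix works entirely inside the explicit water-filling solution: it first shows (Lemma \ref{propapp51}) that the optimal jamming allocation $\{x_m^*\}$ varies continuously with $J_M$, then (Lemma \ref{propapp52}) that $P_M(\{x_m\})$ is continuous with continuous first derivatives across the points where the transmitter's active set index $p$ changes, and finally (Lemma \ref{propapp53}) establishes concavity separately on each region of fixed $(p,j)$ by a direct computation of $\frac{df}{dg}$ as a function of the Lagrange parameter $\mu$, gluing the pieces via the $C^1$ property. Your substitution $Q_m=\frac{h_mP_m}{\sigma_N^2+J_m}$, which renders the rate constraint set $\mathcal{Q}$ independent of $\{J_m\}$ and exhibits the inner value $g$ as a pointwise minimum of affine functions of $\{J_m\}$, bypasses all of this active-set bookkeeping; combined with the standard value-function argument for the outer maximization, it delivers global concavity in one stroke, and Part (II) by inversion exactly as the paper intends (Proposition \ref{circ_pr_thm}(III) is indeed what justifies treating $\mathscr{J}_M$ as the inverse of $\mathscr{P}_M$). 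What the paper's heavier computation buys, and yours does not, is the structural description exploited later in Section \ref{ss1} — the identification of the linear versus strictly concave segments with the relation between $p$ and $j$, and the explicit slopes used in the vase-filling algorithms — so your argument replaces the proof of this proposition but not the surrounding analysis. Two small points to tighten: strict monotonicity needs the extra jamming power to be placed on a block where the \emph{new} optimizer $Q^*$ is active (the best block works, since the transmitter always loads it), because a positive one-sided derivative of a concave function at a single point does not by itself force a strict increase; and continuity at the endpoint $J_M=0$ is not automatic from concavity on a closed half-line (a concave increasing function may jump down at the left endpoint), so the maximum-theorem argument you mention, or a direct check, is genuinely needed there.
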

\vspace*{4pt}

Although under the same transmitter/jammer frame power constraints $P_M$ and $J_M$ the second level optimal power allocation
strategies for the maximin and minimax problems coincide, this result should not be associated with the notion of Nash equilibrium,
since the two problems solved above do not form a zero-sum game,
while for the game of (\ref{game31}) and (\ref{game32}), first level power control
strategies are yet to be investigated.

As in \cite{myself3}, we shall henceforth denote the function that gives the ``required'' transmitter power $P_M$ over a frame
where the jammer invests power $J_M$ by $\mathscr{P}_M(J_M, \mathbf{h})$ and its ``inverse'', i.e. the function that gives the ``required''
jamming power over a frame where the transmitter invests $P_M$ by $\mathscr{J}_M(P_M, \mathbf{h})$.
Note that unlike in \cite{myself3}, these functions are now also dependent on the channel realization $\mathbf{h}$.
A particular channel realization can be characterized in terms of the second level power allocation technique.
For instance, considering the maximin problem, we can map each channel vector $\mathbf{h}$ to a unique curve
$\mathscr{P}_M(J_M)$ in the plane. That is, for fixed $\mathbf{h}$, we increase the jamming power allocated to the frame
from $0$ to $\infty$, and compute the transmitter power $\mathscr{P}_M(J_M,\mathbf{h})$ required for achieving reliable communication.
We have already mentioned that, for any fixed $\mathbf{h}$, $\mathscr{P}_M(J_M)$ is a strictly increasing,  continuous,
concave and unbounded function. 

Next we take a closer look at the $\mathscr{P}_M(J_M, \mathbf{h})$ curves.
By inspecting the proofs of Propositions \ref{circ_pr_prop1} - \ref{propconcave}, we notice that
$j$ denotes the index of the first block on which the jammer allocates nonzero power,
while $p$ is the index of the first block on which the transmitter allocates nonzero power
(the blocks are indexed in increasing order of their squared channel coefficients $h_m$, and
both transmitter and jammer allocate more power to blocks with larger values of $h_m$).
Note also that $p\leq j$. If for a given $\mathbf{h}$ we have $p=j$ over an interval of $J_M$,
then the $\mathscr{P}_M(J_M)$ curve is linear over that interval. However, if $p<j$, the curve is
strictly concave.

We can think of the $\mathscr{P}_M(J_M)$ curve that characterizes a given channel realization $\mathbf{h}$
as being ``built'' in the following manner. We increase the jamming power allocated to the corresponding
frame, starting from $J_M=0$. We already know that without the jammer's presence the transmitter transmits
over the ``best'' blocks , i.e. the ones having the largest channel coefficients. Even as the
jammer starts interfering, its optimal strategy is such that the blocks with the largest coefficients remain
the most attractive for the transmitter. However, they do become worse than before. Hence, if without the presence of
the jammer the transmitter would normally ignore some of the blocks, as the jammer's power increases, those blocks
may slowly become more attractive. At some point, the transmitter will choose to increase the number of blocks
over which it allocates non-zero power (i.e. decrease $p$). Similarly, as the jammer's power $J_M$ increases,
the jammer moves from the best block to the best two blocks, and so on (i.e. the jammer decreases $j$).

The transmitter's and the jammer's transitions do not have to be simultaneous. Recall that the relationship between
the values of $p$ and $j$ decide whether the $\mathscr{P}_M(J_M)$ curve is linear or strictly concave over an interval
of $J_M$. Therefore, we expect the $\mathscr{P}_M(J_M)$ curves to look like a concatenation of linear and strictly
concave segments, as in Figure \ref{fig_curves}. As $J_M$ increases, the transmitter decreases the value of $p$
whenever the slope of the $\mathscr{P}_M(J_M)$ curve can be decreased by this move and similarly, the jammer
decreases the value of $j$ whenever the slope can be increased. In other words,
as $J_M$ increases, the transitions from linear portions to nonlinear
portions are caused by the transmitter, while the transitions from nonlinear to linear ones  are caused by the jammer.

\begin{figure}
\centering
\includegraphics[scale=0.8]{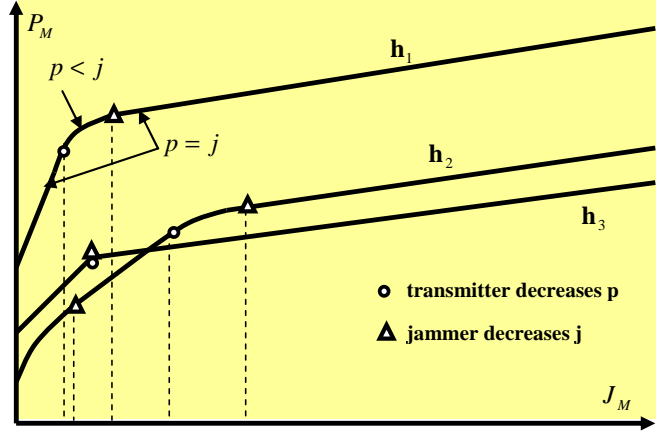}
\caption{Typical $P_M(J_M)$ curves, for different channel realizations}\label{fig_curves}
\end{figure}

In the remainder of this subsection we provide the simplest example of optimal power allocation between the
blocks of a frame. Namely, we look at the case when $M=2$ --  only two blocks per frame.

\vspace*{4pt}
{\bf \emph{Particular case: $M=2$}} \vspace*{4pt}

The case of $M=2$ is the simplest and most intuitive illustration of the second-level power control strategy.
Since we have already discussed the nice dual property between the second level minimax and maximin strategies, the following
considerations refer to the maximin scenario only. The jamming power $J_M$ has to be allocated between the two blocks
in a way that maximizes the transmitter's expense, should it decide to achieve reliable communication over the frame.
The jammer and the transmitter can each transmit over either one or both blocks.
All possible situations are considered next.

Let the two channel coefficients be $h_0\leq h_1$, and denote the transmitter's and jammer's powers allocated
to the blocks by $P_0,P_1$ and $J_0,J_1$ respectively. Also denote $x_i=J_i+\sigma_N^2$, for $i\in \{0,1\}$, and $c=\exp(2R)$.
If we take a closer look at the solutions (\ref{sol11}) and (\ref{sol12}) of the game in (\ref{game21}) and (\ref{game22}),
and if we recall that the solutions of either of our maximin and minimax second layer power allocation strategies have
a similar form (up to the constants $\eta$ and $\nu$), it is easy to observe that $x_0\leq x_1$ and $\frac{x_0}{h_0}\geq\frac{x_1}{h_1}$.
This fact is also noted in Appendix \ref{app23}, where the solution of \emph{Problem 1} is given again, with the new notation
$\lambda=1/\eta$ and $\mu=\nu/\eta$. Throughout the rest of this subsection we shall refer to the notation in Appendix \ref{app23}
and the solution in (\ref{solution1}) and (\ref{solution2}).

If the transmitter is active over both blocks, then the constraint $I_M=R$ yields
\be
\left(1+\frac{h_0}{x_0}P_0\right)\left(1+\frac{h_1}{x_1}P_1\right)=c,
\ee
and with (\ref{Pmfirstexpr}) in Appendix \ref{app23} we obtain $\lambda=\sqrt{c\frac{x_0}{h_0}\frac{x_1}{h_1}}$.

Suppose that the jammer is only present on one block of the frame, then that is the block with coefficient $h_1$.
This implies $x_0=\sigma_N^2$, and $x_1=(2J_M+\sigma_N^2)$. Under these assumptions, the transmitter will only transmit
on the first block, (that is $P_0=2P_M$ and $P_1=0$) if and only if
\begin{gather}
\lambda=\sqrt{c\frac{x_0}{h_0}\frac{x_1}{h_1}}<\frac{x_0}{h_0},
\end{gather}
which translates to $c\frac{(2J_M+\sigma_N^2)}{h_1}<\frac{\sigma_N^2}{h_0}$.

Otherwise, the transmitter is present over both blocks, performing water-pouring as in
(\ref{Pmfirstexpr}), with
\begin{gather}
\lambda=\sqrt{c\frac{(2J_M+\sigma_N^2)\sigma_N^2}{h_0 h_1}}.
\end{gather}

Note that the transmitter cannot be present only on the second block.

If the jammer decides to allocate non-zero power over both blocks,
its optimal strategy is such that $x_0/h_0\geq x_1/h_1$.
If we also have $x_0/h_0\leq c(x_1/h_1)$ (corresponding to $\lambda\geq x_0/h_0$), then the transmitter is present over both blocks.
In this case, we can particularize (\ref{Pmfirstexpr}) to $M=2$ and obtain:
\begin{gather}\label{PMis2}
P_m=\sqrt{c\frac{x_0}{h_0}\frac{x_1}{h_1}}-\frac{x_m}{h_m},~\textrm{for}~ m\in\{0, 1 \}.
\end{gather}
Define the ratio $r=\frac{x_0/h_0}{x_1/h_1}$.
Since $x_0+x_1=2(J_M+\sigma_N^2)$, we can write
\begin{gather}\label{pm2}
P_M=\frac{(J_M+\sigma_N^2)(2\sqrt{cr}-r-1)}{h_0r+h_1},~\textrm{if}~c\frac{x_1}{h_1}\geq\frac{x_0}{h_0}.
\end{gather}
Setting the derivative of $P_M$ with
respect to $r$ equal to zero, we get the unique solution
\begin{gather}
r_{opt}=\left(\frac{\sqrt{(h_1-h_0)^2+4h_0h_1c}-(h_1-h_0)}{2h_0\sqrt{c}}\right)^2,
\end{gather}
which provides the optimal allocation of the jamming power $J_M$ between the two blocks.
The value of $r_{opt}$ is between $1$ (for $h_0=h_1$) and $c$ (for $h_0=0$).
Furthermore, $P_M(r)$ is strictly increasing for $r\in [1, r_{opt})$ and strictly
decreasing for $r\in (r_{opt}, c]$, hence $r_{opt}$ is the maximizing argument in (\ref{pm2}).

This also implies that if $r_{opt}\frac{(2J_M+\sigma_N^2)}{h_1}<\frac{\sigma_N^2}{h_1}$,
the jammer's optimal strategy is to allocate all of its power to the second block.
If, on the other hand, $r_{opt}\frac{(2J_M+\sigma_N^2)}{h_1}\geq\frac{\sigma_N^2}{h_1}$, then
the jammer's best strategy is to allocate the power $J_M$ such that
the ratio $r=(x_0/h_0)/(x_1/h_1)$ equals the optimal ratio $r_{opt}$.

The remarks above conclude in the following algorithm:
\begin{itemize}
\item If $c\frac{(2J_M+\sigma_N^2)}{h_1}\leq\frac{\sigma_N^2}{h_0}$, both transmitter and jammer will
only transmit on the second block.
\item If $c\frac{(2J_M+\sigma_N^2)}{h_1}>\frac{\sigma_N^2}{h_0}$ but
$r_{opt}\frac{(2J_M+\sigma_N^2)}{h_1}\leq\frac{\sigma_N^2}{h_1}$, the jammer will allocate all its power
to the second block, while the transmitter will transmit on both blocks.
\item If $r_{opt}\frac{(2J_M+\sigma_N^2)}{h_1}>\frac{\sigma_N^2}{h_1}$, the jammer will transmit over both blocks
such that $(x_0/h_0)/(x_1/h_1)=r_{opt}$, and the transmitter will also be present on both blocks. 
\end{itemize}


\subsection{Inter-Frame Power Allocation}\label{ss4}

In this subsection we present the first level optimal power allocation
strategies.

\vspace*{4pt}
{\bf \emph{The Maximin Solution}} \vspace*{4pt}

Under our full CSI, average power constraints scenario, the jammer needs to find the best choice
of the set $\mathscr{X} \subset \mathbb{R}_+^M$ of channel realizations
over which it should be present, and the optimal way $J_M(\mathbf{h})$ to distribute
its power over $\mathscr{X}$, such that when the
transmitter employs its optimal strategy, the probability of
outage is maximized.

We already know that given the jammer's strategy, the optimal way of allocating the
transmitter's power is such that reliable communication is first obtained on the frames
that require the least amount of transmitter power. 
\begin{figure}[b]
\centering
\includegraphics[scale=0.55]{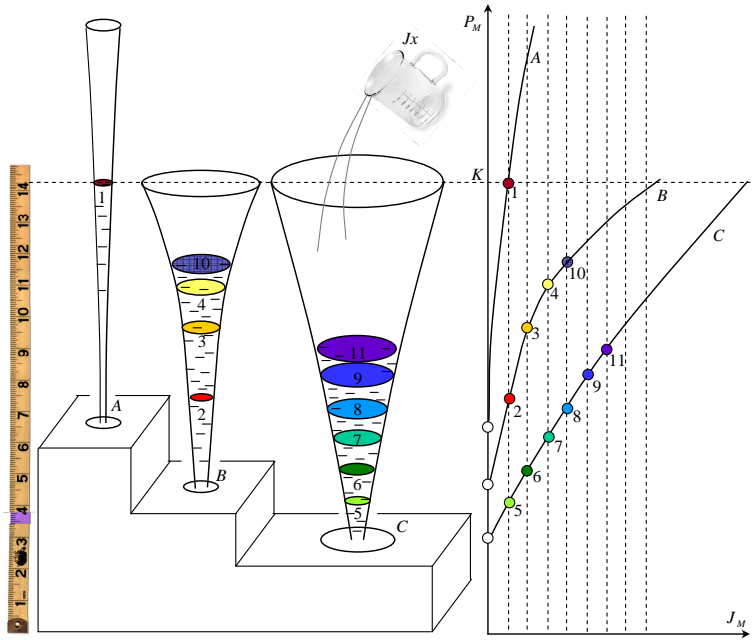}
\caption{Maximin vase filling.}\label{alg1fig}
\end{figure}
The jammer's optimal strategy is presented in Theorem \ref{thm3_long_term} below.
The theorem is complemented by the numerical algorithm and the intuition-building analogy that follows its proof.

\vspace*{4pt}
\begin{thm}\label{thm3_long_term}
It is optimal for the jammer to
make $J_M(\mathbf{h})$ satisfy  the power constraint with equality.
The optimal jammer strategy for allocating power across frames is to increase the \emph{required}
transmitter power, starting with those frames whose channel realizations exhibit
the steepest instantaneous slope of the characteristic $\mathscr{P}_M(J_M)$ curve.
The jamming power should be allocated such that the \emph{required} transmitter power
over each channel realization where the jammer is present
does not exceed a pre-defined level $K$.

The optimal value for $K$ that maximizes the outage probability can be found 
numerically, by exhaustive search in a compact interval of the positive real line.
\end{thm}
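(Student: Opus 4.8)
The plan is to follow the contradiction/exchange route the paper advertises: assume an optimal jammer allocation $J_M(\cdot)$ is in hand, characterize the transmitter's best response to it, and then show that unless $J_M(\cdot)$ has the stated structure, the jammer can strictly improve. First I would pin down the transmitter's reaction. For any fixed jammer allocation, the required transmitter power on a frame is $\rho(\mathbf{h})=\mathscr{P}_M(J_M(\mathbf{h}),\mathbf{h})$, which by Proposition \ref{propconcave} is well defined, continuous, and strictly increasing in $J_M$. Because the transmitter achieves reliable communication first on the cheapest frames, its optimal response is a threshold rule: there is a level $K$ with served set $\{\mathbf{h}:\rho(\mathbf{h})\le K\}$ determined by $\int_{\{\rho\le K\}}\rho(\mathbf{h})\,d\mathfrak{m}(\mathbf{h})=\mathcal{P}$ (or $K=\infty$ if the whole space is affordable), and the outage equals $\mathfrak{m}(\{\rho(\mathbf{h})>K\})$. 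This reduces the jammer's objective to making the transmitter's budget $\mathcal{P}$ bind at as low a threshold $K$ as possible.

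Next I would establish the three structural claims by local exchange arguments against this threshold response. For the budget-equality claim, suppose $E[J_M]<\mathcal{J}$; pouring the slack onto a served frame strictly raises its $\rho$ (Proposition \ref{propconcave}), which inflates $\int_{\{\rho\le K\}}\rho\,d\mathfrak{m}$ and therefore forces $K$ downward, enlarging the outage set --- a contradiction, so the jammer uses all of $\mathcal{J}$. For the ``do-not-exceed-$K$'' claim, observe that jamming spent to raise a frame's required power above the transmitter's threshold is wasted: the transmitter already abandons that frame, so this power neither drains the transmitter budget nor creates fresh outage. Shaving it off down to the boundary $K$ (using continuity of $\mathscr{P}_M$) frees jamming power that can be redeployed onto a served frame to drain more of $\mathcal{P}$, strictly lowering $K$; hence no frame carries $\rho>K$ at the optimum. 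For the steepest-slope claim I would use the key observation that the jammer harms the transmitter precisely by forcing it to spend more of $\mathcal{P}$ on the frames it still serves, and that the efficiency of this drain, per unit of jamming, is exactly the instantaneous slope $d\mathscr{P}_M/dJ_M$. An interchange argument then shows that moving an increment of jamming from a frame of smaller current slope to one of larger current slope increases the total required power demanded of the transmitter at equal jamming cost, again pushing $K$ down; concavity of $\mathscr{P}_M$ (Proposition \ref{propconcave}) guarantees that these slopes are monotonically decreasing, so the greedy ``fill the steepest slope first, up to level $K$'' rule is both well defined and optimal. This is precisely the ``vase filling'' of Figure \ref{alg1fig}.

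Having shown that every optimal allocation is the ``fill to the common level $K$, steepest slope first'' policy, the entire family of candidate optimal strategies is indexed by the single scalar $K$, and the final claim reduces to a one-dimensional optimization. To make the search compact, I would bound $K$ below by $0$ and above by the no-jamming threshold $K_0$ defined by $\int_{\{\mathscr{P}_M(0,\mathbf{h})\le K_0\}}\mathscr{P}_M(0,\mathbf{h})\,d\mathfrak{m}(\mathbf{h})=\mathcal{P}$: since jamming only raises every $\rho$, the transmitter's budget always binds at a threshold no larger than $K_0$, so searching outside $[0,K_0]$ is pointless. Finally, because $f(\mathbf{h})$ is a density (so $\mathfrak{m}$ is non-atomic) and $\mathscr{P}_M,\mathscr{J}_M$ are continuous, the induced outage is a continuous function of $K$ on the compact interval $[0,K_0]$; its maximum is therefore attained and can be located by exhaustive search, as claimed.

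I expect the main obstacle to be the coupling between the jammer's cap level and the transmitter's \emph{realized} threshold: because any perturbation of $J_M(\cdot)$ makes the transmitter re-optimize and shifts $K$, each exchange argument must track how the served set and its budget integral move, and the measure-theoretic bookkeeping (one frame leaving the served set while another enters at the moving boundary $K$) is where the care is needed. Non-atomicity of $\mathfrak{m}$ together with the strict monotonicity and concavity of Proposition \ref{propconcave} are exactly the ingredients that keep these perturbations first-order and sign-definite, which is why I would lean on them at every step.
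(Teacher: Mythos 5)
Your structural arguments track the paper's own proof closely: both proceed by contradiction/exchange, first pinning down the transmitter's threshold response, then showing that jamming power pushed above the cap $K$ on abandoned frames is wasted, and finally using a $\delta J_M$ transfer from a flatter-slope set $\mathscr{B}$ to a steeper-slope set $\mathscr{A}$ (both of positive $\mathfrak{m}$-measure, with required power still below $K$ on $\mathscr{A}$) to contradict optimality. Those parts are essentially the paper's proof, and the budget-equality argument you add is a reasonable supplement to a claim the paper asserts without proof.

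There is, however, a genuine error in your compactness step. You bound the search above by the no-jamming threshold $K_0$ defined by $\int_{\{\mathscr{P}_M(0,\mathbf{h})\le K_0\}}\mathscr{P}_M(0,\mathbf{h})\,d\mathfrak{m}=\mathcal{P}$, on the grounds that ``jamming only raises every $\rho$, so the transmitter's budget binds at a threshold no larger than $K_0$.'' The inference is backwards: raising the per-frame cost shrinks the \emph{measure} of the served set but raises the cost of the marginal served frame, so the realized threshold goes \emph{up}, not down. Concretely, for $M=1$ the required power is $\mathscr{P}_M(J_M,h)=(c-1)(J_M+\sigma_N^2)/h$, and in the optimal maximin allocation of Proposition \ref{thm2_long_term} the jammer flattens the curve at level $K$ over $[h_1^*,h_2^*]$ with $h_1^*=(c-1)\sigma_N^2/K$; for the jamming to affect the transmitter at all, $h_1^*$ must lie below the transmitter's no-jamming cutoff, which forces $K>K_0$. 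So the interval $[0,K_0]$ generically excludes the optimum. The paper's construction of $K_{max}$ is different and avoids this: it is the cap level at which the greedy allocation of the full budget $\mathcal{J}$ lands entirely on frames outside the no-jamming served set $\mathscr{S}_0$, so that any $K\ge K_{max}$ leaves the transmitter's behavior unchanged; this exists whenever $\mathbb{R}_+^M\setminus\mathscr{S}_0$ has positive $\mathfrak{m}$-measure. You would need to replace your bound with an argument of this type (or otherwise show directly that arbitrarily large caps are eventually ineffective) for the exhaustive-search claim to go through.
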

\vspace*{4pt}

\begin{proof}
Our proof takes a contradictory approach. Instead of deriving the optimal strategy defined above in a
direct manner, we show instead that any other strategy not satisfying the theorem's requirements is suboptimal.
Let $\mathscr{S}, \mathscr{X}\subset \mathbb{R}_+^M$ denote the sets of channel realizations over which
the transmitter and the jammer are present, respectively.

Suppose the jammer picks a certain strategy $J_M(\mathbf{h})$.
Since the transmitter's strategy is predictable, the jammer already knows the transmitter's
optimal strategy. Under this optimal strategy, the transmitter picks a set of frames $\mathscr{S}$ over
which it will invest non-zero power. This choice also results in a maximum level of \emph{required} transmitter  power that
will actually be matched by the transmitter. Denote this level by $K$.

Since the transmitter's strategy is the optimal response to the jammer's strategy, the \emph{required} transmitter
power should be larger than or equal to $K$ over the set of frames $\mathscr{X} \setminus \mathscr{S}$
where the jammer jams, but the transmitter does not afford to transmit. Otherwise, the transmitter
would be wasting power and its strategy would not be optimal. 

But since the jammer knows the transmitter's strategy, and knows that the transmitter will not transmit over 
$\mathscr{X} \setminus \mathscr{S}$, its optimal strategy should make the \emph{required} transmitter
power over $\mathscr{X} \setminus \mathscr{S}$ at most equal to $K$. Otherwise the jammer would be wasting power. 

We have seen how the jammer's power should be distributed over $\mathscr{X} \setminus \mathscr{S}$.
Next we show that if the jammer's power allocation over $\mathscr{S}\bigcap \mathscr{X}$ is not done
according to the theorem, the jammer's strategy is not optimal. For this, we assume that the jammer's
strategy does not satisfy the theorem's requirements, and provide a method of improvement
(i.e. we prove sub-optimality).

If the theorem is not satisfied, than there exist two sets $\mathscr{A},\mathscr{B} \subset \mathscr{S}\bigcap \mathscr{X}$
of non-zero $\mathfrak{m}$-measure such that$\frac{dP_M(\mathbf{h}_1)}{dJ_M} > \frac{dP_M(\mathbf{h}_2)}{dJ_M} ~
\forall~\mathbf{h}_1\in \mathscr{A} ~\textrm{and}~\mathbf{h}_2\in \mathscr{B}$, and such that the required $P_M$
is less than $K$ on $\mathscr{A}$ and $J_M>0$ on $\mathscr{B}$.

Consider a small enough amount of jamming power $\delta J_M$, such that, for any channel realization
$\mathbf{h}\in \mathscr{A} \bigcup \mathscr{B}$, we can modify the jamming power by $\delta J_M$ without changing the slope
of the $\mathscr{P}_M(J_M)$ curve. Subtracting $\delta J_M$ from all frames in $\mathscr{B}$, the jammer obtains
the excess power $\delta J_M m(\mathscr{B})$, which it can allocate uniformly over $\mathscr{A}$.
The jammer's total average power remains unchanged. However, the required transmitter power over
$\mathscr{A} \bigcup \mathscr{B}$  is increased (because the slopes of the $\mathscr{P}_M(J_M)$ curves corresponding to
$\mathscr{A}$ are all larger than the slopes of the $\mathscr{P}_M(J_M)$ curves corresponding to $\mathscr{B}$),
and thus the modification results in a larger probability of outage.

There exists a closed interval $[0,K_{max}] \in \mathbb{R}_+$ which includes the optimal value of $K$.
This observation is vital to the existence of a numerical algorithm that searches for the optimal $K$.
Once such an interval has been set, we can fix the desired resolution and calculate the numerical complexity
of the algorithm. We next show how the upper limit $K_{max}$ of this interval can be found. Consider the
set of channel realizations
$\mathscr{S}_0$ where the transmitter is active when the jammer does not interfere with the transmission.
Next, find the value $K_{max}$ for which, when the jammer allocates its power $\mathcal{J}$ according to the
rules of the theorem, we obtain a set $\mathscr{X}_0\subset \mathbb{R}^M_+\setminus\mathscr{S}_0$.
This means that the jammer's strategy under any $K\geq K_{max}$ has no influence upon the transmitter's strategy.
Note that such a finite $K_{max}$ can be found whenever $\mathbb{R}^M_+\setminus\mathscr{S}_0$ has non-zero
$\mathfrak{m}$-measure.
\end{proof}
\vspace*{4pt}

The algorithm in Table \ref{table1} which we used in generating our numerical results in Subsection \ref{ss5}
helps shed more light into the practicality of Theorem \ref{thm3_long_term}.
In the description of the algorithm, we assume discrete jamming power levels $J_M^k$ with $k=0, 1, \ldots$
and $J_M^0=0$, as well as a discrete
and finite channel coefficient space. As a consequence, there exists a finite number of $\mathscr{P}_M(J_M)$ curves,
each characterizing one possible channel realization, and each completely determined by a finite vector whose components are
the values of $\mathscr{P}_M(J_M^k)$ for that particular channel realization.

An intuitive description of the technique is given in Figure \ref{alg1fig}. Consider the problem where the
jammer has to pour water in a number of vases (a vase for each possible channel realization). The shape of each
vase is such that the vertical section of its wall produces a concave curve similar to the corresponding
$\mathscr{P}_M(J_M)$ curve. The jammer can afford to spend a certain volume of water. The jammer wants to
``annoy'' the transmitter, which is deeply concerned with \emph{the sum of the heights} that the water levels reach
in the vases. Hence, the jammer tries to use its available volume of water, such that the sum of the water levels'
heights is maximized. However, the jammer cannot pour all the water in the thinnest vase, because then the
transmitter might just ignore that vase. Instead, the jammer has to set a height limit $K$ which it should not exceed.
The jammer pours the water a cup at a time, starting with the vase in which a cup of water rises the water level
the quickest. In Figure \ref{alg1fig}, the order of adding cups to the vases is shown by numerals from $1$ to $11$.
The first cup is poured into the thinnest vase (vase $A$) and incidentally reaches the level $K$. Thus, no more
water should be added to vase $A$. The next three cups are added to vase $B$, and then the next five cups to
vase $C$. Then the jammer returns to vase $B$, and adds another cup, for this increases the water level more than
it would increase the level in vase $C$. Finally, the last available cup is added to vase $C$. 
The way the numerical algorithm works is illustrated in the right part of Figure \ref{alg1fig}.

\begin{table}\caption{Numerical algorithm for deriving the maximin solution.}\label{table1}
\begin{small}
\begin{tabular}{|p{8cm}|}
\hline
\noindent
Let $\mathbf{P}$ denote a matrix with each row representing
one of the vectors $\mathscr{P}_M(J_M^k)$, for different channel realizations
$\mathbf{h}$.
Let $P_{req}$ be the vector of required powers for the different frames.
The initial $P_{req}$ is set equal to the first column of $\mathbf{P}$. 
Let $K_{max}$ be the upper limit when searching for the optimal $K$.\\
Initialize $K=0$.\\
\emph{while} $K\leq K_{max}$\\
~~~~~ $p_T=0$.\\
~~~~~ Let $L$ be an index vector, the same size as $P_{req}$.\\
~~~~~ Initialize all components of $L$ to be equal to $1$.\\
~~~~~ We have the relationship $P_{req}(j)=\mathbf{P}(j,L(j))$.\\
\% Jx strategy:\\
The amount of jamming power spent at each step is accumulated into the variable $J_c$.\\
~~~~~ \emph{while} Jx power constraint is satisfied ($J_c\leq \mathcal{J}$)\\
~~~~~~~~~~ Find row $j$ of $\mathbf{P}$ with the largest difference\\
~~~~~~~~~~ between components $L(j)+1$ and $L(j)$,\\
~~~~~~~~~~ and such that $\mathbf{P}(j,L(j)+1)\leq K$.\\
~~~~~~~~~~ $P_{req}(j)=\mathbf{P}(j,L(j)+1)$.\\
~~~~~~~~~~ $L(j)=L(j)+1$.\\
~~~~~~~~~~ Weigh $J_M^j$ by probability of row $j$ and add to $J_c$.\\
~~~~~ \emph{end}\\

\% Tx strategy (Tx picks frames where required power is minimum first)\\
The amount of transmitter power spent at each step is simulated into the variable $P_c$.\\
~~~~~ \emph{while} Tx power constraint is satisfied ($P_c\leq \mathcal{P}$)\\
~~~~~~~~~~ Pick the least component of $P_{req}$.\\
~~~~~~~~~~ Add probability of corresponding frame to $p_T$.\\
~~~~~~~~~~ Add value of component, weighted by\\
~~~~~~~~~~ probability above, to $P_c$.\\
~~~~~~~~~~ Delete component from $P_{req}$.\\
~~~~~ \emph{end}\\
~~~~~ $P_{out}(K)=1-p_T$\\
~~~~~ Increment K.\\
\emph{end}\\
Select K that produces the largest $P_{out}$.\\
\hline
\end{tabular}
\end{small}
\end{table}

\vspace*{4pt}
{\bf \emph{The Minimax Solution}} \vspace*{4pt}

In Theorem \ref{circ_pr_thm} we showed that given the transmitter's and the jammer's powers $P_M$ and $J_M$
allocated to a frame, the optimal strategies for distributing these powers inside the frame are identical for
the minimax and the maximin problems.
Hence, by rotating the $\mathscr{P}_M(J_M)$ plane, we get the characteristic $\mathscr{J}_M(P_M)$ curves for
the minimax problem.

We already know that given the transmitter's strategy, the optimal way of allocating the
jammer's power is such that outage is first induced on the frames that require the least amount of jamming power. 

The transmitter's optimal strategy is presented in the following theorem, which is complemented by the numerical
algorithm and the analogy that follows its proof.
\vspace*{4pt}
\begin{thm}\label{thm4_long_term_M>1}
It is optimal for transmitter to make   $P_M(\mathbf{h})$ satisfy
the long-term power constraint with equality.
The optimal transmitter power allocation across frames
is to increase the \emph{required} jamming power up to some pre-defined level $K$,
starting with those frames on which the required transmitter power to achieve this goal
is least.
 
The optimal value for $K$ that minimizes the outage probability can be found 
numerically by exhaustive search.
\end{thm}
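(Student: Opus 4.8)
The plan is to mirror the contradiction-based argument used for the maximin case in Theorem \ref{thm3_long_term}, exploiting the duality of Proposition \ref{circ_pr_thm}: rotating the $\mathscr{P}_M(J_M,\mathbf{h})$ plane turns each channel realization into a strictly increasing, continuous, convex and unbounded curve $\mathscr{J}_M(P_M,\mathbf{h})$ (Proposition \ref{propconcave}(II)), so for every frame there is a well-defined, invertible relation between the transmitter power $P_M$ spent on it and the jamming power $\mathscr{J}_M$ thereby \emph{required} to induce outage. First I would dispose of the budget claim: if the transmitter left part of $\mathcal{P}$ unused it could raise the required jamming power on some set of shielded frames of positive $\mathfrak{m}$-measure, strictly shrinking the set the jammer can afford to cover and hence strictly lowering $P_{out}$, so the constraint must hold with equality.

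Next I would fix an (assumed optimal) transmitter strategy $P_M(\mathbf{h})$ and examine the jammer's best response. Writing $\mathscr{S}$ and $\mathscr{X}$ for the sets of realizations on which the transmitter and the jammer are respectively active, note that a frame with $P_M=0$ is automatically in outage at no cost to the jammer, so the outage set is $(\mathbb{R}_+^M\setminus\mathscr{S})\cup(\mathscr{S}\cap\mathscr{X})$ and minimizing $P_{out}$ is equivalent to \emph{maximizing} the $\mathfrak{m}$-measure of the shielded set $\mathscr{S}\setminus\mathscr{X}$. Using the already-noted fact that the jammer induces outage first on the frames requiring the least jamming power, the jammer spends its whole budget $\mathcal{J}$ converting the cheapest-to-jam frames of $\mathscr{S}$ into outage. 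Let $K$ denote the required jamming level $\mathscr{J}_M(P_M(\mathbf{h}),\mathbf{h})$ attained on the frames the transmitter successfully shields, i.e. on $\mathscr{S}\setminus\mathscr{X}$; on $\mathscr{S}\cap\mathscr{X}$ this quantity must be at most $K$ (otherwise the jammer, proceeding cheapest-first, would have preferred a shielded frame), while on $\mathscr{S}\setminus\mathscr{X}$ it is at least $K$, so the two coincide at a common ceiling $K$.

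The heart of the argument is then the exchange step that forces the structural properties. Suppose the allocation violates the prescription, so that there are subsets $\mathscr{A},\mathscr{B}\subset\mathscr{S}$ of positive $\mathfrak{m}$-measure on which reaching the target level costs strictly less transmitter power on $\mathscr{A}$ than on $\mathscr{B}$, yet the transmitter shields $\mathscr{B}$ while leaving $\mathscr{A}$ jammable, or has raised some shielded frames above $K$ while others sit below it. Choosing a small $\delta P_M$ that does not cross any breakpoint of the piecewise curves, I would transfer power from the inefficient frames to the efficient ones; the strict convexity/linearity structure of $\mathscr{J}_M(P_M,\mathbf{h})$ guarantees (exactly as concavity did in Theorem \ref{thm3_long_term}) that this strictly increases the aggregate required jamming power, pushing additional measure beyond the jammer's fixed budget $\mathcal{J}$ and thereby enlarging $\mathscr{S}\setminus\mathscr{X}$, a contradiction. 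This simultaneously shows that shielding proceeds from the cheapest frames upward and that all shielded frames are driven to the common level $K$.

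Finally, I would argue as in Theorem \ref{thm3_long_term} that the optimal $K$ lies in a compact interval $[0,K_{max}]$: there is a finite $K_{max}$ beyond which raising the per-frame target can no longer remove any realization from the jammer's reach, so the search for $K$ may be restricted to this interval and carried out numerically. The main obstacle I anticipate is the coupling in the third step: because $K$ is selected by the transmitter but the outage set is carved out by the jammer's \emph{best response}, the perturbation $\delta P_M$ alters both the required-jamming profile and which frames the jammer subsequently elects to cover, so the bookkeeping at the boundary between jammed and shielded frames, together with the verification that the perturbation is feasible and strictly improving, must be handled with care.
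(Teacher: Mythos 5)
Your overall skeleton matches the paper's: contradiction argument, the sets $\mathscr{S}$ and $\mathscr{X}$, the common ceiling $K$ on the required jamming power, and the compact search interval $[0,K_{max}]$. The budget-equality argument and the characterization of $K$ on $\mathscr{S}\setminus\mathscr{X}$ versus $\mathscr{S}\cap\mathscr{X}$ are essentially what the paper does. The gap is in the exchange step, and it is exactly the difficulty you flag at the end. You propose transferring a small $\delta P_M$ between frames and assert that convexity of $\mathscr{J}_M(P_M,\mathbf{h})$ guarantees an increase in aggregate required jamming power ``exactly as concavity did'' in the maximin case. It does not: your hypothesis compares the \emph{total} powers $P_M(\mathbf{h},K)$ needed to reach level $K$ on $\mathscr{A}$ versus $\mathscr{B}$, which says nothing about the \emph{instantaneous} slopes $d\mathscr{J}_M/dP_M$ at the current operating points, and it is the instantaneous slopes that govern a first-order perturbation. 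Moreover, even if a small transfer did raise the aggregate required jamming power, that alone does not lower $P_{out}$: the outage probability is the $\mathfrak{m}$-measure the jammer can cover with budget $\mathcal{J}$, and an infinitesimal change need not move any frame out of the jammer's reach.

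The paper's exchange is discrete, not infinitesimal. It cuts off transmission \emph{entirely} on a subset $\mathscr{B}'\subset\mathscr{S}\cap\mathscr{X}$ --- which is free, because $\mathscr{B}'$ is already jammed and hence in outage whether or not the transmitter is present --- and spends the recovered power raising a subset $\mathscr{A}'$ all the way to the level $K$, so those frames join the shielded set. The justification is the secant-slope chain
\begin{equation*}
\frac{K-J_{M,1}}{P_M(\mathbf{h}_1,K)-P_M(\mathbf{h}_1,J_{M,1})} \;\geq\; \frac{K}{P_M(\mathbf{h}_1,K)} \;>\; \frac{K}{P_M(\mathbf{h}_2,K)} \;\geq\; \frac{J_{M,2}}{P_M(\mathbf{h}_2,J_{M,2})},
\end{equation*}
where the outer inequalities come from convexity of $\mathscr{J}_M(P_M)$ (the secant from the current point up to $K$ is steeper than the secant from the origin to $K$, which in turn is steeper than the secant from the origin to the current point) and the middle one from your hypothesis on $\mathscr{A}$ versus $\mathscr{B}$. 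This shows the required jamming gained on $\mathscr{A}'$ per unit of transmitter power exceeds that surrendered on $\mathscr{B}'$, so the jammer with a fixed budget covers strictly less measure. To repair your proof, replace the $\delta P_M$ transfer with this cutoff-and-refill construction; the rest of your argument then goes through.
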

\vspace*{4pt}
\begin{proof}
As in the case of Theorem \ref{thm3_long_term}, we take a contradictory approach.
Instead of directly deriving the optimal strategy defined above, we show that any other strategy
not satisfying the theorem's requirements is suboptimal. Recall that $\mathscr{S}~ \textrm{and}~
\mathscr{X}\subset \mathbb{R}_+^M$ denote the sets of channel realizations over which
the transmitter and the jammer are present, respectively.

Suppose the transmitter picks a certain strategy $P_M(\mathbf{h})$.
Since the jammer's strategy is predictable, the transmitter already knows the jammer's
optimal strategy. Under this optimal strategy, the jammer should pick a set of frames $\mathscr{X}$ over
which it will invest non-zero power. This choice also results in a maximum level of \emph{required} jamming  power that
will actually be matched by the jammer. Denote this level by $K$.

Since the jammer's strategy is optimal, the \emph{required} jamming power outside the set $\mathscr{X}$
should be larger than or equal to $K$. Otherwise, the jammer would be wasting power and hence its strategy would not be optimal. 

But since the transmitter knows the jammer's strategy, it also knows that the jammer will not be present over 
$\mathscr{S} \setminus \mathscr{X}$, so the transmitter should make the \emph{required} jamming
power over $\mathscr{S} \setminus \mathscr{X}$ at most equal to $K$. Otherwise the transmitter would be wasting power. 
Hence, over $\mathscr{S} \setminus \mathscr{X}$ the transmitter should allocate power such that the required jamming power
is equal to $K$.

Next we show that if the transmitter's power allocation over $\mathscr{S}\bigcap \mathscr{X}$ is not done
according to the theorem, the transmitter's strategy is not optimal. For this, we assume that the transmitter's
strategy does not satisfy the theorem's requirements, and provide a method of improvement
(i.e. we prove sub-optimality).

If the theorem is not satisfied, than there exist two sets $\mathscr{A},\mathscr{B} \subset \mathscr{S}\bigcap \mathscr{X}$
of non-zero $\mathfrak{m}$-measure such that $P_M(\mathbf{h}_1,K)< P_M(\mathbf{h}_2,K) ~ \forall~\mathbf{h}_1\in \mathscr{A}
~\textrm{and}~\mathbf{h}_2\in \mathscr{B}$, and such that the required $J_M$ is less than $K$ on $\mathscr{A}$ and
$J_M>0$ on $\mathscr{B}$ cannot be part of the minimax solution.
Denote the original transmitter power allocation functions over $\mathscr{A}$ and $\mathscr{B}$
by $P_{M,0}^A(\mathbf{h})$ and $P_{M,0}^B(\mathbf{h})$ respectively.

For any $\mathbf{h}_1 \in \mathscr{A},~\mathbf{h}_2 \in \mathscr{B} $ and $J_{M,1}, J_{M,2}<K$, we have: 
\begin{eqnarray} \label{multineq}
\frac{K-J_{M,1}}{P_M(\mathbf{h}_1,K)-P_M(\mathbf{h}_1,J_{M,1})} \stackrel{a)}{\geq} \frac{K}{P_M(\mathbf{h}_1,K)} >{}\nonumber\\
{}\stackrel{b)}{>}\frac{K}{P_M(\mathbf{h}_2,K)} \stackrel{c)}{\geq} \frac{J_{M,2}}{P_M(\mathbf{h}_2,J_{M,2})},
\end{eqnarray}
where both $a)$ and $c)$ follow from the convexity of $\mathscr{J}_M(P_M)$ -- Proposition \ref{propconcave} -- and
$b)$ follows from the assumption in the beginning of this proof.

If the transmitter cuts off transmission over  a subset $\mathscr{B}'\subset \mathscr{B}$, it obtains
the excess power $\int_{\mathscr{B}'} P_M(\mathbf{h}) dm(\mathbf{h})$,
which it can allocate to a subset $\mathscr{A}' \subset \mathscr{A}$ such that
the required $J_M$ is equal to $K$ over $\mathscr{A}'$, i.e.
\begin{gather}
\int_{\mathscr{B}'} P_{M,0}^B(\mathbf{h}) dm(\mathbf{h})=\int_{\mathscr{A}'} \left[P_M(\mathbf{h},K)- 
P_{M,0}^A(\mathbf{h})\right]dm(\mathbf{h})
\end{gather}

Replacing $P_M(\mathbf{h}_1,J_{M,1})$ by $P_{M,0}^A(\mathbf{h})$ and
$P_M(\mathbf{h}_2,J_{M,2})$ by $P_{M,0}^B(\mathbf{h})$ in (\ref{multineq}), we see
the transmitter improves its strategy by forcing the jammer to allocate more power to the set
$\mathscr{A} \bigcup \mathscr{B}$, and hence decreases the probability of outage.
Note that since $\mathscr{B}'\subset \mathscr{S}\bigcap \mathscr{X}$, the set $\mathscr{B}'$ is in outage, regardless of whether the
transmitter is present or not. Thus, transmitter does not increase $P_{out}$ by cutting off transmission on $\mathscr{B}'$.

There exists a closed interval $[0,K_{max}] \in \mathbb{R}_+$ which includes the optimal value of $K$.
As in the maximin case, the existence of such a closed interval is required for constructing a numerical algorithm
that searches for the optimal $K$. The upper limit $K_{max}$ of this interval can be found and updated as follows.
First solve the problem for an arbitrarily chosen $K_0$, and determine the set $\mathscr{S}_0\setminus \mathscr{X}_0$ over
which the transmitter achieves reliable communication. We can set $K_{max}$ equal to the value of $K$ that yields a set
$\mathscr{S}$ of the same $\mathfrak{m}$-measure as the set $\mathscr{S}_0\setminus \mathscr{X}_0$. Note that if $K$ is increased
over this $K_{max}$, the outage probability is at least as large as that obtained for $K=K_0$ (and hence $K_0$ is a better choice).
\end{proof}
\vspace*{4pt}

\begin{figure}[b]
\centering
\includegraphics[scale=0.55]{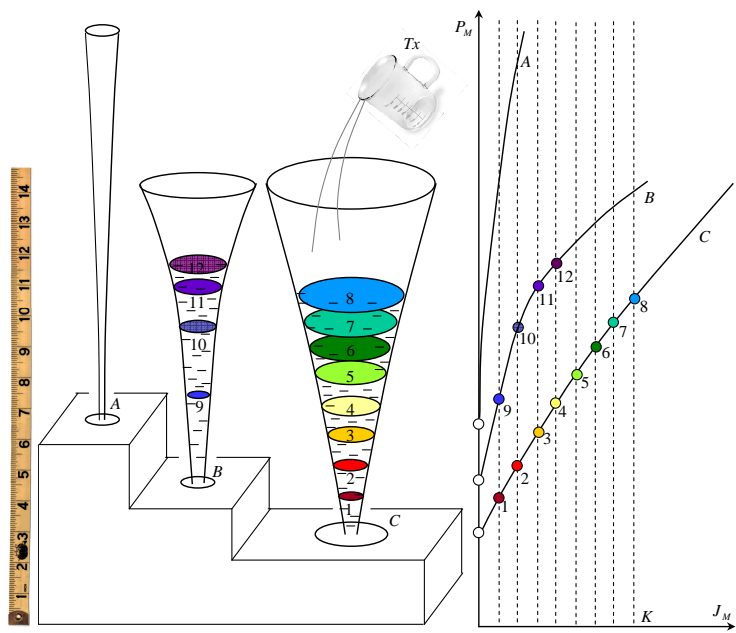}
\caption{Minimax vase filling.}\label{alg2fig}
\end{figure}

The algorithm in Table \ref{table2} which we used for our numerical results in Subsection \ref{ss5}
illustrates the application of Theorem \ref{thm4_long_term_M>1}.
In the description of the algorithm, we assume discrete jamming power levels $J_M^k$ with $k=0, 1, \ldots$
and $J_M^0=0$, as well as a discrete
and finite channel coefficient space. As a consequence, there exists a finite number of $\mathscr{P}_M(J_M)$ curves,
each characterizing one possible channel realization, and each completely determined by a finite vector whose components are
the values of $\mathscr{P}_M(J_M^k)$ for that particular channel realization.

A description of the technique is given in Figure \ref{alg2fig}, using the same vase analogy as in the maximin case.
This time, the transmitter does the pouring. Its obsession with the sum of the heights of the water levels imposes
a constraint on this sum. Under this constraint, the transmitter wants to use as much of the jammer's water as possible.
That is, the transmitter attempts to maximize the volume of water that can be accommodated by the vases, under the
constraint that the sum of the water levels' heights is less than some given value. Moreover, if the transmitter pours
water only in the thickest vase, it might not feel that it did enough damage to the jammer. Thus, the transmitter
needs to set a limit $K$.
The optimal strategy is to fill (up to volume level $K$) the thickest vase first (note that ``thickest'' refers to the fact
that when filled up to volume level $K$, the vase displays the lowest water level height, thus ``thickest'' is defined
with respect to $K$). The order in which the transmitter adds cups of water to the vases is depicted in Figure \ref{alg2fig}
by numerals from $1$ to $12$. The way the numerical algorithm works is illustrated in the right part of Figure \ref{alg2fig}.

\begin{table}\caption{Numerical algorithm for deriving the minimax solution.}\label{table2}
\begin{small}
\begin{tabular}{|p{8cm}|}
\hline
\noindent
Let $\mathbf{P}$ denote the matrix with rows representing the $\mathscr{P}_M(J_M^k)$
vectors for different channel realizations $\mathbf{h}$.
Let $K_{max}$ be value where searching for the optimal $K$ stops.\\
Initialize $K=0$.\\
\emph{while} $K\leq K_{max}$\\
\% Tx strategy:\\
The amount of transmitter power spent at each step is accumulated into the variable $P_c$.\\
~~~~~ Initialize $K=J_M^k$.\\
~~~~~ Initialize $P_c=0, ~p_T=0$.\\
~~~~~ \emph{while} Tx power constraint is satisfied ($P_c\leq \mathcal{P}$)\\
~~~~~~~~~~ Find row $j$ of $\mathbf{P}$ with least $k$-th component.\\
~~~~~~~~~~ Add probability of row $j$ to $p_T$.\\
~~~~~~~~~~ Add value of the $k$-th component, weighted\\
~~~~~~~~~~ by the probability above, to $P_c$.\\
~~~~~~~~~~ Delete row $j$ from matrix $\mathbf{P}$.\\
~~~~~ \emph{end}\\
\% Jx strategy (Jx jams frames where Tx is present,
randomly, until it reaches its power constraints):\\
~~~~~ $p_J=\frac{\mathcal{J}}{K}$.\\
~~~~~ $P_{out}(K)=p_T-p_J$.\\
~~~~~ Increment $K$.\\
\emph{end}\\
Select K that produces the least $P_{out}$.\\
\hline
\end{tabular}
\end{small}
\end{table}

\vspace*{4pt}
{\bf \emph{Particular case: $M=1$}} \vspace*{4pt}

For this simple scenario, there is no second level of power allocation.
All frames consist of only one block, and the $P_M(J_M)$ curves have the particular
affine form with parameter $h$ (the squared channel coefficient corresponding to this block):
\be
P_M=\frac{\exp(R)-1}{h}(J_M+\sigma_N^2).
\ee
Since the slopes of the $P_M(J_M)$ curves are constant with $J_M$ and the frames
with smaller values of the channel coefficients have larger characteristic slopes, we can easily
particularize Theorems \ref{thm3_long_term} and \ref{thm4_long_term_M>1}.

With the same notation  $\mathscr{X}\subset \mathbb{R}_+$ for the set of channel
realizations  over which the jammer invests non-zero power and
$\mathscr{S} \subset \mathbb{R}_+$ for the set of channel
realizations over which the transmitter uses non-zero power, we can now define the optimal
power allocation strategies.

For the maximin scenario, The jammer should deploy some $J_M(h)$ over $\mathscr{X}$ such that
the {\em required} $P_M(h)$ is constant over the whole interval $\mathscr{X}$. The purpose of
the jammer being active over $\mathscr{X} \setminus \mathscr{S}$ is to ''intimidate'' the
transmitter. The transmitter plays second, and hence takes advantage of the
jammer's weaknesses. It always chooses to be active on the subset
of $\mathscr{X}$ on which the {\em required} $P_M(h)$ is least.
This is why the optimal jammer strategy is to display no weakness,
i.e. to make $P_M(h)$ constant over $\mathscr{X}$.
These considerations are formalized in Proposition \ref{thm2_long_term} below.

\vspace*{4pt}
\begin{prop}\label{thm2_long_term}
In the maximin scenario, the jammer should adopt such a strategy as to make the transmitter's
best choice of $\mathscr{S}$ intersect $\mathscr{X}$ on the 
the left-most part of $\mathscr{S}$, and the required transmitter power
equal to some constant $K$ on $\mathscr{X} \bigcap \mathscr{S}$ and to
$(c-1)\sigma_N^2/h$ on $\mathscr{S} \setminus \mathscr{X}$.

Transmitting $J_M(h)$, satisfying the power constraint with equality,
such that the transmitter power required for reliable communication is $P_M(h)=K,\forall h\in [h_1^*,h_2^*]$, and
$P_M(h)=(c-1)\sigma_N^2/h, \forall h\in [0,\infty)\setminus (h_1^*,h_2^*]$, for some $h_1^*< h_2^* \in \mathbb{R}_+$ and
some constant $K \in \mathbb{R}_+\bigcup \{\infty\}$ is an optimal jammer strategy for the maximin problem.
(Note that $P_M(h)$ should be continuous at $h_1^*$.)

The values $K,~h_1^*$ and $h_2^*$ that maximize the outage probability can be found by solving the following problem:

\begin{gather}
\textrm{Find $\min_{K} \int_{h_0}^{\infty} f(h)dh$, where} \nonumber\\
\textrm{$h_0$ is given by $\int_{h_0}^{h_2}K f(h)dh +\int_{h_2}^{\infty}\frac{c-1}{h}\sigma_N^2 f(h)dh  = \mathcal{P}$,}\\
\textrm{$h_1$ is given by $h_1=\frac{c-1}{K}\sigma_N^2$,}\\
\textrm{and $h_2$ is given by $\int_{h_1}^{h_2}\left( \frac{hK}{c-1}-\sigma_N^2 \right) f(h)dh  = \mathcal{J}$}.
\end{gather}
\end{prop}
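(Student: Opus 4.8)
The plan is to obtain this proposition as the $M=1$ specialization of Theorem~\ref{thm3_long_term}, converting its abstract ``steepest-slope-first, capped-at-$K$'' prescription into the explicit interval structure claimed here, and then to read off the three constraint equations that pin down $h_1^*,h_2^*$ and the transmitter cut-off as functions of $K$. First I would record the geometry of the $\mathscr{P}_M(J_M)$ curves for $M=1$. With a single block, $I_M\geq R$ is equivalent to $P\geq\frac{c-1}{h}(J+\sigma_N^2)$, so the characteristic curve is the affine map $\mathscr{P}_M(J_M,h)=\frac{c-1}{h}(J_M+\sigma_N^2)$, whose slope $\frac{c-1}{h}$ is constant in $J_M$ and strictly decreasing in $h$. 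Hence ``steepest slope'' in Theorem~\ref{thm3_long_term} translates to ``smallest $h$,'' and the unjammed required power at channel $h$ is $\mathscr{P}_M(0,h)=(c-1)\sigma_N^2/h$, a strictly decreasing function of $h$.

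Next I would derive the jammer's plateau. A frame is eligible to be jammed under the cap ``required power $\leq K$'' only if its unjammed required power already satisfies $(c-1)\sigma_N^2/h\leq K$, i.e. $h\geq h_1^*:=(c-1)\sigma_N^2/K$; frames with $h<h_1^*$ are left untouched, so $P_M(h)=(c-1)\sigma_N^2/h$ there. Among eligible frames the steepest are those with the smallest $h$, so Theorem~\ref{thm3_long_term} has the jammer fill the eligible frames from $h_1^*$ rightward, raising each to the ceiling $K$, until its budget is spent at some $h_2^*$. This yields $\mathscr{X}=[h_1^*,h_2^*]$, required power identically $K$ on $\mathscr{X}$, and per-frame jamming power $J_M(h)=\frac{hK}{c-1}-\sigma_N^2$; integrating against the power constraint (satisfied with equality, again by Theorem~\ref{thm3_long_term}) gives $\int_{h_1^*}^{h_2^*}\bigl(\frac{hK}{c-1}-\sigma_N^2\bigr)f(h)\,dh=\mathcal{J}$. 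Continuity at $h_1^*$ is automatic, since there $J_M(h_1^*)=0$ and $(c-1)\sigma_N^2/h_1^*=K$. The essential point to justify is why the jammer equalizes the required power at exactly $K$ across the \emph{whole} of $\mathscr{X}$, including the ``intimidation'' region $\mathscr{X}\setminus\mathscr{S}$ where the transmitter will not transmit: if any frame of $\mathscr{X}$ had required power strictly below $K$, the transmitter---playing second---would relocate its activity onto that cheaper frame and lower the outage, so a flat plateau at $K$ is the jammer's only ``weakness-free'' response. This is precisely the exchange step in the proof of Theorem~\ref{thm3_long_term}, specialized to the affine case.

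Finally I would compute the transmitter's best response and assemble the optimization. Facing the profile that equals $(c-1)\sigma_N^2/h$ for $h<h_1^*$ (which exceeds $K$), equals $K$ on $[h_1^*,h_2^*]$, and equals $(c-1)\sigma_N^2/h<K$ for $h>h_2^*$, the transmitter fills cheapest-first: it takes all of $(h_2^*,\infty)$ and then, being indifferent among the plateau frames (all at cost $K$), takes a sub-interval of $[h_1^*,h_2^*]$, which we may take to be $[h_0,h_2^*]$; this realizes $\mathscr{S}=[h_0,\infty)$ with $\mathscr{X}\cap\mathscr{S}=[h_0,h_2^*]$ the left-most part of $\mathscr{S}$. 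Writing the transmitter's budget with equality gives $\int_{h_0}^{h_2^*}Kf(h)\,dh+\int_{h_2^*}^{\infty}\frac{c-1}{h}\sigma_N^2 f(h)\,dh=\mathcal{P}$, which determines $h_0$ for each $K$. The outage is then $P_{out}=\mathfrak{m}([0,h_0))=1-\int_{h_0}^\infty f(h)\,dh$, so maximizing outage coincides with minimizing $\int_{h_0}^\infty f(h)\,dh$ over the single free parameter $K$, with $h_1^*,h_2^*,h_0$ slaved to $K$ by the three displayed equations.

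I expect the main obstacle to be the rigorous justification of the plateau/intimidation claim together with the transmitter's indifference within $\mathscr{X}$ (and hence the legitimacy of choosing $\mathscr{S}=[h_0,\infty)$): one must verify that no rearrangement of the jammer's power off the flat level $K$, and no relocation of transmitter activity onto $h<h_1^*$, can simultaneously respect both budgets and reduce $\int_{h_0}^\infty f$. This is the step where the monotonicity of the slope in $h$ and the tightness of both power constraints must be used in tandem, exactly as in the exchange arguments underlying Theorems~\ref{thm3_long_term} and~\ref{thm4_long_term_M>1}.
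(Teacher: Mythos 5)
Your proposal is correct and follows essentially the same route as the paper: the paper itself presents this proposition as a direct particularization of Theorem~\ref{thm3_long_term} to the affine $M=1$ curves $\mathscr{P}_M(J_M,h)=\frac{c-1}{h}(J_M+\sigma_N^2)$, with the same ``steepest slope $=$ smallest $h$'', flat-plateau-at-$K$ (``display no weakness'') and cheapest-frames-first best-response reasoning you give, and the same reading-off of the three constraint equations. If anything, your write-up makes explicit the exchange/indifference justifications that the paper leaves implicit in its informal discussion preceding the proposition.
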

$\blacksquare$
\vspace*{4pt}

The power allocation is depicted in Figure \ref{figure1}. The convex decreasing curve represents the original
required transmitter power, without the presence of a jammer and satisfies the equation $P_M=(c-1)\sigma_N^2/h$.
Notice how by picking some $K$, we can determine $h_1$, $h_2$ and $h_0$ (in this order), and then find the probability of
outage as $P_{out}(h_1)=1-\mathfrak{m}[(h_0,\infty)]$.
The optimal $K$, resulting in $h_1^*$, $h_2^*$ and $h_0^*$, is the one minimizing the $\mathfrak{m}$-measure
of the set $(h_0,\infty)$.
\begin{figure}[h]
\centering
\includegraphics[scale=1.0]{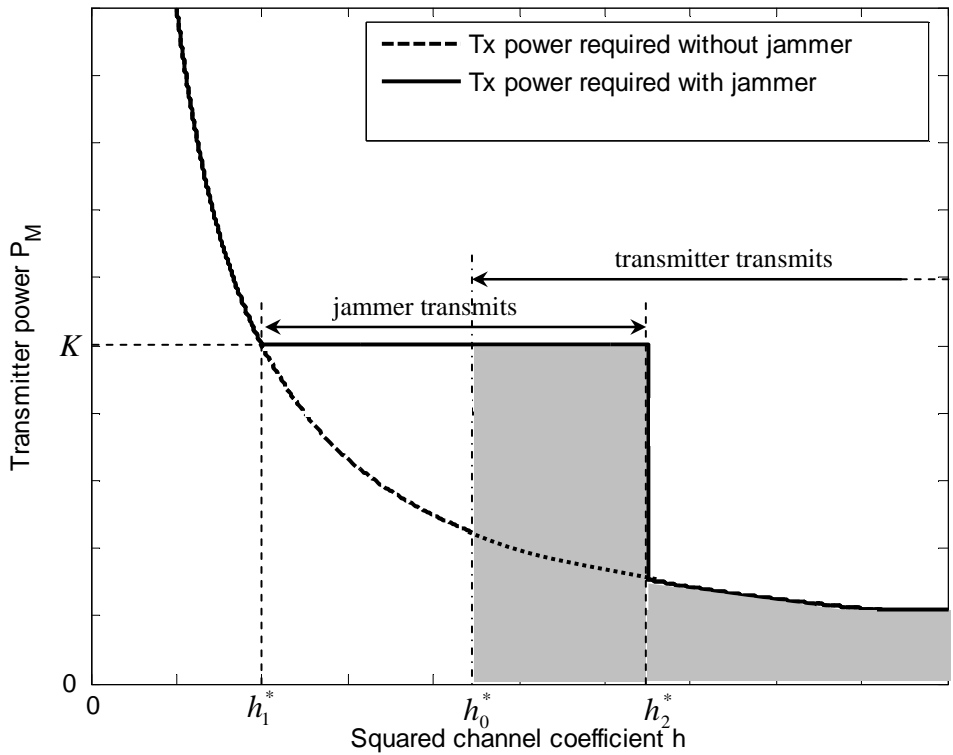}
\caption{Maximin solution for $M=1$ - power distribution between frames}\label{figure1}
\end{figure}

For the minimax scenario the jammer will not transmit any power over a frame if outage is not going to be induced or if
the transmitter is not present, i.e. $\mathscr{X}\subset \mathscr{S}$.
The jammer will start allocating power to the frames over which an outage is easiest to induce,
and go on with this technique until the average power reaches the limit set by its power constraint.
Obviously, the jammer prefers the frames for which the required $J_M(h)$ is less.
The optimal transmitter's strategy is to allocate its power such that the required
$J_M(h)$ is constant on the whole set $\mathscr{S}$, and hence to display no weakness.

These considerations are formalized in Proposition \ref{thm1_long_term} below.

\vspace*{4pt}
\begin{prop}\label{thm1_long_term}
For the minimax scenario, the transmitter's optimal way to allocate its power is to make the required jamming power remain 
equal to some constant $K$ on all of $\mathscr{X}$.
Transmitting $P_M(h)$, satisfying the power constraint with equality,
such that the required $J_M(h)$ equals $K$ for $h\in [h_x^*,\infty)$, and $J_M(h)=0~\forall h\in [0,h_x^*)$,
for some $h_x^* \in \mathbb{R}_+$, is an optimal transmitter strategy for the minimax problem.
The values $K$ and $h_x^*$ that minimize the outage probability can be found by solving the following problem numerically:

\begin{gather}
\textrm{Find $\max_{K} \int_{h_0}^{\infty} f(h)dh$, where} \nonumber\\
\textrm{$h_0$ is given by $\int_{h_x}^{h_0} Kf(h)dh = \mathcal{J}$,}\\
\textrm{$h_x$ is given by $\int_{h_x}^{\infty} \frac{(c-1)(K+\sigma_N^2)}{h}f(h)dh = \mathcal{P}$}. 
\end{gather}
\end{prop}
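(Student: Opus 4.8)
The plan is to obtain Proposition~\ref{thm1_long_term} as the $M=1$ specialization of the general minimax result in Theorem~\ref{thm4_long_term_M>1}, and then to make the resulting scalar optimization explicit. First I would invoke the affine form of the characteristic curve noted above for $M=1$, namely $P_M=\frac{c-1}{h}(J_M+\sigma_N^2)$, whose inverse is the required-jamming-power function $\mathscr{J}_M(P_M,h)=\frac{hP_M}{c-1}-\sigma_N^2$. Setting the required jamming level to a constant $K$ and solving $\mathscr{J}_M(P_M,h)=K$ gives the transmitter power needed to force that level, $P_M(h)=\frac{(c-1)(K+\sigma_N^2)}{h}$, which is strictly decreasing in $h$. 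By Theorem~\ref{thm4_long_term_M>1}, the optimal transmitter raises the required jamming power to a common level $K$ on its active set, activating frames in increasing order of the transmitter power needed to do so; since that power decreases in $h$, the active set has the form $\mathscr{S}=[h_x,\infty)$ (with the transmitter absent, hence zero required jamming, below $h_x$). The active-set power constraint, met with equality as asserted by the theorem, then yields the defining equation $\int_{h_x}^{\infty}\frac{(c-1)(K+\sigma_N^2)}{h}f(h)\,dh=\mathcal{P}$ for $h_x$.

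Next I would analyze the jammer's best response. On $\mathscr{S}=[h_x,\infty)$ the required jamming power is the constant $K$, so the jammer is indifferent among these frames: whatever subset $\mathscr{X}\subset\mathscr{S}$ it jams with the required power $K$, its constraint $\int_{\mathscr{X}}K f(h)\,dh=\mathcal{J}$ forces $\mathfrak{m}(\mathscr{X})=\mathcal{J}/K$, so the induced outage measure depends only on $K$ and not on the particular subset. I would therefore take the representative contiguous choice $\mathscr{X}=[h_x,h_0)$, with $h_0$ defined by $\int_{h_x}^{h_0}Kf(h)\,dh=\mathcal{J}$; this also confirms $\mathscr{X}\subset\mathscr{S}$ as anticipated in the preceding discussion. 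The outage set is then $[0,h_x)\cup[h_x,h_0)=[0,h_0)$ (the transmitter is absent below $h_x$ and successfully jammed on $[h_x,h_0)$, while it communicates reliably on $[h_0,\infty)$), so the success probability is exactly $\int_{h_0}^{\infty}f(h)\,dh$, and minimizing outage is equivalent to maximizing this quantity.

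Finally I would reduce the problem to the stated scalar search. For each fixed $K\geq 0$, monotonicity and continuity of the two integrals in $h_x$ and $h_0$ (guaranteed by the positivity of $f$) make $h_x(K)$ and $h_0(K)$ the unique solutions of their respective equations, so the whole configuration is a function of $K$ alone and the problem becomes $\max_K\int_{h_0(K)}^{\infty}f(h)\,dh$. Existence of an optimizer over a compact interval $[0,K_{max}]$ follows from the same $K_{max}$ construction used in the proof of Theorem~\ref{thm4_long_term_M>1}. The main obstacle, and the step I would treat most carefully, is the jammer's best-response argument: one must verify that the transmitter's ``display no weakness'' strategy (constant required jamming power on $\mathscr{S}$) indeed leaves the jammer indifferent, and that optimality of this strategy against all transmitter deviations is exactly what the contradiction argument of Theorem~\ref{thm4_long_term_M>1} establishes once specialized to the affine $M=1$ curve; the remaining work is the routine bookkeeping of the two constraints that pin down $h_x$ and $h_0$.
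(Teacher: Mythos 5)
Your proposal is correct and follows essentially the same route as the paper: the paper obtains Proposition~\ref{thm1_long_term} by particularizing Theorem~\ref{thm4_long_term_M>1} to the affine $M=1$ curve $P_M=\frac{c-1}{h}(J_M+\sigma_N^2)$, with the same observations that the active set is a tail interval $[h_x,\infty)$, that the jammer is indifferent over the constant-$K$ set (so $\mathfrak{m}(\mathscr{X})=\mathcal{J}/K$ and the interval representation $[h_x,h_0)$ is without loss of generality), and that the problem reduces to a one-dimensional search over $K$.
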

$\blacksquare$
\vspace*{4pt}

The numerical problem is described in Figure \ref{figure2}.
Notice how by picking some $K$, we can determine $h_x$ and $h_0$ (in this order), and then find the probability of
outage as $P_{out}(h_1)=1-\mathfrak{m}[(h_0,\infty)]$.
The optimal $K$, resulting in $h_x^*$  and $h_0^*$, is the one maximizing the $\mathfrak{m}$-measure of the set $(h_0,\infty)$.
Note that the jammer does not necessarily have to jam on an interval of the form $[h_x,h_0]$. The jammer's choice space
(the set of frames out of which the jammer picks its set $\mathscr{X}$) is an indifferent one, i.e. the jammer can randomly pick
$\mathscr{X}\subset [h_x,\infty)$ as long as its measure satisfies $K\mathfrak{m}(\mathscr{X})=\mathcal{J}$. However, for the
purpose of computing the outage probability, the representation of $\mathscr{X}$ as an interval is convenient and incurs no
loss of generality.

\begin{figure}[h]
\centering
\includegraphics[scale=1.0]{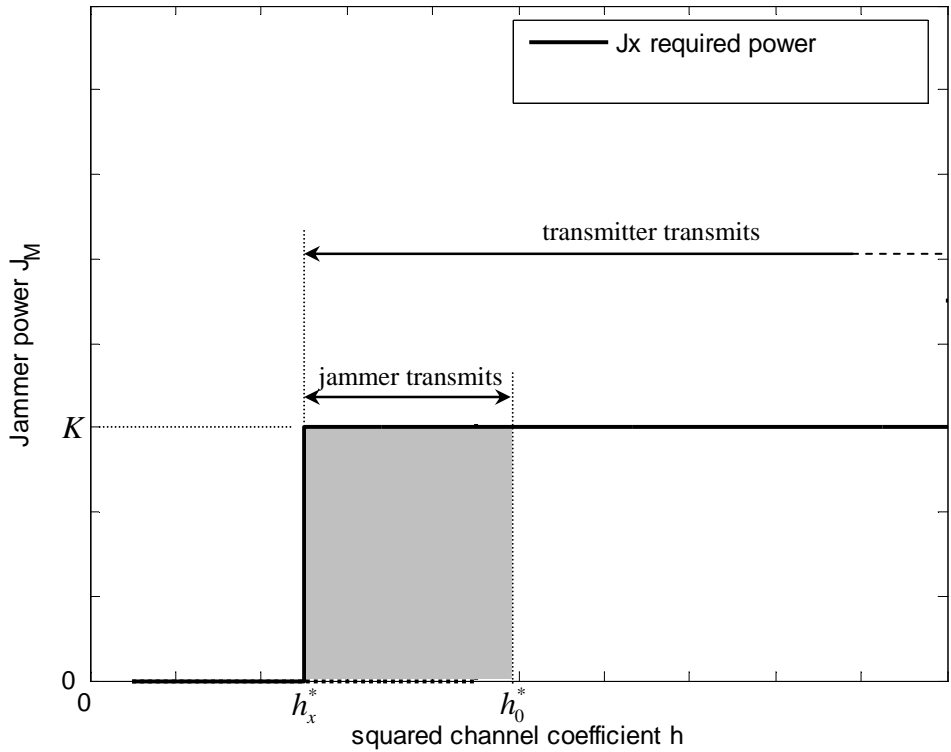}
\caption{Minimax solution for $M=1$ - power distribution between frames}\label{figure2}
\end{figure}


\subsection{Numerical Results}\label{ss5}

We have computed
the outage probabilities for both minimax and maximin problems when
$M=1$ and $M=2$.  The channel coefficients are assumed i.i.d. exponentially
distributed with parameter $\lambda=1/6$.  Figure~\ref{simulation1} 
shows the outage probability vs. the maximum allowable average transmitter
power $\mathcal{P}$ for fixed $\mathcal{J}=10$ when $R=1$.

For comparison purposes, we also plotted the results for the case when $M=\infty$, which are
readily available from Part I of this paper \cite{myself3}.

\begin{figure}
\centering
\includegraphics[scale=1.0]{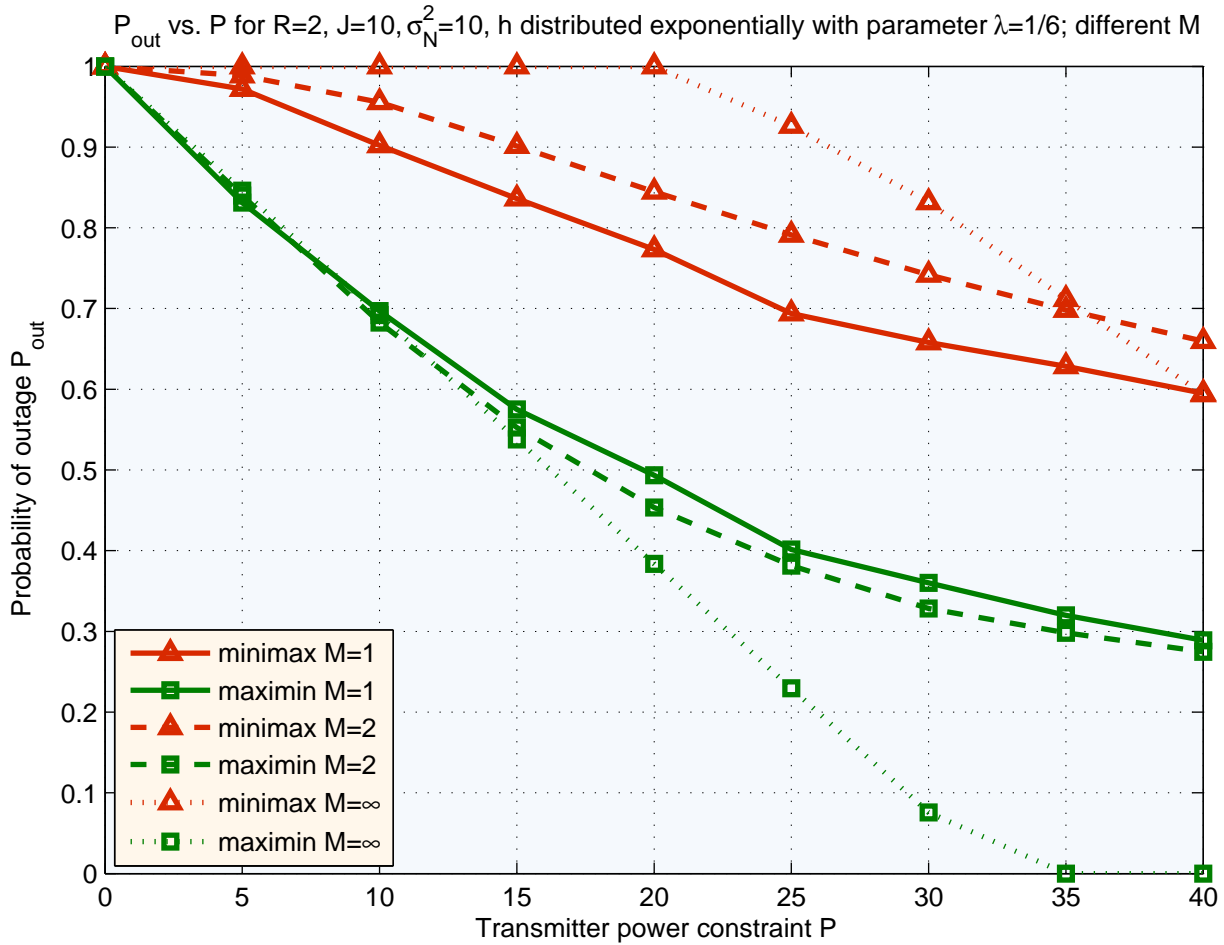}
\caption{Outage probability vs. $\mathcal{P}$ for $M=1$ and $M=2$ and $M\to\infty$ when $\mathcal{J}=10$ -- minimax and maximin cases.
We take $R=2$.}\label{simulation1}
\end{figure}

Numerical results demonstrate a sharp difference between the minimax solutions and the maxmin solutions, which
demonstrates the non-existence of Nash-equilibria of pure strategies for our two-person zero-sum game with full CSI.

Note the behavior of the outage probability when the number of blocks per frame $M$ is increased.
At low transmitter powers, the increase of $M$ produces an increase in the outage probability for both the minimax,
and the maximin scenarios.

On the contrary, at higher transmitter powers a lower outage probability is obtained for both the minimax
and the maximin cases when $M$ is larger. This behavior can be summarized as follows: the more powerful player
will use the available diversity to its own advantage.


\section{CSI Available to All Parties. Jamming Game with Long-Term Power Constraints: Mixed Strategies}\label{section3}

We have already seen that the maximin and minimax solutions of the jamming game
when only pure strategies are allowed do not agree, and thus our game has no
Nash equilibrium of pure strategies. However, recall that the solution of the minimax problem with pure strategies
can often be a good characterization of a practical jamming situation (e.g. when the jammer does
not transmit unless it senses that the transmitter is on) and can always serve as a lower bound
on the system's performance.

This aside, a Nash equilibrium is still the preferred characterization of jamming games, and
since such an equilibrium exists for our problem only when mixed strategies are allowed, the current section is dedicated
to the derivation of such a saddlepoint.

Unlike the fast fading scenario of \cite{myself3}, the frames in our slow-fading parallel-channels model
are not equivalent. Each frame is characterized by a different realization of the channel vector $\mathbf{h}$.
This is why our present scenario is even more involved than the one in \cite{myself3}, and requires three levels
of power control instead of two.

As before, our approach to the problem is a contradictory one. We study the power control levels starting with the
``finest'' one, and show that if our conditions for power allocations are not satisfied, then the strategy is suboptimal.
The reason why an additional (third) level of power control appears here is a combination of the facts that
we study mixed strategies and the frames are not all equivalent as in \cite{myself3}.
Namely, to cover all possible probabilistic strategies, we need to dedicate a level of power control to the
power allocation between frames with the same channel realizations (i.e. equivalent frames) and an additional
level of power control for the power allocation between frames with different channel realizations. 
Along with the power allocation within frames, these problems cover all possible cases.

\subsection{Power allocation within a frame}\label{ss3_1}

The third level of power control deals with the optimal power allocation between the blocks in a frame, once
the transmitter is given the channel vector $\mathbf{h}$ characterizing the frame and allocated power $P_M$, and
the jammer is given the channel vector and its allocated power $J_M$.

At this point, the third level of power control resembles the two-player, zero-sum game of (\ref{game21})
and (\ref{game22}) having the mutual information calculated over a frame $I_M$ as cost function.
However, none of the players knows the other player's constraints, because $(P_M,J_M)$ is a random event.
Theorem \ref{thm1_long_term_mixed_sf} below provides the optimal transmitter/jammer strategies for
power allocation within a frame.

\vspace*{4pt}
\begin{thm}\label{thm1_long_term_mixed_sf}
Given a frame with channel vector $\mathbf{h}$ and a realization $(p_M,j_M)$ of $(P_M,J_M)$,
let $\mathscr{P}_M(j_M)$ denote the solution of \emph{Problem 1}
in Section \ref{section2} with $J_M=j_M$, and $\mathscr{J}_M(p_M)$ denote the solution of \emph{Problem 2}
in Section \ref{section2} with $P_M=p_M$.

The transmitter's optimal strategy is the solution of the game
in (\ref{game21}) and (\ref{game22}), where the jammer is constrained to $\frac{1}{M}\sum_{m=1}^{M-1}J_m \leq \mathscr{J}_M(p_M)$ and
the transmitter is constrained to $\frac{1}{M}\sum_{m=1}^{M-1}P_m \leq p_M$.
The jammer's optimal strategy is the solution of the game in (\ref{game21}) and (\ref{game22}), where the
transmitter is constrained to $\frac{1}{M}\sum_{m=1}^{M-1}P_m \leq \mathscr{P}_M(j_M)$ and the jammer is constrained to
$\frac{1}{M}\sum_{m=1}^{M-1}J_m \leq j_M$.
\end{thm}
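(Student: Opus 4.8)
The plan is to adopt the paper's contradiction methodology at this third (finest) level: I would hold one player at the block allocation the theorem prescribes, let the other deviate, and show the deviation cannot strictly help. The key reduction is that, once the frame powers $(p_M,j_M)$ are fixed, all that matters for $P_{out}$ is the single outage indicator $\mathbf{1}[I_M(\mathbf{h},\{P_m\},\{J_m\})<R]$ of that frame. Since the block allocation chosen for a given $(p_M,\mathbf{h})$ (respectively $(j_M,\mathbf{h})$) influences only the frames carrying those parameters, and these enter $P_{out}$ additively through the $\mathfrak{m}$-measure and the higher-level distributions of $P_M$ and $J_M$, it suffices to minimise (respectively maximise) the \emph{conditional} outage probability of a single frame.

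First I would record the two structural facts furnished by the earlier results. By Proposition \ref{circ_pr_thm} and the duality of \emph{Problem 1} and \emph{Problem 2}, the jammer allocation solving \emph{Problem 1} at power $j_M$ (unique by Proposition \ref{uniqueness_prop}) coincides with the Jx-side equilibrium allocation of the game (\ref{game21})--(\ref{game22}) at constraints $(\mathscr{P}_M(j_M),j_M)$; I will call it $J^*(j_M)$. It is \emph{maximally harmful}: against it the transmitter needs at least $\mathscr{P}_M(j_M)$ total power to escape outage, and any budget below $\mathscr{P}_M(j_M)$ fails under every allocation. Dually, the transmitter allocation solving \emph{Problem 2} at power $p_M$ is the Tx-side equilibrium allocation at $(p_M,\mathscr{J}_M(p_M))$, which I will call $P^*(p_M)$; by the left saddle-point inequality $I_M(P^*,J)\geq I_M(P^*,J^*)=R$ it \emph{guarantees} $I_M\geq R$ against every jammer allocation of total power at most $\mathscr{J}_M(p_M)$. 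Finally, Propositions \ref{circ_pr_thm} and \ref{propconcave} supply the inverse, strictly monotone relationship $p_M\geq \mathscr{P}_M(j_M)\iff j_M\leq \mathscr{J}_M(p_M)$, which aligns these two thresholds.

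Next I would run the transmitter's optimality argument. Fix the jammer at $J^*(\cdot)$ and consider a frame of power $p_M$. For each $j_M>\mathscr{J}_M(p_M)$ we have $p_M<\mathscr{P}_M(j_M)$, so by the maximal-harm property no allocation of the budget $p_M$ avoids outage---this portion of the outage set is unavoidable. For each $j_M\leq \mathscr{J}_M(p_M)$, the allocation $P^*(p_M)$ is robust up to jamming budget $\mathscr{J}_M(p_M)$, hence in particular attains $I_M\geq R$ against the realised $J^*(j_M)$, whose total power is $j_M\leq \mathscr{J}_M(p_M)$. Thus $P^*(p_M)$ realises the \emph{smallest possible} outage set $\{j_M>\mathscr{J}_M(p_M)\}$; any deviation $P'(p_M)$ still suffers outage there and may fail additionally on a subset of $\{j_M\leq \mathscr{J}_M(p_M)\}$ where it is not robust, so integrating the indicator against the jammer's conditional law of $j_M$ can only (weakly) raise $P_{out}$. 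The jammer's half is the mirror image: with the transmitter fixed at $P^*(\cdot)$, the allocation $J^*(j_M)$ maximises the required transmitter power and so induces outage on the \emph{largest} set $\{p_M<\mathscr{P}_M(j_M)\}$, while any deviation shrinks it and weakly lowers $P_{out}$.

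The crux, I expect, will be the step that $P^*(p_M)$ is robust not merely against the worst allocation at the full budget $\mathscr{J}_M(p_M)$ but simultaneously against the entire family $\{J^*(j_M):j_M\leq \mathscr{J}_M(p_M)\}$ the committed jammer might actually deploy; this is exactly what the saddle-point inequality of (\ref{game21})--(\ref{game22}) delivers, and it is what permits the transmitter to commit to one block allocation without knowing the realised $j_M$. Closing the proof then only requires checking mutual consistency on the boundary: when $j_M=\mathscr{J}_M(p_M)$, equivalently $p_M=\mathscr{P}_M(j_M)$, the two games carry identical constraints, so by Proposition \ref{circ_pr_thm} the prescribed $P^*$ and $J^*$ are the \emph{same} saddle point and yield $I_M=R$. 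This confirms that the two best-response characterisations in the statement describe a single self-consistent equilibrium rather than conflicting prescriptions, completing the argument.
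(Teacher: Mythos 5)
Your proposal is correct and follows essentially the same route as the paper's own proof: fix one player at the prescribed block allocation, split on whether the opponent's realized frame power lies above or below the matched threshold ($j_M \lessgtr \mathscr{J}_M(p_M)$, equivalently $p_M \gtrless \mathscr{P}_M(j_M)$ via the inverse relationship from Proposition \ref{circ_pr_thm}), and conclude that each strategy is a best response regardless of the realized opponent budget, i.e.\ a Bayes-type equilibrium of the incomplete-information game within the frame. The one (minor) divergence is the mechanism for the robustness step: where you invoke the saddle-point inequality of the game (\ref{game21})--(\ref{game22}) to show that the transmitter's allocation survives \emph{every} jammer allocation of total power at most $\mathscr{J}_M(p_M)$, the paper instead compares the jammer's best response at the smaller budget $j_M$ with the equilibrium allocation at $\mathscr{J}_M(p_M)$ componentwise, via the monotonicity of the optimal jamming allocation (Lemma \ref{propapp51}) and the monotonicity of $I_M$ in $\{J_m\}$ --- your route is slightly more direct and handles arbitrary, not merely best-response, deviations in one stroke.
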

\vspace*{4pt}
\begin{proof}
The proof is very similar to the proof of Theorem 5 of \cite{myself3} and is deferred to
Appendix \ref{app3}
\end{proof}


\subsection{Power allocation between frames with the same channel vector}\label{ss3_3}

Due to the form of the optimal second level power allocation strategies described in the previous subsection,
the probability that a given frame is in outage can be expressed as
\be\label{alternpoutexpr}
P_{out,\mathbf{h}}=Pr\{J_M\geq \mathscr{J}_M(P_M)\}={}\nonumber\\
{}=1-Pr\{P_M\geq \mathscr{P}_M(J_M)\},
\ee
where $P_M(j_M)$ is the strictly increasing, unbounded and concave function (see Proposition \ref{propconcave})
that characterizes the frame. Note that a pair of strategies can only be optimal if $P_{out,\mathbf{h}}$ above
is the Nash equilibrium of a jamming game played over the frames characterized by the same channel vector $\mathbf{h}$.
This means that if the transmitter and jammer decide to allocate powers $P_{M,\mathbf{h}}$ and $J_{M,\mathbf{h}}$ respectively
to frames with channel vector $\mathbf{h}$, they should not allocate the same amount of power to each of these frames.
Instead, they should use power levels given by the realizations of two random variables $P_M$ and $J_M$ with
distribution functions $\left(F_P(p_M),F_J(j_M)\right)$ given in the following theorem.

\vspace*{4pt}
\begin{thm}\label{thm2_long_term_mixed}
The unique Nash equilibrium of mixed strategies of the two-player, zero-sum game
with average power constraints described by
\be\label{game_ss3_2}
\min_{F_P(p_M):\expec_{F_P}P_M\leq P_M(\mathbf{h})}\max_{F_J(j_M):\expec_{F_J}J_M\leq J_M(\mathbf{h})}P_{out,\mathbf{h}},
\ee
where $\expec_{F_P}$ and $\expec_{F_J}$ denote expectations with respect to the distributions $F_P(p_M)$ and $F_J(j_M)$,
is attained by the pair of strategies $\left(F_P(p_M),F_J(j_M)\right)$ satisfying:
\be\label{gp1_01}
F_P(\mathscr{P}_M(y))\sim k_p\mathbb{U}([0,2v])(y)+(1-k_p)\Delta_0(y),
\ee 
\be\label{gp1_02}
F_J(\mathscr{J}_M(x))\sim k_j\mathbb{U}([0,J_M(2v)])(x)+(1-k_j)\Delta_0(x),
\ee 
where $\mathbb{U}([r,t])(\cdot)$ denotes the CDF of a uniform distribution
over the interval $[r,t]$, and $\Delta_0(\cdot)$ denotes the CDF of a Dirac distribution (i.e. a step function),
and the parameters $k_p,k_j\in [0,1]$ and $v\in [\max \{J_{M,\mathbf{h}}, \mathscr{J}_M(P_{M,\mathbf{h}})/2 \}, \infty)$ are uniquely
determined from the following steps:
\begin{enumerate}
\item Find the unique value $v_0$ which satisfies:
\be
P_{M,\mathbf{h}}J_{M,\mathbf{h}}=[\mathscr{P}_M(2v_0)-P_{M,\mathbf{h}}](2v_0-J_{M,\mathbf{h}}).
\ee
\item Compute $S(v_0)=\int_{0}^{2v_0}\mathscr{P}_M(y)dy-2v_0P_{M,\mathbf{h}}$.
\item If $S(v_0)<0$, then $v$ is the unique solution of 
\be
\int_{0}^{2v}\mathscr{P}_M(y)dy-2vP_{M,\mathbf{h}}=0,
\ee
\be
k_p=1
\ee
and 
\be
k_j=\frac{J_{M,\mathbf{h}}\mathscr{P}_M(2v)}{2v[\mathscr{P}_M(2v)-P_{M,\mathbf{h}}]}.
\ee
\item If $S(v_0)=0$ then $v=v_0$, $k_p=k_j=1$.
\item If $S(v_0)>0$, then $v$ is the unique solution of 
\be
\int_{0}^{2v}\mathscr{P}_M(y)dy-\mathscr{P}_M(2v)(2v-J_{M,\mathbf{h}})=0,
\ee
\be
k_p=\frac{2vP_{M,\mathbf{h}}}{\mathscr{P}_M(2v)[2v-J_{M,\mathbf{h}}]}
\ee
and 
\be
k_j=1.
\ee
\end{enumerate}
\end{thm}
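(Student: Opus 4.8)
The plan is to collapse this per-frame game to a one-dimensional \emph{comparison game} by a change of variable, verify the proposed pair is a saddle point through a mutual-indifference argument, and then pin down $v,k_p,k_j$ from the two average-power constraints together with a dichotomy forbidding both players from idling at once. Since $\mathscr{P}_M(\cdot,\mathbf{h})$ is strictly increasing, continuous and unbounded (Proposition~\ref{propconcave}), its inverse $\mathscr{J}_M(\cdot,\mathbf{h})$ exists and the outage event of (\ref{alternpoutexpr}) rewrites as $\{J_M\geq\mathscr{J}_M(P_M)\}=\{J_M\geq Y\}$ with $Y:=\mathscr{J}_M(P_M)$. First I would adopt $Y$ as the transmitter's decision variable, so that $P_M=\mathscr{P}_M(Y)$ and the budget reads $\expec[\mathscr{P}_M(Y)]\leq P_{M,\mathbf{h}}$, and the dual variable $V:=\mathscr{P}_M(J_M)$ as the jammer's, so that $J_M=\mathscr{J}_M(V)$ and $\expec[\mathscr{J}_M(V)]\leq J_{M,\mathbf{h}}$. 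In these coordinates the payoff is $\Pr\{V\geq \mathscr{P}_M(Y)\}$, and the claimed laws assert exactly that $Y$ is uniform on $[0,2v]$ plus a ``silent'' atom, while $V$ is uniform on $[0,\mathscr{P}_M(2v)]$ plus an atom; the transmitter's atom means spending zero power (accepting outage) and so contributes nothing to $\expec[P_M]$, and likewise for the jammer.

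Next I would verify the two best-response conditions. Fixing the jammer's uniform-in-$V$ law, a short computation gives, for a transmitter pure choice $Y=y$ in the support, $\Pr\{J_M\geq y\}=k_j\bigl(1-\mathscr{P}_M(y)/\mathscr{P}_M(2v)\bigr)$, so the transmitter's expected outage equals $k_j\bigl(1-\expec[\mathscr{P}_M(Y)]/\mathscr{P}_M(2v)\bigr)$, a function of the \emph{single} scalar $\expec[\mathscr{P}_M(Y)]=\expec[P_M]$. Hence every strategy that exhausts the power budget and keeps its support in $[0,2v]$ is a best response --- the proposed uniform among them --- while the fact that the outage is affine in the power spent shows that placing mass beyond $2v$ costs power without lowering outage, so the support cannot extend past $2v$. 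The symmetric computation, fixing the transmitter's law, shows the jammer's expected outage is affine in $\expec[J_M]$ and so the prescribed uniform-in-$V$ law is a jammer best response. This establishes the saddle point, and since the game is zero-sum all equilibria share its value.

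The three unknowns are then fixed by three relations. Two are the budget equalities $k_p\,\frac{1}{2v}\int_0^{2v}\mathscr{P}_M(y)\,dy=P_{M,\mathbf{h}}$ and, via the inverse-area identity $\int_0^{\mathscr{P}_M(2v)}\mathscr{J}_M(\xi)\,d\xi=2v\,\mathscr{P}_M(2v)-\int_0^{2v}\mathscr{P}_M(y)\,dy$, the relation $k_j\bigl[2v\,\mathscr{P}_M(2v)-\int_0^{2v}\mathscr{P}_M\bigr]/\mathscr{P}_M(2v)=J_{M,\mathbf{h}}$. The third is a Lagrangian (reduced-cost) reading of the indifference conditions: the transmitter will place mass on its silent atom only if that atom carries the same reduced cost as the active part, which forces $k_j=1$; symmetrically a jammer atom forces $k_p=1$. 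Thus \emph{at most one} of $k_p,k_j$ falls below one. I would show that the sign of $S(v_0)$ selects the branch: the stated defining equation for $v_0$ rearranges to $\mathscr{P}_M(2v_0)(2v_0-J_{M,\mathbf{h}})=2v_0P_{M,\mathbf{h}}$, whence $S(v_0)=\int_0^{2v_0}\mathscr{P}_M-\mathscr{P}_M(2v_0)(2v_0-J_{M,\mathbf{h}})$; by monotonicity, $S(v_0)<0$ makes the $k_p=1$ branch feasible (Case~3), $S(v_0)>0$ the $k_j=1$ branch (Case~5), and $S(v_0)=0$ the degenerate Case~4. Substituting $k_p=1$ (resp.\ $k_j=1$) into the two budget equalities reproduces verbatim the formulas for $v$, $k_p$ and $k_j$ in the statement.

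Finally, existence and uniqueness of the roots $v_0$ and $v$ follow from the strict monotonicity of $\mathscr{P}_M$, hence of $S(\cdot)$ and of the product expression (Proposition~\ref{propconcave}); the lower bound $v\geq\max\{J_{M,\mathbf{h}},\mathscr{J}_M(P_{M,\mathbf{h}})/2\}$ is merely the requirement that the support $[0,2v]$ be wide enough for both prescribed means to be attainable. Uniqueness of the equilibrium itself I would obtain from necessity: on any equilibrium support the indifference conditions force the CDF to be affine in the respective coordinate (i.e.\ uniform), atoms can sit only at the silent boundary, and the dichotomy plus the two budget equalities then determine $v,k_p,k_j$ uniquely. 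I expect the main obstacle to be precisely this necessity-and-uniqueness step --- ruling out non-uniform equilibrium laws, establishing that the two players cannot idle simultaneously, and handling the boundary atoms carefully under the concave warping $\mathscr{P}_M$, which is the feature distinguishing this slow-fading setting from the linear case treated in \cite{myself3}.
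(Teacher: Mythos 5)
Your proposal is correct in substance, but it takes a very different route from the paper only in the sense that the paper does not actually prove anything here: its entire proof is a citation of Theorem~9 in Appendix~III of \cite{myself3} under the substitution $g(y)=\mathscr{P}_M(y)$, $g^{-1}(x)=\mathscr{J}_M(x)$, $a=P_{M,\mathbf{h}}$, $b=J_{M,\mathbf{h}}$, together with a check that the hypothesis $\int_0^{b}g<\int_{g(b)}^{\infty}g^{-1}$ holds because $\mathscr{P}_M$ is unbounded. What you have written is essentially a self-contained reconstruction of what that cited theorem must contain, and it checks out: the change of variables $Y=\mathscr{J}_M(P_M)$, $V=\mathscr{P}_M(J_M)$ turns the payoff into $\Pr\{V\geq\mathscr{P}_M(Y)\}$; against the prescribed opponent law the payoff is affine in the power actually spent on the active support, which gives the best-response property; the two budget equalities $k_p\frac{1}{2v}\int_0^{2v}\mathscr{P}_M=P_{M,\mathbf{h}}$ and $k_j\bigl[2v\mathscr{P}_M(2v)-\int_0^{2v}\mathscr{P}_M\bigr]/\mathscr{P}_M(2v)=J_{M,\mathbf{h}}$ do reproduce verbatim the stated formulas for $v$, $k_p$, $k_j$ in Cases~3--5 once one of $k_p,k_j$ is pinned to $1$; and the sign of $S(v_0)$ correctly selects the branch (I verified that $k_j=1$ exactly at $v=v_0$). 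One point to tighten: your statement that the transmitter's expected outage is ``a function of the single scalar $\expec[P_M]$'' is only true when $k_j=1$; with an atom of mass $\pi_0$ at $P_M=0$ the correct expression carries the extra term $\pi_0(1-k_j)$, since outage at zero power is $1$ rather than $k_j$. You do recover this later via the reduced-cost dichotomy (an atom for one player forces the other's $k$ to equal $1$), so the argument is internally consistent, but the earlier sentence should be qualified. Your closing caveat is also well placed: the full necessity/uniqueness argument (ruling out non-uniform equilibrium laws in the warped coordinates and interior atoms) is the genuinely laborious part, and it is precisely the part the paper buries in the Part~I reference rather than proving here.
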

\vspace*{4pt}
\begin{proof}
The proof follows directly from Theorem 9  in Appendix III of \cite{myself3},
by substituting $x=P_M$, $y=J_M$, $g(y)=\mathscr{P}_M(y)$, $g^{-1}(x)=\mathscr{J}_M(x)$,
$a=P_{M,\mathbf{h}}$ and $b=J_{M,\mathbf{h}}$. It is also interesting to note that the condition
$\int_0^{b}g(y)dy<\int_{g(b)}^{\infty}g^{-1}(x)dx$ is satisfied because $\mathscr{P}_M(y)$
is unbounded.
\end{proof}
\vspace*{4pt}

\vspace*{4pt}
{\bf \emph{Particular case: $M=1$}} \vspace*{4pt}

For $M=1$ the first (intra-frame) level of power control is inexistent. For a given channel realization
$h$ we can readily derive the \emph{affine} function $P_M(j_M)$
in (\ref{alternpoutexpr}) as
\be
P_M(j_M)=\frac{c-1}{h}(j_M+\sigma_N^2)
\ee
where $c=\exp(R)$.
If we use the particularization of the general solution of Theorem \ref{thm2_long_term_mixed} to affine functions,
as in the last part of Appendix III of \cite{myself3}, we obtain the outage probability as

\be\label{poutmixM1_1}
P_{out,h}=1-\frac{\frac{hP_M(h)}{c-1}}{J_M(h)\left[1+\sqrt{1+2\frac{\sigma_N^2}{J_M(h)}}\right]+\sigma_N^2}\nonumber\\
\textrm{if}~~\frac{hP_M(h)}{c-1}\leq \frac{1}{2}J_M(h)\left[1+\sqrt{1+2\frac{\sigma_N^2}{J_M(h)}}\right]+\sigma_N^2,
\ee
and
\be\label{poutmixM1_2}
P_{out,h}=\frac{\frac{1}{2}J_M(h)}{\frac{hP_M(h)}{c-1}-\sigma_N^2}\nonumber\\
\textrm{if}~~\frac{hP_M(h)}{c-1}> \frac{1}{2}J_M(h)\left[1+\sqrt{1+2\frac{\sigma_N^2}{J_M(h)}}\right]+\sigma_N^2.
\ee

The transmitter and jammer strategies that achieve these payoffs are such that
\be\label{gp1_01_nocsi}
F_P(x)\sim k_p\mathbb{U}([\frac{c-1}{h} \sigma_N^2,2v\frac{c-1}{h}+\frac{c-1}{h} \sigma_N^2])(x)+\nonumber\\
+(1-k_p)\Delta_0(x),\nonumber
\ee 
\be\label{gp1_02_nocsi}
F_J(y)\sim \frac{2v}{2v+\sigma_N^2}k_j\mathbb{U}([0,2v])(y)+(1-\frac{2v}{2v+\sigma_N^2}k_j)\Delta_0(y).\nonumber
\ee 
The parameters $k_p,k_j\in [0,1]$ and $v\in [\max \{J_M(h), \mathscr{J'}_M(P_M(h))/2 \}, \infty)$ are uniquely
determined from the following steps:

\begin{enumerate}
\item If
\be
\frac{hP_M(h)}{c-1}\leq \frac{1}{2}J_M(h)\left[1+\sqrt{1+2\frac{\sigma_N^2}{J_M(h)}}\right]+\sigma_N^2,
\ee
then
\be
v=\frac{1}{2}J_M(h)\left[1+\sqrt{1+\frac{2\sigma_N^2}{J_M(h)}} \right],
\ee
\be
k_p=\frac{2vP_M(h)}{\frac{c-1}{h}(2v+\sigma_N^2)(2v-J_M(h))}
\ee
and 
\be
k_j=1.
\ee

\item If 
\be
\frac{hP_M(h)}{c-1}> \frac{1}{2}J_M(h)\left[1+\sqrt{1+2\frac{\sigma_N^2}{J_M(h)}}\right]+\sigma_N^2,
\ee
then 
\be
v=\frac{P_M(h)-\frac{c-1}{h}\sigma_N^2}{\frac{c-1}{h}},
\ee
\be
k_p=1
\ee
and 
\be
k_j=\frac{\frac{c-1}{h}J_M(h)(2P_M(h)-\frac{c-1}{h}\sigma_N^2) }{2(P_M(h)-\frac{c-1}{h}\sigma_N^2)^2}.
\ee

\end{enumerate}

The special form of this solution will be used in the next subsection to derive the overall Nash equilibrium of the
mixed strategies game for $M=1$.


\subsection{Power allocation between frames with different channel vectors}\label{ss3_4}

In the previous subsections we have described the optimal power control strategies for
given particular channel realization $\mathbf{h}$, and  transmitter and jammer power levels $P_{M,\mathbf{h}}$ and
$J_{M,\mathbf{h}}$ respectively.
The first level of power control,which is the subject of this subsection, deals with allocating the powers specified by
the transmitter and jammer average power constraints $\mathcal{P}$ and $\mathcal{J}$ between different channel
vectors.
In other words, we are now concerned with solving the problem
\be\label{probl_1_ss3_2}
\min_{P_M(\mathbf{h}): \expec_{\mathbf{h}}P_M(\mathbf{h})\leq\mathcal{P}}
\max_{J_M(\mathbf{h}): \expec_{\mathbf{h}}J_M(\mathbf{h})\leq\mathcal{J}} \expec_{\mathbf{h}}[\nonumber\\
P_{out,\mathbf{h},P_M(\mathbf{h}),J_M(\mathbf{h})}]
\ee
where $P_{out,\mathbf{h},P_M(\mathbf{h}),J_M(\mathbf{h})}$ (also denoted as $P_{out,\mathbf{h}}$) is the outage probability of a
frame characterized by the channel vector $\mathbf{h}$ and to which the transmitter allocates power
$P_M(\mathbf{h})$, and the jammer allocates power $J_M(\mathbf{h})$.
Note that $P_{out,\mathbf{h},P_M(\mathbf{h}),J_M(\mathbf{h})}$ can be easily computed according to the second and
third levels of power control already presented.

However, the Nash equilibrium of the game in (\ref{probl_1_ss3_2}) above is highly dependent on the
result of the second level of power control. Since finding a closed form solution for
the second level is still an open problem, a general solution for the first level of power control
is not available at this time. 

However, we next provide a Nash equilibrium for the particular case when $M=1$.

\vspace*{4pt}
{\bf \emph{Particular case: $M=1$}} \vspace*{4pt}

We start by pointing out the following important property of the second-level power control strategies for $M=1$.

\begin{prop}\label{convconcM1}
The outage probability $P_{out,h}$ given in (\ref{poutmixM1_1}) and (\ref{poutmixM1_2}) above is a continuous
function of both arguments.
Moreover, $P_{out,h}$ is a strictly decreasing, convex function of $P_M(h)$ for fixed $J_M(h)$ and 
a strictly increasing, concave function of $J_M(h)$ for fixed $P_M(h)$.
\end{prop}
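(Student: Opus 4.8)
The plan is to argue directly from the closed forms (\ref{poutmixM1_1}) and (\ref{poutmixM1_2}), which describe $P_{out,h}$ as a two-branch function whose branches are joined along the threshold curve where the two conditions meet. To compress the algebra I would put $x=\frac{h}{c-1}P_M(h)$ (an increasing affine function of $P_M(h)$), write $y=J_M(h)$ and $\sigma=\sigma_N^2$, and abbreviate $u=\sqrt{y^2+2\sigma y}$ and $\delta(y)=y+u+\sigma$. Then the first branch is $1-x/\delta(y)$ on $\{x\le\tau(y)\}$, the second is $\frac{y/2}{x-\sigma}$ on $\{x>\tau(y)\}$, and the threshold is $\tau(y)=\tfrac12(y+u)+\sigma=\tfrac12(\delta(y)+\sigma)$. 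The whole proof rests on the single relation $u^2=y^2+2\sigma y$ and its two consequences $(y+u)^2=2y\,\delta(y)$ and $(u-y)\delta(y)=\sigma(y+u)$. Continuity off the threshold is clear since each branch is smooth; on the threshold I would substitute $x=\tau(y)$ into both branches and reduce the required equality to $(y+u)^2=2y\,\delta(y)$, which is immediate from $u^2=y^2+2\sigma y$.

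For the behaviour in $P_M(h)$ I would fix $y$ and treat $P_{out,h}$ as a function of $x$. Here $\tau(y)$ is a constant: the left branch $1-x/\delta(y)$ is affine and strictly decreasing, while the right branch $\frac{y/2}{x-\sigma}$ is strictly decreasing and strictly convex (note $x-\sigma\ge\tau(y)-\sigma>0$ there). Each branch is thus convex and strictly decreasing, so only the junction needs attention. Differentiating both branches at $x=\tau(y)$ and using $(\tau(y)-\sigma)^2=\tfrac14(y+u)^2=\tfrac12 y\,\delta(y)$, I would show that both one-sided $x$-derivatives equal $-1/\delta(y)$; hence $P_{out,h}$ is $C^1$ in $x$, its slope is non-decreasing across the kink, and the function is globally convex and strictly decreasing in $P_M(h)$.

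For the behaviour in $J_M(h)$ I would fix $x$ and vary $y$. Since $\tau(y)$ increases from $\sigma$ to $\infty$, the fixed point $x$ lies on the second branch for small $y$ and on the first branch for large $y$, with the transition at the unique $y^\ast$ solving $\tau(y^\ast)=x$. On the second branch $\frac{y/2}{x-\sigma}$ is linear increasing in $y$, hence concave; on the first branch $1-x/\delta(y)$ is increasing (as $\delta'(y)=\delta(y)/u>0$), and I would compute $\delta''(y)=-\sigma^2/u^3<0$ and then $\frac{d^2}{dy^2}[1-x/\delta(y)]=x\frac{\delta''(y)\delta(y)-2\delta'(y)^2}{\delta(y)^3}<0$, so this branch is strictly concave. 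For the kink I would evaluate both one-sided $y$-derivatives at $y^\ast$: the second-branch slope equals $\frac{1}{2(x-\sigma)}=\frac{1}{y+u}$ and the first-branch slope equals $\frac{x\,\delta'(y)}{\delta(y)^2}=\frac{x}{u\,\delta(y)}$, and the identity $(u-y)\delta(y)=\sigma(y+u)$ (together with $(y+u)^2=2y\,\delta(y)$) forces these to coincide. Thus $P_{out,h}$ is $C^1$ in $y$, its slope is non-increasing across the kink, and it is globally concave and strictly increasing in $J_M(h)$.

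The only real obstacle is the gluing at the threshold: upgrading the branch-wise convexity and concavity to global statements requires the one-sided derivatives to match (or at least to satisfy the correct inequality), and this matching is exactly where the nonobvious identities $(y+u)^2=2y\,\delta(y)$ and $(u-y)\delta(y)=\sigma(y+u)$ enter; everything else is routine differentiation. A more conceptual route would interpret $P_{out,h}$ as the saddle value of the game (\ref{game_ss3_2}) and appeal to the convexity of perturbation functions of convex programs in the constraint levels $P_M(h)$ and $J_M(h)$, but since the relevant object is a min-max value rather than a single optimal value, making that argument precise is more delicate than the explicit check, so I would retain the direct computation as the primary proof.
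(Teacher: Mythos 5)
Your proposal is correct and follows essentially the same route as the paper's proof: branch-wise differentiation of the two closed forms, a continuity check on the threshold curve, and matching of the one-sided derivatives there to upgrade branch-wise convexity/concavity to global statements. The only difference is one of care rather than method — the paper simply asserts that $\frac{dP_{out,h}}{dP_M(h)}$ and $\frac{dP_{out,h}}{dJ_M(h)}$ are continuous across the case boundary, whereas you verify this explicitly via the identities $(y+u)^2=2y\,\delta(y)$ and $(u-y)\delta(y)=\sigma(y+u)$, which is exactly the step the paper leaves to the reader.
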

\begin{proof}
In the remainder of this section we shall denote the case when $\frac{hP_M(h)}{c-1}\leq \frac{1}{2}J_M(h)
\left[1+\sqrt{1+2\frac{\sigma_N^2}{J_M(h)}}\right]+\sigma_N^2$ by \emph{Case 1} and the case when
$\frac{hP_M(h)}{c-1}> \frac{1}{2}J_M(h)\left[1+\sqrt{1+2\frac{\sigma_N^2}{J_M(h)}}\right]+\sigma_N^2$ by
\emph{Case 2}.

It is straightforward to check that when $\frac{hP_M(h)}{c-1}= \frac{1}{2}J_M(h)
\left[1+\sqrt{1+2\frac{\sigma_N^2}{J_M(h)}}\right]+\sigma_N^2$ we get $P_{out,h}=\frac{1}{1+\sqrt{1+2\frac{\sigma_N^2}{J_M(h)}}}$
by using either of the relations in (\ref{poutmixM1_1}) or (\ref{poutmixM1_2}). Thus, the continuity of $P_{out,h}$ follows
immediately.

If we evaluate the derivatives for \emph{Case 1}
\be
\frac{dP_{out,h}}{dP_M(h)}=-\frac{\frac{h}{c-1}}{J_M(h)\left[1+\sqrt{1+2\frac{\sigma_N^2}{J_M(h)}}\right]+\sigma_N^2}
\ee
and for \emph{Case 2}
\be
\frac{dP_{out,h}}{dP_M(h)}=-\frac{\frac{c-1}{h}J_M(h)}{2(P_M(h)-\frac{c-1}{h}\sigma_N^2)^2}
\ee
we note that when $J_M(h)$ is fixed, $P_{out,h}$ is a strictly decreasing function of $P_M(h)$,
affine in \emph{Case 1} and strictly convex in \emph{Case 2}. Moreover, $\frac{dP_{out,h}}{dP_M(h)}$ is continuous, which
makes $P_{out,h}$ an overall strictly decreasing, convex function of $P_M(h)$.

Similar (but symmetric) properties hold for the derivatives
\be
\frac{dP_{out,h}}{dJ_M(h)}=\frac{\frac{h}{c-1}}{J_M(h)\left[1+\sqrt{1+2\frac{\sigma_N^2}{J_M(h)}}\right]+\sigma_N^2}\cdot\nonumber\\
\cdot \frac{P_M(h)}{J_M(h)\sqrt{1+2\frac{\sigma_N^2}{J_M(h)}}},
\ee
for \emph{Case 1} and
\be
\frac{dP_{out,h}}{dJ_M(h)}=\frac{1}{2}\frac{1}{\frac{h}{c-1}P_M(h)-\sigma_N^2}
\ee
for \emph{Case 2},
yielding $P_{out,h}$ an overall strictly increasing, concave function of $J_M(h)$ (strictly concave in \emph{Case 1}
and affine in \emph{Case 2}).
\end{proof}

The result of Proposition \ref{convconcM1} implies that the overall outage probability $\expec_{h}P_{out,h}$ is a
convex function of $\{P_M(h)\}$ for fixed $\{J_M(h)\}$ and a concave function of $\{J_M(h)\}$ for fixed $\{P_M(h)\}$.
Since the set of strategies $\{P_M(h), J_M(h)\}$ is convex, there always exists a saddlepoint of the
game in (\ref{probl_1_ss3_2}) \cite{aubin}.
The importance of this result should be noted, since it implies that a Nash equilibrium of mixed strategies
of the two-person, zero-sum game in (\ref{probl_1_ss3_2}) can be achieved by only looking for pure strategies.
Recall that any Nash equilibrium of pure strategies is also a Nash equilibrium of mixed strategies, and that
for a two-person, zero-sum game all Nash equilibria share the same value of the cost function \cite{meyerson}.

Any saddlepoint of (\ref{probl_1_ss3_2}) has to satisfy the KKT conditions associated with the maximization
and minimization problems of (\ref{probl_1_ss3_2}) simultaneously. The next Proposition shows these
KKT conditions are not only necessary, but also sufficient for determining a saddlepoint.
The proof is deferred to Appendix \ref{app3}.

\begin{prop}\label{prop_saddlepoints_kkt}
For our two-player, zero-sum game of (\ref{probl_1_ss3_2}),
any solution of the joint system of KKT conditions associated with the maximization and minimization problems
yields a Nash equilibrium.
\end{prop}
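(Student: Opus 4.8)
The plan is to exploit the convex--concave structure established in Proposition~\ref{convconcM1}, together with the fact that for a convex (respectively concave) program the KKT conditions are not merely necessary but \emph{sufficient} for global optimality. Write the common objective as $V(P_M,J_M)=\expec_{\mathbf{h}}P_{out,\mathbf{h}}(P_M(h),J_M(h))$, and recall that a pair $(P_M^*,J_M^*)$ is a Nash equilibrium (saddlepoint) precisely when
\begin{gather}
V(P_M^*,J_M)\leq V(P_M^*,J_M^*)\leq V(P_M,J_M^*)
\end{gather}
for every feasible pair $(P_M,J_M)$. I would show that any tuple $(P_M^*,J_M^*,\lambda^*,\mu^*)$ solving the joint KKT system delivers exactly these two inequalities.

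Consider first the right-hand inequality, i.e.\ the inner minimization over $P_M$ with $J_M=J_M^*$ held fixed. The constraint $\expec_{\mathbf{h}}P_M(h)\leq\mathcal{P}$ couples the allocations only through their average, so the associated Lagrangian $V(P_M,J_M^*)+\lambda^*(\expec_{\mathbf{h}}P_M(h)-\mathcal{P})$ separates across $\mathbf{h}$: minimizing it reduces to minimizing, pointwise for each $\mathbf{h}$, the scalar function $P_{out,\mathbf{h}}(P_M(h),J_M^*(h))+\lambda^* P_M(h)$ over $P_M(h)\geq 0$. By Proposition~\ref{convconcM1} this scalar function is convex in $P_M(h)$ and continuously differentiable, so the pointwise stationarity condition (with complementary slackness on the non-negativity constraint) is sufficient for the pointwise global minimum. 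Since $P_M^*$ satisfies this condition for every $\mathbf{h}$, and $\lambda^*\geq 0$ with complementary slackness on the power budget, $P_M^*$ globally minimizes the Lagrangian; the sufficiency of the KKT conditions for convex programs then yields $V(P_M^*,J_M^*)\leq V(P_M,J_M^*)$ for all feasible $P_M$.

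The left-hand inequality follows by the symmetric argument applied to the maximization over $J_M$ with $P_M=P_M^*$ fixed. Here Proposition~\ref{convconcM1} guarantees that $P_{out,\mathbf{h}}$ is concave and $C^1$ in $J_M(h)$, so the Lagrangian $V(P_M^*,J_M)-\mu^*(\expec_{\mathbf{h}}J_M(h)-\mathcal{J})$ again decouples pointwise into concave scalar maximizations, for which the stationarity conditions solved by $(J_M^*,\mu^*)$ are sufficient for the global maximum. This gives $V(P_M^*,J_M)\leq V(P_M^*,J_M^*)$. Concatenating the two inequalities establishes the saddlepoint relation, and hence the Nash equilibrium.

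I expect the delicate point to be the passage from pointwise (scalar) optimality to optimality in the infinite-dimensional space of allocations: one must confirm that a single scalar multiplier $\lambda^*$ (respectively $\mu^*$) suffices to enforce the average-power budget across all $\mathbf{h}$ simultaneously, and that complementary slackness and non-negativity are handled uniformly. The convexity/concavity and, crucially, the \emph{continuity of the derivatives} supplied by Proposition~\ref{convconcM1} are precisely what make the separated pointwise Lagrangians well behaved despite the two-regime (\emph{Case~1}/\emph{Case~2}) definition of $P_{out,\mathbf{h}}$; without the $C^1$ property one could not assert sufficiency of the stationarity equations at the regime boundary. A secondary check is a Slater-type feasibility condition ensuring finite multipliers, which holds trivially since the zero allocation is strictly interior to both constraint sets.
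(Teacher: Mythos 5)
Your proposal is correct and follows essentially the same route as the paper's own proof: both rest on the convex--concave structure from Proposition~\ref{convconcM1} and the sufficiency of KKT conditions for convex (resp.\ concave) programs with linear constraints, yielding the two saddlepoint inequalities separately. Your version merely spells out the pointwise Lagrangian decomposition and the role of the $C^1$ property at the regime boundary, which the paper leaves implicit.
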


We can now solve the KKT conditions associated with the maximization and minimization problems
of (\ref{probl_1_ss3_2}) simultaneously.
For \emph{Case 1}, these are
\be\label{KKT_case1_1}
-\frac{\frac{h}{c-1}}{J_M(h)\left[1+\sqrt{1+2\frac{\sigma_N^2}{J_M(h)}}\right]+\sigma_N^2}+\lambda-\gamma(h)=0
\ee
and
\be\label{KKT_case1_2}
-\frac{\frac{h}{c-1}}{J_M(h)\left[1+\sqrt{1+2\frac{\sigma_N^2}{J_M(h)}}\right]+\sigma_N^2}\cdot\nonumber\\
\cdot\frac{P_M(h)}{J_M(h)\sqrt{1+2\frac{\sigma_N^2}{J_M(h)}}}+\mu-\delta(h)=0,
\ee
where $\gamma(h)$ and $\delta(h)$ are the complementary slackness conditions satisfying $\gamma(h)P_M(h)=0$
and $\delta(h)J_M(h)=0$, and where $\mu, ~\lambda ~\geq 0$.
From (\ref{KKT_case1_2}) we get
\be
P_M(h)=\frac{\mu}{\lambda}J_M(h)\sqrt{1+2\frac{\sigma_N^2}{J_M(h)}},
\ee
resulting in
\be
J_M(h)=\left[\sqrt{\left(\frac{\lambda}{\mu}\right)^2 P_M(h)^2+\sigma_N^4}-\sigma_N^2\right]_+,
\ee
which in combination with (\ref{KKT_case1_1}) yields
\be
P_M(h)=\left[\frac{h}{c-1}\frac{\mu}{2\lambda^2}-\frac{\mu(c-1)}{2h}\sigma_N^4\right]_+,
\ee
where we denote $[x]_+=\max\{ x,0 \}$.
Under this solution, the condition for being under \emph{Case 1},
\be
\frac{hP_M(h)}{c-1}\leq \frac{1}{2}J_M(h)\left[1+\sqrt{1+2\frac{\sigma_N^2}{J_M(h)}}\right]+\sigma_N^2
\ee
translates to
\be\label{cond_case1}
\frac{2\mu h}{\lambda (c-1)}\leq 1+\sqrt{1+4\sigma_N^2 \mu^2 \left(\sigma_N^2+\frac{1}{\mu}\right)}=\nonumber\\
=2(1+\sigma_N^2 \mu).
\ee

Note that $P_M(h)=0$ if and only if $J_M(h)=0$, and this happens when $h\leq h_{0/1}$, where
\be
h_{0/1}= \lambda(c-1)\sigma_N^2.
\ee

Writing the KKT conditions for \emph{Case 2} under the assumption that $P_M(h),~J_M(h)\geq 0$ we obtain
\be\label{KKT_case2_1}
-\frac{\frac{h}{c-1}J_M(h)}{2\left(\frac{h}{c-1}P_M(h)-\sigma_N^2\right)^2}+\lambda-\gamma(h)=0
\ee
and
\be\label{KKT_case2_2}
-\frac{1}{2\left(\frac{h}{c-1}P_M(h)-\sigma_N^2\right)}+\mu-\delta(h)=0,
\ee
which yield
\be
J_M(h)=\frac{c-1}{h}\frac{\lambda}{2\mu ^2}
\ee
and
\be
P_M(h)=\frac{c-1}{h}\left(\frac{1}{2\mu}+\sigma_N^2\right).
\ee
Note that in this case both $P_M(h)$ and $J_M(h)$ are strictly positive for finite $h$.
Under this solution, the condition for being under \emph{Case 2},
\be
\frac{hP_M(h)}{c-1}> \frac{1}{2}J_M(h)\left[1+\sqrt{1+2\frac{\sigma_N^2}{J_M(h)}}\right]+\sigma_N^2
\ee
translates to
\be\label{cond_case2}
\frac{2\mu h}{\lambda (c-1)}> 1+\sqrt{1+4\sigma_N^2 \mu^2 \frac{h}{\lambda(c-1)}}.
\ee

Forcing the right-hand side of (\ref{cond_case1}) to equal the right-hand side of (\ref{cond_case2})
we get the value of $h$ which is at the boundary between \emph{Case 1} and \emph{Case 2}:
\be
h_{1/2}= \lambda(c-1)(\frac{1}{\mu}+\sigma_N^2).
\ee

A close inspection of the expressions of $P_M(h)$ and $J_M(h)$ for the two cases shows that they are both increasing
functions of $h$ under \emph{Case 1} and decreasing functions of $h$ under \emph{Case 2}, and moreover, they are
both continuous in $h_{1/2}$.
To summarize the results above, the optimal transmitter/jammer first level power control strategies are given in
(\ref{mixed_firstlevel_tx}) and (\ref{mixed_firstlevel_jx}) below, respectively. The constants $\lambda$ and $\mu$ can
be obtained from the power constraints $\expec_h P_M(h)=\mathcal{P}$ and $\expec_h J_M(h)=\mathcal{J}$.

\be\label{mixed_firstlevel_tx}
P_M(h)=\left\{\begin{array}{lcr}
0 ,& \textrm{if} & h\leq h_{0/1}\\
\frac{h}{c-1}\frac{\mu}{2\lambda^2}-\frac{\mu(c-1)}{2h}\sigma_N^2,& \textrm{if}  & h_{0/1}<h\leq h_{1/2}\\
\frac{c-1}{h}\left(\frac{1}{2\mu}+\sigma_N^2\right),& \textrm{if} & h> h_{1/2}
\end{array} \right.
\ee

\be\label{mixed_firstlevel_jx}
J_M(h)=\left\{\begin{array}{lcr}
0 ,& \textrm{if} & h\leq h_{0/1}\\
\sqrt{\left(\frac{\lambda}{\mu}\right)^2 (\frac{h}{c-1}\frac{\mu}{2\lambda^2}-
\frac{\mu(c-1)}{2h}\sigma_N^2)^2+\sigma_N^4}-\sigma_N^2,& \textrm{if} & h_{0/1}<h\leq h_{1/2}\\
\frac{c-1}{h}\frac{\lambda}{2\mu ^2} ,& \textrm{if} & h> h_{1/2}
\end{array} \right.
\ee


\subsection{Numerical results}

Figure \ref{fig_numericalresults3_2} shows the probability of outage obtained under the mixed strategies
Nash equilibrium, versus the transmitter power constraint $\mathcal{P}$, when $M=1$, for a fixed
rate $R=1$, noise power $\sigma_N^2=10$, a jammer power constraint $\mathcal{J}=10$
and a channel coefficient distributed exponentially, with parameter $\lambda=1/6$.
The maximin and minimax solutions of the pure strategies game are shown for comparison.
 
\begin{figure}[h]
\centering
\includegraphics[scale=1.0]{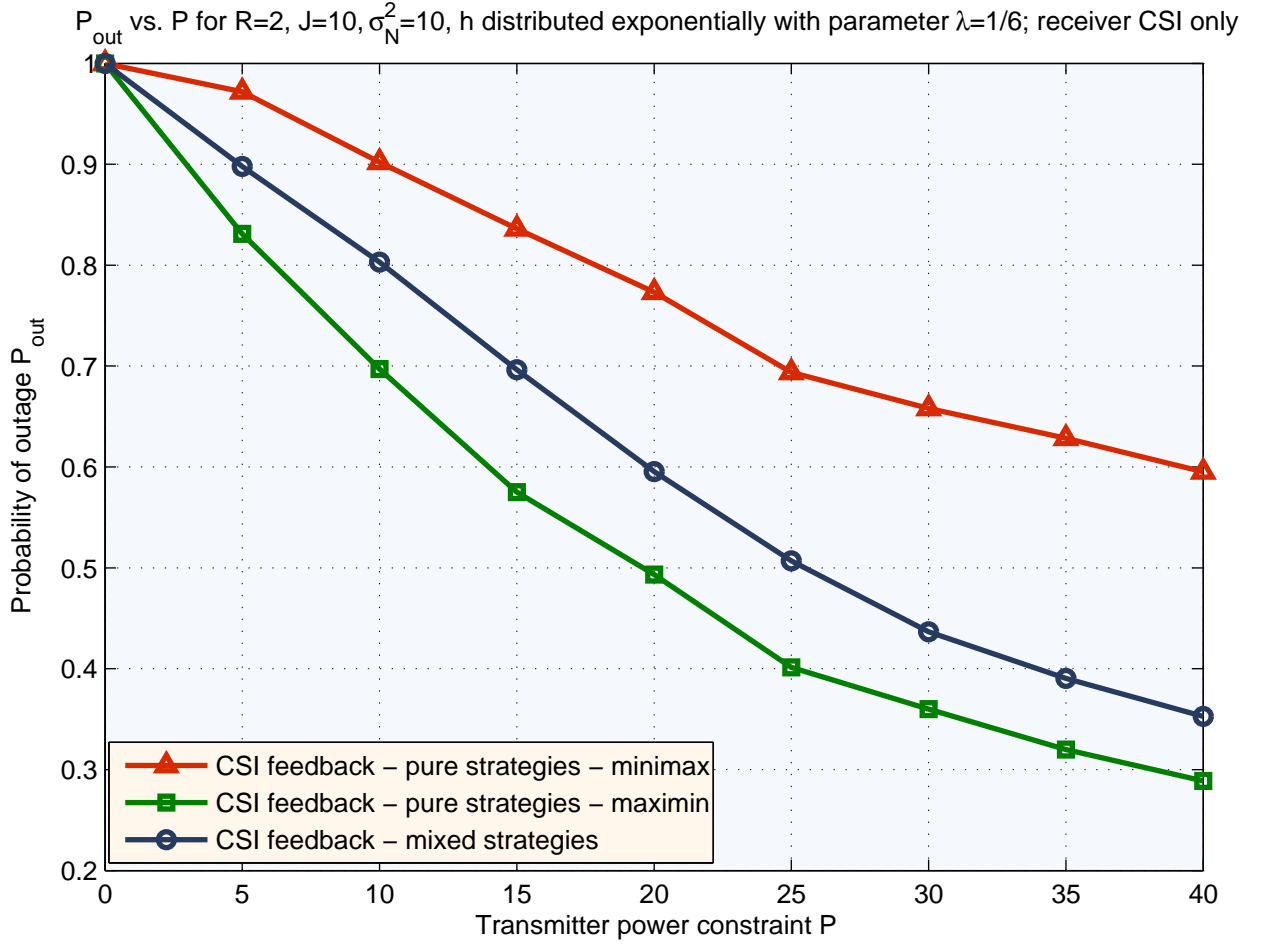}
\caption{Outage probability vs. transmitter power constraint $\mathcal{P}$ for $M=1$ when $\mathcal{J}=10$,
 $R=2$, $\sigma_N^2=10$ and $h$ is distributed exponentially, with
parameter $\lambda=1/6$.}\label{fig_numericalresults3_2}
\end{figure}

As expected, the solution of the of mixed strategies game
is better (from the transmitter's point of view) than the minimax and worse than the maximin solutions
of the pure strategies game.


\section{CSI Available Receiver Only. Jamming Game with Long-Term Power Constraints: Mixed Strategies}\label{section4}

In this section we investigate the scenario when the receiver does not feed back any channel state
information. Since we have already shown that the problem with long-term power constraints is the more
interesting and challenging one, and since the purpose of this section is to offer a comparison with previous results,
we further focus only on the scenario of average power constraints and mixed strategies.

Unlike in the corresponding Section V of \cite{myself3}, where all frames were equivalent because of the
fast fading channel, in our present scenario each frame is characterized by a particular channel realization. Since
this channel realization is not known to either the transmitter or the jammer, they both have to allocate
some power over each frame, in a random fashion, such that the transmitter minimizes and the jammer
maximizes the probability that the mutual information over the frame is less than the transmission rate $R$.
In its most general form, the game can be written as

\be\label{gamenocsi}
\min_{P_M:\expec P_M\leq \mathcal{P}}\max_{J_M:\expec J_M\leq \mathcal{J}}\expec_{P_M,J_M}\nonumber\\
\Bigg[\min_{P_m:\sum P_m\leq M P_M}\max_{J_m:\sum J_m\leq J_M}\nonumber\\
Pr\{ \sum_{m=0}^{M-1}\log\left(1+\frac{P_m h_m}{J_m+\sigma_N^2}\right)\leq MR\}
 \Bigg],
\ee
where $\expec_{P_M,J_M}$ denotes statistical expectation with respect to the probability distribution of
$P_M$ and $J_M$.
The form of (\ref{gamenocsi}) suggests two levels of power control: a first one which deals with the
allocation of power between different frames, and a second one which allocates the powers within each frame.

In solving the game, we start as before with the second level of power control. However, this level requires
an exact expression of $Pr\{ \sum_{m=0}^{M-1}\log\left(1+\frac{P_m h_m}{J_m+\sigma_N^2}\right)\leq MR\}$.
Note that this probability depends upon the probability distribution of the channel vector $\mathbf{h}$.
A practical way of solving the problem is the following.

Denote $S_m=\log\left(1+\frac{P_m h_m}{J_m+\sigma_N^2}\right)$ the random variable (depending on $h_m$)
which characterizes the instant mutual information over the $m$-th block of the frame.
We can write the cumulative distribution function (c.d.f.) of $S_m$ as
\be
F_{S_m}(x)=Pr\{S_m\leq x\}=\nonumber\\
=Pr\{h_m\leq (e^x-1)\frac{J_m+\sigma_N^2}{P_m}\}=F_{h}\left((e^x-1)\frac{J_i+\sigma_N^2}{P_i}\right)
\ee
where $F_h(x)$ is the c.d.f. of the channel coefficient $h_m$ and we assume that the channel
coefficients over all the blocks of a frame are independent and identically distributed random variables.

We can now compute the p.d.f. (assuming it exists) of $S_m$ as
\be
f_{S_m}(x)=\frac{dF_{S_m}(x)}{dx}
=\frac{dF_{h}\left((e^x-1)\frac{J_m+\sigma_N^2}{P_m}\right)}{dx}.
\ee 

Finally, our probability can be written as
\be
Pr\{ \sum_{m=0}^{M-1}\log\left(1+\frac{P_m h_m}{J_m+\sigma_N^2}\right)\leq MR\}=\nonumber\\
=\left(F_{S_0}*f_{S_1}*\ldots * f_{S_{M-1}}\right)(MR)
\ee
where $*$ denotes regular convolution.
Due to the intricate expression of this probability, as well as its dependence on the statistical
properties of the channel, we next focus exclusively on the simple case when $M=1$.

\vspace*{4pt}
{\bf \emph{Particular case: $M=1$}} \vspace*{4pt}

For $M=1$, we are only concerned with the first level of power control.
The game can be written as
\be
\min_{P_M:\expec P_M\leq \mathcal{P}}\max_{J_M:\expec J_M\leq \mathcal{J}}\nonumber\\
\expec_{P_M,J_M}Pr\{ P\leq (c-1)\frac{J_M+\sigma_N^2}{h}\}
\ee
or equivalently,
\be
\min_{P_M:\expec P_M\leq \mathcal{P}}\max_{J_M:\expec J_M\leq \mathcal{J}}\nonumber\\
\expec_{P_M,J_M}Pr\{ h\leq (c-1)\frac{J_M+\sigma_N^2}{P_M}\}.
\ee

In order to provide a good numerical comparison with the results of the previous sections,
assume that the channel coefficient $h$ has an exponential probability distribution
with parameter $\lambda$. Its cumulative distribution function can thus be written as
$F(h)=1-e^{-\lambda h}$, which enables us to write
\be
Pr\{ h\leq (c-1)\frac{J_M+\sigma_N^2}{P_M}\}=\nonumber\\
=1-\exp\left[-\lambda(c-1)\frac{J_M+\sigma_N^2}{P_M} \right].
\ee
Denote $H(P_M,J_M)=1-\exp\left[-\lambda(c-1)\frac{J_M+\sigma_N^2}{P_M} \right]$.

By computing the derivatives
\be
\frac{dH}{dP_M}=\nonumber\\
=-\lambda(c-1)\frac{J_M+\sigma_N^2}{P_M^2}\exp\left[-\lambda(c-1)\frac{J_M+\sigma_N^2}{P_M} \right]<0,
\ee

\be
\frac{d^2H}{dP_M^2}=\nonumber\\
=\lambda(c-1)\frac{J_M+\sigma_N^2}{P_M^3}\left[\lambda(c-1)\frac{J_M+\sigma_N^2}{P_M}+2 \right]\nonumber\\
\exp\left[-\lambda(c-1)\frac{J_M+\sigma_N^2}{P_M} \right]>0,
\ee

\be
\frac{dH}{dJ_M}=\nonumber\\
=\lambda(c-1)\frac{1}{P_M}\exp\left[-\lambda(c-1)\frac{J_M+\sigma_N^2}{P_M} \right]>0,
\ee
and
\be
\frac{d^2H}{dJ_M^2}=\nonumber\\
=-(\lambda(c-1)\frac{1}{P_M})^2\exp\left[-\lambda(c-1)\frac{J_M+\sigma_N^2}{P_M} \right]<0,
\ee
we notice that $H$ is a strictly decreasing, convex function of $P_M$ for a fixed $J_M$, and a
strictly increasing, concave function of $J_M$ for a fixed $P_M$.
Hence, a Nash equilibrium is achieved by uniformly distributing the transmitter's and jammer's powers between
the frames:
\be
\expec_{P_M:\expec P_M\leq \mathcal{P}}\left\{ 1-\exp\left[-\lambda(c-1)\frac{\mathcal{J}+\sigma_N^2}{P_M} \right] \right\}
\leq{}\nonumber\\
{}\leq 1-\exp\left[-\lambda(c-1)\frac{\mathcal{J}+\sigma_N^2}{\mathcal{P}} \right] \leq{}\nonumber\\
{}\leq \expec_{J_M:\expec J_M\leq \mathcal{J}}\left\{ 1-\exp\left[-\lambda(c-1)\frac{J_M+\sigma_N^2}{\mathcal{P}} \right] \right\}
\ee
This saddlepoint is an equilibrium of pure strategies, and hence also an equilibrium of mixed strategies.
Note that the existence of such an equilibrium of pure strategies might no longer hold for different
probability distributions of $h$, and this would demand a search for purely probabilistic strategies.
For example, when the c.d.f. of the channel coefficient $F(h)$ is not concave, then $F((c-1)\frac{J_M+\sigma_N^2}{P_M})$
is no longer a concave function of $J_M$, and hence the optimal jammer strategy is not deterministic.

Numerical evaluations of the system's performance under the present scenario are presented in the next subsection.


\subsection{Numerical results}\label{ss4_2}

The probability of outage as a function of the transmitter's power constraint $\mathcal{P}$
is shown in Figure \ref{fig_numericalresults5_2} for $M=1$, and under the assumption that both the transmitter and the jammer
distribute their powers uniformly over the frames.

For comparison, the maximin and minimax solutions of the pure strategies game and the mixed strategies Nash equilibrium, all
under the scenario that channel state information is fed back by the receiver, are also shown in the figure.

\begin{figure}[h]
\centering
\includegraphics[scale=1.0]{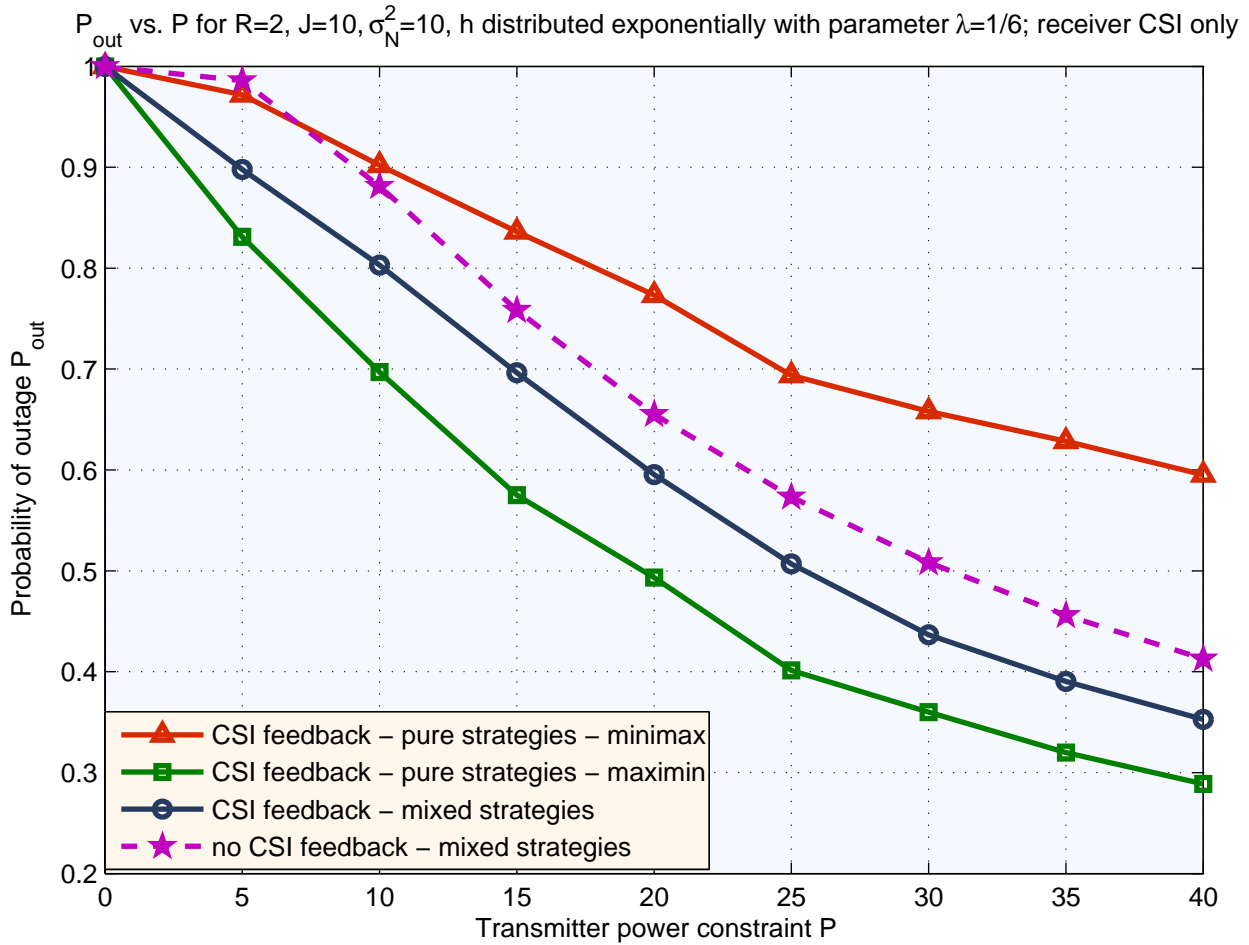}
\caption{Outage probability vs. transmitter power constraint $\mathcal{P}$ for $M=1$, with and without CSI feedback when $\mathcal{J}=10$,
 $R=2$, $\sigma_N^2=10$ and $h$ is distributed exponentially, with
parameter $\lambda=1/6$. (Mixed strategies.)}\label{fig_numericalresults5_2}
\end{figure}

Note that when the receiver does not feed back the CSI, the system performance suffers degradation. Unlike in the
fast fading scenario of \cite{myself3}, in the present slow fading scenario the increase in the outage probability is significant.
The difference is most visible at low transmitter powers, when not feeding back the channel state information amounts to
worse performance than the pessimistic (minimax) scenario with full CSI.


\section{Conclusions}\label{section5}

We have studied the jamming game over slow fading channels, with the outage probability as objective.
Similarly to the fast fading scenario, the game with full CSI and average (or long term) power constraints
does not have a Nash equilibrium of pure strategies. Nevertheless, we derived the minimax and maximin solutions
of pure strategies, which provide lower and upper bounds on the system performance, respectively.

In addition, we investigated the Nash equilibrium of mixed strategies. Compared to the fast fading scenario
\cite{myself3}, the Nash equilibrium for the slow fading, full CSI game is much more involved. The difference
comes from the fact that frames are not equivalent. In fact, instead of being characterized by the channel statistics
as in \cite{myself3}, the frames are now characterized by different channel realizations. This results in
the existence of an additional third level of power control.

We also showed that for parallel slow fading channels, the CSI feedback helps in the battle
against jamming, since if the receiver does not feed back the channel state information, the system's performance suffers
a significant degradation. We expect this degradation to decrease as the number of parallel channels $M$ increases,
until it becomes marginal for $M\to \infty$ (which can be considered as the case in \cite{myself3}).

These results, along with our conclusions from the first part of this paper \cite{myself3}, reveal an interesting
duality between the ways that different communication models behave with and without jamming.
As remarked in \cite{myself3}, under a fast fading channel with jamming, the feedback of channel state information brings little
benefits in terms of the overall probability of outage. The same tendency is observed for the fast fading channel without
jamming in \cite{goldsmith} (although the performance measure therein is the ergodic capacity).
However, \cite{caire} shows that for a parallel slow fading channel, the CSI feedback is quite important.
The improvement of the probability of outage when the channel coefficients are perfectly known to the transmitter is no
longer negligible. The results of our present paper demonstrate that even in the presence of a jammer (which can eavesdrop the
feedback channel and hence obtain the same CSI as the transmitter), CSI feedback improves the transmission considerably.


\appendices
\section{Short-Term Power Constraints - Proofs of Main Results}\label{app1}

\subsection{Proof of Proposition \ref{prop_short_term}}

The proof is an adaptation of the results in Section IV.B of \cite{uluk1}, regarding uncorrelated jamming with CSI at the transmitter.
The only difference is that in our case, the power constraints and cost function involve short-term, temporal averages,
while in \cite{uluk1}, they are expressed in terms of statistical averages. 
Nevertheless, the same techniques can be applied.

The set of all pairs $(P(h),J(h))$ satisfying the power constraints is convex, since the power constraints are linear functions
of $P(h)$ and $J(h)$, respectively.
Moreover, the cost function 
\begin{gather}
I_M(\mathbf{h}, P(h),J(h))=\frac{1}{M}\sum_{m=0}^{M-1} \log (1+\frac{h_m P_m}{\sigma_N^2+J_m})\nonumber
\end{gather}
is a convex function of $J(h)$ for fixed $P(h)$,
and a concave function of $P(h)$ for fixed $J(h)$.
These properties imply that there exists at least one saddle point of the game.

Writing the KKT conditions for both optimization problems we get \cite{uluk1}:
\begin{gather}\label{app1.1}
-\frac{h}{\sigma_N^2+J(h)+hP(h)}+\lambda-\gamma(h)=0
\end{gather}
and
\begin{gather}\label{app1.2}
-\frac{hP(h)}{(\sigma_N^2+J(h))(\sigma_N^2+J(h)+hP(h))}+\nu-\delta(h)=0,
\end{gather}
where $\gamma(h)$ and $\delta(h)$ are the complementary slackness variables for $P(h)$ and $J(h)$, respectively.

The three possible cases are \cite{uluk1}: Case 1: $P(h)>0$, $J(h)>0$; Case 2:
$P(h)>0$, $J(h)=0$ and Case 3: $P(h)=J(h)=0$.

For Case 1 both complementary slackness variables are $0$, and solving (\ref{app1.1}) and
(\ref{app1.2}) together we get 
\begin{gather}\label{app1.3}
\frac{\lambda}{\mu}=\frac{J(h)+\sigma_N^2}{P(h)},
\end{gather}
and
\begin{gather}\label{app1.4}
P(h)=\frac{h}{\lambda(h+\lambda/\mu)},
\end{gather}
while for Cases 2 and 3, the solution is plain water-filling \cite{uluk1}. 

These considerations result in the solutions (\ref{sol11})
and (\ref{sol12}).


\subsection{Proof of Theorem \ref{thm_short_term}}
This proof follows the one described in the Appendix B of \cite{caire}.
The probability of outage can be written as:
\begin{gather}\label{ap11}
Pr(I_M(\mathbf{h},P(h),J(h))<R)=E[\chi_{\{  I_M(\mathbf{h},P(h),J(h))<R \}}],
\end{gather}
where $\chi_{\{\mathscr{A}\}}$ denotes the indicator function of the set $\mathscr{A}$.
Replacing the power allocations by the solutions of the game described by (\ref{game21}) and (\ref{game22}), we define
\begin{gather}
\chi^*(\mathbf{h})=\chi_{\{  I_M(\mathbf{h},P^*(h),J^*(h))<R \}}.
\end{gather}

Then the region $\mathcal{U}(R,\mathcal{P}, \mathcal{J})$ can be written as:
\begin{gather}
\mathcal{U}(R,\mathcal{P}, \mathcal{J})=\{ \mathbf{h}\in \mathbb{R}_{+}^M : \chi^*(\mathbf{h})=0\}.
\end{gather}

We next use the fact that the pair $(P^*(h), J^*(h))$ determines an equilibrium of the game (\ref{game21}), (\ref{game22}).
Thus, for any random power allocation $P(h)$ satisfying the power constraint, we can write:
\begin{gather}\label{ap12}
\chi^*(\mathbf{h})\leq \chi_{\{  I_M(\mathbf{h},P(h),J^*(h))<R \}} , \textrm{with probability 1}.
\end{gather}
Similarly, for any random $J(h)$, we have
\begin{gather}\label{ap13}
\chi^*(\mathbf{h})\geq \chi_{\{  I_M(\mathbf{h},P^*(h),J(h))<R \}} , \textrm{with probability 1}.
\end{gather}

Now pick some arbitrary power allocation functions $P_a(h)$ and $J_a(h)$, which satisfy the short-term power constraints,
and set 
\begin{gather}
\widehat{P}(h)=(1-\chi^*(\mathbf{h}))P^*(h)+\chi^*(\mathbf{h})P_a(h),
\end{gather}
and
\begin{gather}
\widehat{J}(h)=(1-\chi^*(\mathbf{h}))J_a(h)+\chi^*(\mathbf{h})J^*(h),
\end{gather}
It is easy to see that $1/M\sum_{m=0}^{M-1}\widehat{P}(h_m)\leq \mathcal{P}$ with probability $1$ ,
$1/M\sum_{m=0}^{M-1}\widehat{J}(h_m)\leq \mathcal{J}$ with probability $1$, and moreover that
\begin{gather}
\chi^*(\mathbf{h})=\chi_{\{  I_M(\mathbf{h},\widehat{P}(h),\widehat{J}(h))<R \}}.
\end{gather}

Note that transmitter and jammer could pick $P_a(h)=0$ and $J_a(h)=0$ respectively,
but this strategy would not improve their performances (power cannot be saved),
since the only power constraints are set over frames.

Now, using (\ref{ap11}), (\ref{ap12}) and (\ref{ap13}), we get:
\begin{gather}
Pr(I_M(\mathbf{h},P(h),\widehat{J}(h))<R)\geq {}\nonumber\\
{}\geq Pr(I_M(\mathbf{h},\widehat{P}(h),\widehat{J}(h))<R)\geq{} \nonumber\\
{}\geq Pr(I_M(\mathbf{h},\widehat{P}(h),J(h))<R),
\end{gather}
which proves the existence of a Nash equilibrium of the original game.


\section{Long-Term Power Constraints: Pure Strategies}\label{app2}

\subsection{Proof of Proposition \ref{circ_pr_prop1}}

Take \emph{Problem 1}.
Let $(\mathfrak{P}^*,\mathfrak{J}^*)=\big((P_0^*,P_1^*,\ldots, P_{M-1}^*),(J_0^*,J_1^*,\ldots,
J_{M-1}^*)\big)$ be a solution such that $\sum_{m=0}^{M-1}P_m^*=P_{M,1}$ and
$\sum_{m=0}^{M-1}J_m^*=J_{M,1}$,
and assume that $I_M(\mathfrak{P}^*,\mathfrak{J}^*)>R$.
Since $I_M$ is a continuous, strictly increasing function of $P_0$, without loss of generality, we can find $P_0'<P_0^*$ such that
$I_M((P_0',P_1^*,\ldots, P_{M-1}^*),\mathfrak{J}^*)=R$.

But then $P_0'+\sum_{m=1}^{M-1} P_m^*<MP_{M,1}$, which means that $(\mathfrak{P}^*,\mathfrak{J}^*)$ is
suboptimal (from the transmitter's point of view), and hence not a solution.

Therefore, the first constraint $I_M\geq R$ has to be satisfied with equality, i.e. $I_M=R$.

Now take the solution $(\mathfrak{P}^*,\mathfrak{J}^*)$, and assume that $\frac{1}{M}\sum_{m=0}^{M-1} J_m^*<J_M$.
Then we can find $J_0'>J_0^*$, such that $J_0'+\sum_{m=1}^{M-1} J_m^*=MJ_M$.
In order for the first constraint $I_M=R$ to be satisfied, the value and distribution of $P_M$ will have to be modified.

We prove next that the value of $P_M$ should be increased, which makes the pair $(\mathfrak{P}^*,\mathfrak{J}^*)$ suboptimal
(from the jammer's point of view), thus
contradicting the hypothesis that it is a solution, and proving that the second constraint should hold with equality.

Assume there is a distribution $\mathfrak{P}''=(P_0'',P_1'',\ldots, P_{M-1}'')$ that
minimizes $P_M$, under the constraint $I_M(\{P_m\},(J_0',J_1^*, \ldots, J_{M-1}^*))=R$, such that 

\begin{gather}\label{contra}
\sum_{m=0}^{M-1} P_m'' \leq P_{M,1}.
\end{gather}
 
Then, replacing $J_0$ by its old value $J_0^*$, we have that $(\mathfrak{P}'',\mathfrak{J}^*)$ is either a second solution of Problem 1
(if (\ref{contra}) is satisfied with equality), or a better choice (if (\ref{contra}) is satisfied with strict inequality).
We can readily dismiss the latter case.
For the former case, $I_M$ is a strictly decreasing function of $J_0$, thus $I_M(\mathfrak{P}'',\mathfrak{J}^*)>R$,
which contradicts the first part of this proof.
The same arguments work for \emph{Problem 2}.


\subsection{Proof of Proposition \ref{circ_pr_thm}}

Proposition \ref{circ_pr_thm} is a direct consequence of Theorem 8 in the Appendix II.D of \cite{myself3}.
We restate the theorem here for completeness. For a complete proof, see the first part of this paper \cite{myself3}.

\begin{thm}
Take $x,y \in L^2[\mathbb{R}]$ and define the order relation $x>y$ if and only if $x(t)>y(t)~\forall
t\in \mathbb{R}$. 
Consider the continuous real functions $f(x)$, $g(y)$ and $h(x,y)$ over $L^2[\mathbb{R}]$, such that
$f$ is a strictly increasing function of $x$, $g$ is a strictly increasing function of $y$,
and $h$ is a strictly increasing function of $x$ for fixed $y$ and
a strictly decreasing function of $y$ for fixed $x$.

Define the following minimax and maximin problems:
\be \label{genpr1p2}
\max_{y\geq0}\left[ \min_{x\geq0} f(x) ~\textrm{s.t.}~ h(x,y)\geq H \right]
\textrm{s.t.} g(y)\leq G,
\ee

\be \label{genpr2p2}
\max_{x\geq0}\left[ \min_{y\geq0} g(y) ~\textrm{s.t.}~ h(x,y)\leq H \right]
\textrm{s.t.} f(x)\leq F,
\ee

\be \label{genpr3p2}
\min_{y\geq0}\left[ \max_{x\geq0} h(x,y) ~\textrm{s.t.}~ f(x)\leq F \right]
\textrm{s.t.} g(y)\leq G.
\ee

(I) Choose any real values for $G$ and $H$.
Take problem (\ref{genpr1p2}) under these constraints and let the pair $(x_1,y_1)$ denote one of its optimal solutions,
yielding a value of the objective function $f(x_1)=F_1$. If we set the value of the corresponding
constraints in problems (\ref{genpr2p2}) and (\ref{genpr3p2}) to $F=F_1$, then the values of the objective
functions of problems (\ref{genpr2p2}) and (\ref{genpr3p2}) under their optimal solutions are
$g(y)=G$ and $h(x,y)=H$, respectively. Moreover, $(x_1,y_1)$ is also an optimal solution of all problems.

(II) Choose any real values for $F$ and $H$.
Take problem (\ref{genpr2p2}) under these constraints and let the pair $(x_2,y_2)$ denote one of its optimal solutions,
yielding a value of the objective function $g(y_2)=G_2$. If we set the value of the corresponding
constraints in problems (\ref{genpr1p2}) and (\ref{genpr3p2}) to $G=G_2$, then the values of the objective
functions of problems (\ref{genpr1p2}) and (\ref{genpr3p2}) under their optimal solutions are
$f(x)=F$ and $h(x,y)=H$, respectively. Moreover, $(x_2,y_2)$ is an optimal solution of all problems.
 
(III) Choose any real values for $F$ and $G$.
Take problem (\ref{genpr3p2}) under these constraints and let the pair $(x_3,y_3)$ denote one of its optimal solutions,
yielding a value of the objective function $h(x_3,y_3)=H_3$. If we set the value of the corresponding
constraints in problems (\ref{genpr1p2}) and (\ref{genpr2p2}) to $H=H_3$, then the values of the objective
functions of problems (\ref{genpr1p2}) and (\ref{genpr2p2}) under their optimal solutions are
$f(x)=F$ and $g(y)=G$, respectively. Moreover, $(x_3,y_3)$ is an optimal solution of all problems.
\end{thm}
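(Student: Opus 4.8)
The plan is to reduce all three statements to a single mechanism: because $f$, $g$ and $h$ are strictly monotone and continuous, the three constrained problems merely probe one common point of a shared trade-off surface, and the equalities in (I)--(III) simply assert that all constraints are active there simultaneously. I would prove part (I) in full and then obtain (II) and (III) by the symmetry $x \leftrightarrow y$, $f \leftrightarrow g$ (with $h$ increasing in $x$ and decreasing in $y$), so that only (I) requires genuine work. Throughout I take $(x_1,y_1)$ to be an optimizer of problem (\ref{genpr1p2}) and set $F_1=f(x_1)$.

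First I would record two monotonicity facts. \textbf{Tightness.} At any optimizer of each problem both constraints hold with equality. This is the perturbation argument already used for Proposition \ref{circ_pr_prop1}: if $h(x_1,y_1) > H$ in (\ref{genpr1p2}), continuity and strict monotonicity of $h$ in $x$ let me lower $x_1$ slightly, strictly decreasing $f$ while keeping $h \geq H$, contradicting minimality; and if $g(y_1) < G$ the outer maximizer could raise $y_1$, which (since $h$ decreases in $y$) shrinks the inner feasible set and hence raises the inner value. Thus $g(y_1)=G$ and $h(x_1,y_1)=H$. \textbf{Inner inversion.} For fixed $y$, the problems $\min_x\{f(x):h(x,y)\ge H\}$ and $\max_x\{h(x,y):f(x)\le F_1\}$ share the same optimizer, with optimal values $F_1$ and $H$ respectively. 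One direction is feasibility; the reverse is a squeeze: if some $\tilde x$ with $f(\tilde x)\le F_1$ gave $h(\tilde x,y)>H$, continuity would let me shrink it to $x'$ with $h(x',y)=H$ and $f(x')<F_1$, contradicting that $F_1$ is the least attainable $f$.

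Encapsulating the inner problems in the value functions $\phi(y)=\min_x\{f(x):h(x,y)\ge H\}$ and $\theta(y)=\max_x\{h(x,y):f(x)\le F\}$, the inversion lemma gives the outer-level equivalence
\[
\phi(y)\le F \iff \theta(y)\ge H,
\]
with strict inequalities and equalities matching, since $\phi(y)\le F$ says exactly that some $x$ with $f(x)\le F$ already meets $h(x,y)\ge H$, i.e. $\theta(y)\ge H$. Now (I) is almost mechanical. Problem (\ref{genpr1p2}) is $\max_{g(y)\le G}\phi(y)$, attained at $y_1$ with value $F_1$, and $\phi$ is increasing in $y$. For problem (\ref{genpr3p2}) with $F=F_1$, namely $\min_{g(y)\le G}\theta(y)$, the equivalence turns $\phi(y)\le F_1$ (valid for every feasible $y$ because $F_1$ is the maximum) into $\theta(y)\ge H$, so $\min\theta\ge H$; and $\phi(y_1)=F_1$ forces $\theta(y_1)=H$, so $y_1$ is optimal with value $H$. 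Pairing $y_1$ with the common inner optimizer $x_1$ shows $(x_1,y_1)$ solves (\ref{genpr3p2}) with $h=H$, and the same inversion with $f$ and $g$ exchanged delivers (\ref{genpr2p2}) with $g(y)=G$.

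The main obstacle is purely technical: making the ``shrink $x$ slightly'' perturbations rigorous in the $L^2$ pointwise order. This needs the continuity of $f,g,h$ together with their strict monotonicity to guarantee both that an interior constraint can always be driven to equality and that the inner feasible set varies continuously, so that the \emph{boundary} case $\phi(y)=F \iff \theta(y)=H$ of the equivalence holds and not merely its strict-inequality version. Once that continuity-plus-monotonicity bookkeeping is secured, the translation between the three problems is immediate.
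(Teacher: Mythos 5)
You should first be aware that this paper does not actually prove the statement: it restates it verbatim and explicitly defers the proof to Theorem 8 in Appendix II.D of Part I (\cite{myself3}), so there is no in-paper argument to compare yours against line by line. Judged on its own merits, your overall architecture --- constraint tightness by perturbation, an ``inner inversion'' identifying $\min\{f:h\geq H\}$ with $\max\{h:f\leq F\}$, and a lift to the outer level via the value functions $\phi$ and $\theta$ --- is a sensible reconstruction and is consistent in spirit with the perturbation arguments the paper does carry out (e.g.\ the proof of Proposition~\ref{circ_pr_prop1}). However, there are two places where the argument does not close as written.

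First, the perturbation steps are not merely ``technical bookkeeping'': under the stated hypotheses they can fail outright. The order on $L^2[\mathbb{R}]$ is the pointwise strict partial order, and ``$f$ strictly increasing'' only constrains $f$ on \emph{comparable} pairs. To strictly decrease $f(x_1)$ you must exhibit $x'$ with $x'(t)<x_1(t)$ for \emph{every} $t$ while keeping $x'\geq 0$; no such $x'$ exists as soon as $x_1$ vanishes at a single point (which is exactly the situation in the intended application, where optimal power allocations are zero on some blocks). Worse, two feasible points need not be comparable at all, so strict monotonicity of $f$ gives you no way to compare their $f$-values, and your ``squeeze'' from $h(\tilde x,y)>H$ down to $h(x',y)=H$ with $f(x')<F_1$ presumes a comparable path that the hypotheses do not supply. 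Some additional structure (or a more careful choice of perturbation direction) is needed to make either tightness or the inner inversion rigorous; this is where the substance of the theorem lives, not an afterthought. Second, obtaining problem (\ref{genpr2p2}) ``by the same inversion with $f$ and $g$ exchanged'' skips a real step: in (\ref{genpr2p2}) the outer player is $x$ and the inner constraint is $h(x,y)\leq H$, so you must show both that $\psi(x):=\min\{g(y):h(x,y)\leq H\}\leq G$ for every $x$ with $f(x)\leq F_1$ \emph{and} that $\psi(x_1)=G$, i.e.\ that no $\tilde y$ with $g(\tilde y)<G$ satisfies $h(x_1,\tilde y)\leq H$. That last exclusion does not follow from the $\phi$/$\theta$ equivalence you established for (\ref{genpr1p2}) and (\ref{genpr3p2}); it requires a separate argument tied to the outer optimality of $y_1$ in (\ref{genpr1p2}).
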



\subsection{Proof of Proposition \ref{uniqueness_prop}}\label{app23}

Take \emph{Problem 1}.
By Proposition \ref{circ_pr_thm}, if there exists $P_{M,1}$ such that solving the game in (\ref{game21}) and (\ref{game22})
with the constraint $\sum_{m=1}^{M-1}P_m \leq M P_{M,1}$ yields the objective $I_M(\mathbf{h},\{P_m\},\{J_m\})=R$, then the solution
of \emph{Problem 1} coincides with the solution of the game in (\ref{game21}) and (\ref{game22}).

We write this solution as in (\ref{sol11}) and (\ref{sol12}), but we denote $\lambda=1/\eta$ and
$\mu= \nu/\eta$:
\begin{gather}\label{solution1}
P_m^*=\left\{ \begin{array}{ccc} (\lambda-\frac{\sigma_N^2}{h_m})^+ & \textrm{if} & h_m<\frac{\sigma_N^2}{\lambda-\sigma_N^2 \mu}\\
\mu\frac{\lambda h_m}{1+\mu h_m} & \textrm{if} & h_m\geq\frac{\sigma_N^2}{\lambda-\sigma_N^2 \mu}
\end{array} \right.
\end{gather}

\begin{gather}\label{solution2}
J_m^*=\left\{ \begin{array}{ccc} 0 & \textrm{if} & h_m<\frac{\sigma_N^2}{\lambda-\sigma_N^2 \mu}\\
\frac{\lambda h_m}{1+\mu h_m}-\sigma_N^2 & \textrm{if} & h_m\geq\frac{\sigma_N^2}{\lambda-\sigma_N^2 \mu}
\end{array} \right.
\end{gather}
where $\lambda$ and $\mu$ are constants that can be determined from the constraints
$\sum_{m=1}^{M-1}J_m=M J_M$ and $\sum_{m=1}^{M-1}I(h_m,P_m,J_m)=MR$.

We shall use the following conventions and denotations:
\begin{itemize}
\item Without loss of generality, we shall assume that the blocks in a frame are indexed in increasing order of
their channel coefficients. That is, $h_0\leq h_1 \leq\ldots, \leq h_{M-1}$.
\item Denote $x_m=J_m+\sigma_N^2$ and $x_m^*=J_m^*+\sigma_N^2$. Note that $\frac{x_0^*}{h_0}\geq \frac{x_1^*}{h_1}
\geq\ldots, \geq \frac{x_{M-1}^*}{h_{M-1}}$.
\item Denote by $h_p$ the first block on which the transmitter's power is strictly positive, and by $h_j$ the
first block on which the jammer's power is strictly positive. Note that $h_p\leq h_j$.
\end{itemize}
Note that
\be\label{Pmfirstexpr}
P_m^*=\left[\lambda-\frac{x_m^*}{h_m}\right]_+
\ee
for all $m\in\{0,1,\ldots,M-1\}$, where $[z]_+=\max\{z,0\}$.

Given these and (\ref{solution1}) and (\ref{solution2}) above, we can write:

\begin{eqnarray}\label{sys10d}
\frac{\sigma_N^2}{h_p}\leq\lambda < \frac{\sigma_N^2}{h_{p-1}},
\end{eqnarray}
\begin{eqnarray}\label{sys20d}
\sigma_N^2\frac{1+\mu h_{j}}{h_j}\leq\lambda < \sigma_N^2\frac{1+\mu (h_{j-1})}{h_{j-1}},
\end{eqnarray}
\begin{eqnarray}\label{sys30d}
MR=\sum_{m=p}^{j-1}\log\left(\frac{\lambda h_m}{\sigma_N^2}\right)+{}\nonumber\\
{}-\sum_{m=j}^{M-1}\log\left(\frac{1}{1+\mu h_m}\right),
\end{eqnarray}

Denote by $Q_U[h]$ denotes the index of the smallest channel coefficient in the frame that is larger than $h$.
With this notation, we can write

\be
p\geq Q_U\left[\frac{h_{j-1}}{1+\mu h_{j-1}}\right]
\ee
\be
h_{p-1}<\frac{h_{j}}{1+\mu h_{j}}
\ee

\begin{eqnarray}\label{sys2d}
\frac{1}{M}\sum_{m=j}^{M-1}\left[\frac{\frac{h_m}{1+\mu h_m}}{\frac{h_j}{1+\mu h_j}}-1\right]\leq\frac{J_M}{\sigma_N^2} \leq \nonumber\\
\frac{1}{M}\sum_{m=j}^{M-1}\left[\frac{\frac{h_m}{1+\mu h_m}}{\frac{h_{j-1}}{1+\mu h_{j-1}}}-1\right],
\end{eqnarray}

\begin{eqnarray}\label{sys3d}
\sum_{m=Q_U\left[\frac{h_{j}}{1+\mu h_{j}}\right]}^{j-1}\log\left(h_m\frac{1+\mu h_j}{h_j}\right)-{}\nonumber\\
{}-\sum_{m=j}^{M-1}\log\left(\frac{1}{1+\mu h_m}\right)\leq MR \leq{}\nonumber\\
{}\leq\sum_{m=Q_U\left[\frac{h_{j-1}}{1+\mu h_{j-1}}\right]}^{j-1}\log\left(h_m\frac{1+\mu(h_{j-1})}{h_{j-1}}\right)-{}\nonumber\\
{}-\sum_{j}^{M-1}\log\left(\frac{1}{1+\mu h_m}\right),
\end{eqnarray}
where (\ref{sys2d}) follows from $J_M=\sum_{m=j}^{M-1}\left[\frac{\lambda h_m}{1+\mu h_m}-\sigma_N^2\right]$, 
and the first inequality in (\ref{sys3d}) follows since $h_{p-1}< \frac{h_{j}}{1+\mu h_{j}}$
implies $p\leq Q_U\left[\frac{h_{j}}{1+\mu h_{j}}\right]$ because there is no other channel coefficient
between $h_{p-1}$ and $h_p$.

It is straightforward to show that for fixed $h_j$ the left-most and the right-most terms of inequality (\ref{sys2d})
are strictly decreasing functions of $\mu$, while the left-most and the right-most terms of inequality (\ref{sys3d})
are strictly increasing functions of $\mu$.

Note that
\be
\sum_{m=j}^{M-1}\left[\frac{\frac{h_m}{1+\mu h_m}}{\frac{h_j}{1+\mu h_j}}-1\right]=
\sum_{m=j+1}^{M-1}\left[\frac{\frac{h_m}{1+\mu h_m}}{\frac{h_j}{1+\mu h_j}}-1\right],
\ee
and
\be
\sum_{m=Q_U\left[\frac{h_{j}}{1+\mu h_{j}}\right]}^{j-1}\log\left(h_m\frac{1+\mu h_j}{h_j}\right)-{}\nonumber\\
{}-\sum_{m=j}^{M-1}\log\left(\frac{1}{1+\mu h_m}\right)={}\nonumber\\
{}=\sum_{m=Q_U\left[\frac{h_{j}}{1+\mu h_{j}}\right]}^{j}\log\left(h_m\frac{1+\mu h_j}{h_j}\right)-{}\nonumber\\
{}-\sum_{m=j+1}^{M-1}\log\left(\frac{1}{1+\mu h_m}\right).
\ee
That is, by keeping $\mu$ constant and replacing $h_j$ by $h_{j-1}$ in both first terms of
(\ref{sys2d}) and (\ref{sys3d}), we get exactly the last terms of (\ref{sys2d}) and (\ref{sys3d}), respectively.

Finally, we take a contradictory approach. Suppose there exist two different pairs $(h_{j1},\mu_1)$ and
$(h_{j2},\mu_2)$ that satisfy both (\ref{sys2d}) and (\ref{sys3d}) and assume, without loss of generality
that $h_{j1}<h_{j2}$. Then, in order for $(h_{j2},\mu_2)$ to satisfy (\ref{sys2d}) we need $\mu_2>\mu_1$, while
in order for $(h_{j2},\mu_2)$ to satisfy (\ref{sys3d}) we need $\mu_<\mu_1$.
Thus $h_j$ is unique. Note however that the relations above do not guarantee the uniqueness of $\mu$.

For the optimal $h_j$, the constraint $\sum_{m=1}^{M-1}J_m=M J_M$ translates to
\be\label{newapproach1}
\sum_{m=j}^{M-1} \frac{\lambda h_m}{1+\mu h_m}=MJ_M +(M-j)\sigma_N^2.
\ee
while the constraint $I_M(\mathbf{h},\{P_m\},\{J_m\})=R$ is already given in (\ref{sys30d}).
The left hand side of (\ref{newapproach1}) is a strictly increasing function of $\lambda$ for fixed $\mu$
and a strictly decreasing function of $\mu$ for fixed $\lambda$, while being equal to a constant.

Again, for a contradictory approach, suppose there exist two different pairs of $(\mu_1,\lambda_1)$ and
$(\mu_2,\lambda_2)$ that can generate different solutions. If we assume, without loss of generality that
$\mu_1>\mu_2$, then, in order for (\ref{newapproach1}) to be satisfied by both pairs, we need $\lambda_1>\lambda_2$.
But this can only mean that under $(\mu_2,\lambda_2)$ the transmitter allocates non-zero power
to more channel coefficients than under $(\mu_1,\lambda_1)$. This remark simply says that the index $p$ at which
the transmitter starts transmitting is a decreasing function of $\lambda$, and can easily be verified by
(\ref{Pmfirstexpr}).  

Looking now at (\ref{sys30d}), we observe that its right hand side is a strictly increasing function of $\lambda$
for fixed $\mu$ and a strictly increasing function of $\mu$ for fixed $\lambda$, while being equal to a constant.
In other words, if (\ref{sys30d}) is satisfied by the pair $(\mu_1,\lambda_1)$, then it cannot also be satisfied
by $(\mu_2,\lambda_2)$.
Thus, the pair $(\lambda, \mu)$ that satisfies both (\ref{sys30d}) and (\ref{newapproach1}) is also unique.
But once $h_j$, $\lambda$ and $\mu$ are given, $h^p$ is uniquely determined.
Therefore there cannot exist more than one solution to \emph{Problem1}.

Similar arguments can be applied to show that the solution of \emph{Problem2} is unique.


\subsection{Proof of Proposition \ref{propconcave}}

Since the solution is unique, it follows that $\mathscr{P}_M(J_M)$ is a strictly increasing function.
By closely inspecting the form of the solution in (\ref{solution1}) and (\ref{solution2}),
it is straightforward to see that if $J_M\to \infty$, then $J_m\to\infty$ for all $m\in\{0,1,\ldots, M-1\}$.
If the required $P_M$ were finite, this would imply $I_M\to 0$, which violates the power constraints of
\emph{Problem 1}.

For \emph{Problem 1} we prove that the resulting $\mathscr{P}_M(J_M)$ function is continuous and concave in several steps.
We first show in Lemma \ref{propapp51} that
the optimal jammer strategy $\{x_m^*\}_{m=0}^{M-1}$ is a continuous function of
the given jamming power $J_M$.
Lemma \ref{propapp52} proves that $P_M(\{x_m\})$ is continuous and has
continuous first order derivatives.
This implies that $P_M(J_M)$ is in fact continuous and has a continuous first order derivative.
Finally, Lemma \ref{propapp53} shows that for any fixed $h_p$ and $h_j$ the function $P_M(J_M)$ is concave.

\vspace*{4pt}
\begin{lemma}\label{propapp51}
The optimal jammer power allocation $\{x_m^*\}_{m=0}^{M-1}$ within a frame
is a continuous increasing function of the given jamming power $J_M$ invested over that frame.
\end{lemma}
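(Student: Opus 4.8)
The plan is to exploit the explicit form of the optimal allocation established in the uniqueness proof (Appendix~\ref{app23}): on the blocks the jammer occupies ($m\geq j$) we have $x_m^*=\lambda h_m/(1+\mu h_m)$, while on the remaining blocks $x_m^*=\sigma_N^2$, and the two scalars $(\lambda,\mu)$ are pinned down by the information constraint (\ref{sys30d}) and the power constraint (\ref{newapproach1}). By Proposition~\ref{uniqueness_prop} this solution, hence the pair $(\lambda,\mu)$ together with the indices $p,j$, is uniquely determined by $J_M$. My strategy is therefore to track $(\lambda,\mu,p,j)$ as $J_M$ grows: first on the intervals of $J_M$ over which $p$ and $j$ are constant, and then across the finitely many values of $J_M$ at which an index changes.

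First I would treat a fixed-$(p,j)$ interval. Write the power constraint (\ref{newapproach1}) as $\Phi(\lambda,\mu)=MJ_M+(M-j)\sigma_N^2$ and the information constraint (\ref{sys30d}) as $\Psi(\lambda,\mu)=MR$. The uniqueness proof already records the sign structure I need: $\Phi$ is strictly increasing in $\lambda$ and strictly decreasing in $\mu$, while $\Psi$ is strictly increasing in both $\lambda$ and $\mu$. Consequently the Jacobian $\partial(\Phi,\Psi)/\partial(\lambda,\mu)$ is nonsingular, so the implicit function theorem gives $(\lambda,\mu)$ as a continuously differentiable function of $J_M$ on the interval, and $x_m^*=\lambda h_m/(1+\mu h_m)$ inherits this regularity. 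For monotonicity I avoid a derivative computation and argue by comparative statics: holding $\Psi=MR$ forces $\mu$ to move opposite to $\lambda$ along the level curve, and moving to larger $\lambda$ (hence smaller $\mu$) strictly increases $\Phi$ through both of its arguments, so raising $J_M$ must push $\lambda$ up and $\mu$ down. Since $\lambda h_m/(1+\mu h_m)$ is increasing in $\lambda$ and decreasing in $\mu$, every occupied $x_m^*$ strictly increases with $J_M$, while the unoccupied ones equal the constant $\sigma_N^2$. (When $p=j$ the curve $\Psi=MR$ fixes $\mu$ by itself and $\Phi$ then fixes $\lambda$; the same conclusion holds with $\mu$ constant, which is exactly the affine regime of the $\mathscr{P}_M(J_M)$ curve.)

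The main obstacle, and the step that needs genuine care, is continuity across the index transitions, which I would settle by boundary matching. When $j$ drops to $j-1$, block $j-1$ enters the jammer's support precisely at the threshold $h_{j-1}=\sigma_N^2/(\lambda-\sigma_N^2\mu)$, at which $\lambda h_{j-1}/(1+\mu h_{j-1})=\sigma_N^2$; thus the occupied formula for $x_{j-1}^*$ meets the value $\sigma_N^2$ of the unoccupied formula, giving continuity of that coordinate. For the remaining coordinates I check that $(\lambda,\mu)$ itself is continuous at the transition: the extra term that (\ref{newapproach1}) and (\ref{sys30d}) acquire when block $j-1$ changes status vanishes or matches at the threshold, because $\log(\lambda h_{j-1}/\sigma_N^2)=\log(1+\mu h_{j-1})$ there while the added $\sigma_N^2$ appears identically on both sides of (\ref{newapproach1}). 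A transition in $p$ is handled the same way: the block enters the transmitter's support with $P_m=0$, i.e. $\lambda=\sigma_N^2/h_m$, so the new $\log(\lambda h_m/\sigma_N^2)$ term in (\ref{sys30d}) equals $0$ and $(\lambda,\mu)$ stays continuous while no $x_m^*$ jumps. Assembling the piecewise-smooth, strictly increasing pieces with these matching conditions yields a globally continuous, nondecreasing $\{x_m^*\}_{m=0}^{M-1}$, as claimed; a simultaneous $p$- and $j$-transition is covered by applying both matching identities at once.
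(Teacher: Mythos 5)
Your proof is correct and follows the same skeleton as the paper's (which is only two sentences long): monotone continuity on each interval where $p$ and $j$ are fixed, plus continuity at the finitely many index transitions, the latter ultimately resting on the uniqueness of Proposition~\ref{uniqueness_prop}. You supply the substance the paper declares ``clear'' --- the implicit-function/comparative-statics argument from the sign structure of (\ref{newapproach1}) and (\ref{sys30d}), and the explicit boundary matching $\lambda h_{j-1}/(1+\mu h_{j-1})=\sigma_N^2$ and $\log(\lambda h_{p-1}/\sigma_N^2)=0$ at the thresholds --- all of which checks out.
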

\vspace*{4pt}
\begin{proof}
It is clear that $x_m^*$ is continuous and increasing as a function of $J_M$ if $h_p$ and $h_j$ are fixed.
At any point where either $h_p$ or $h_j$ change as a result of a change in $J_M$, the optimal jamming strategy
$\{x_m^*\}_{m=0}^{M-1}$ maintains continuity as a result of the uniqueness of the solution (Proposition \ref{uniqueness_prop}).
\end{proof}
\vspace*{4pt}

\begin{lemma}\label{propapp52}
Both $P_M(\{x_m\})$ and the derivatives $\frac{dP_M}{dx_m}$ are continuous
functions of $\{x_m\}_{m=0}^{M-1}$.
\end{lemma}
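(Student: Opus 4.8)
The plan is to argue region-by-region in $\{x_m\}$-space, where a \emph{region} is a maximal open set on which the index $p$ of the first active transmitter block stays constant, and then to glue the pieces together across the boundaries where $p$ jumps. Inside a region everything will be manifestly smooth; all the work is at the transitions.

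First I would fix a region and write $P_M$ in closed form. By (\ref{Pmfirstexpr}) the transmitter water-fills, $P_m=[\lambda-x_m/h_m]_+$, and since Proposition \ref{circ_pr_prop1} forces $I_M=R$, the level $\lambda$ solves $\prod_{m=p}^{M-1}(\lambda h_m/x_m)=c$, giving
\begin{gather}
\lambda=\left(c\prod_{m=p}^{M-1}\frac{x_m}{h_m}\right)^{1/(M-p)},\quad
P_M=\frac{1}{M}\left[(M-p)\lambda-\sum_{m=p}^{M-1}\frac{x_m}{h_m}\right].
\end{gather}
On this region $P_M$ depends only on $\{x_m\}_{m\geq p}$ and is a smooth (indeed analytic, for $x_m>0$) function of them. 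Differentiating, and using $\partial\lambda/\partial x_m=\lambda/[(M-p)x_m]$ for $m\geq p$, I obtain the compact expression
\begin{gather}
\frac{\partial P_M}{\partial x_m}=\frac{1}{M}\left(\frac{\lambda}{x_m}-\frac{1}{h_m}\right)=\frac{P_m}{M x_m},\quad m\geq p,
\end{gather}
while $\partial P_M/\partial x_m=0$ for $m<p$. Thus both $P_M$ and its first partials are continuous within any region.

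The only place continuity can fail is on a boundary where $p$ decreases to $p-1$ (a new block becomes active, or symmetrically leaves). I would first check that $\lambda$ itself is continuous there: the defining condition for block $p-1$ to be on the verge of activation is $\lambda=x_{p-1}/h_{p-1}$, and substituting this into the $(M-p+1)$-block formula for $\lambda$ reproduces exactly the $(M-p)$-block value, so both one-sided limits agree. Since the freshly activated block enters with $P_{p-1}=\lambda-x_{p-1}/h_{p-1}=0$, its contribution to $P_M$ vanishes at the boundary, so $P_M$ is continuous across it. For the derivatives: with respect to $x_{p-1}$ the inactive side gives $0$ and the active side gives $P_{p-1}/(Mx_{p-1})$, which also tends to $0$; with respect to any $x_m$ with $m\geq p$, continuity of $\lambda$ (hence of $P_m$) makes $P_m/(Mx_m)$ continuous. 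Hence every first partial matches from the two sides.

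The main obstacle is precisely this transition analysis, since inside a fixed-$p$ region the claim is trivial. The structural fact that rescues it is that blocks always enter (or leave) the transmitter's active set with zero allocated power, so both $P_M$ and each $\partial P_M/\partial x_m$ glue together $C^1$-smoothly across the boundary; continuity of the underlying solution across these transitions is independently guaranteed by the uniqueness of Proposition \ref{uniqueness_prop}, as already invoked in Lemma \ref{propapp51}. Composed with Lemma \ref{propapp51}, this continuity of $P_M(\{x_m\})$ and its derivatives yields that $P_M(J_M)$ is itself continuous with a continuous first derivative.
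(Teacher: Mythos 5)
Your proposal is correct and follows essentially the same route as the paper's proof: reduce to the explicit water-filling formula $P_M=\frac{1}{M}\bigl[(M-p)\lambda-\sum_{m\geq p}x_m/h_m\bigr]$ with $\partial P_M/\partial x_m=\frac{1}{M}(\lambda/x_m-1/h_m)$ on each fixed-$p$ region, and then observe that at a transition the newly activated block satisfies $\lambda=x_{p-1}/h_{p-1}$, so it enters with zero power and both $P_M$ and all first partials match from the two sides. The only cosmetic difference is that you organize the argument by regions and their boundaries while the paper tracks a trajectory between two points, which is the same gluing argument.
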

\vspace*{4pt}
\begin{proof}

Consider any two points $\mathfrak{X}_1=(x_{1,m})_{m=0}^{M-1}$ and $\mathfrak{X}_2=(x_{2,m})_{m=0}^{M-1}$ and any
trajectory $\mathfrak{T}$ that connects them.

For a given vector $\mathfrak{X}=(x_{m})_{m=0}^{M-1}$, the required transmitter power is
\begin{eqnarray}\label{relPM}
P_M=\frac{M-p}{M}\left(\frac{c}{\left(\prod_{m=p}^{M-1}h_m\right)}\right) ^{\frac{1}{M}}
\left(\prod_{m=p}^{M-1}x_m\right)^{\frac{1}{M}}-{}\nonumber\\
{}-\frac{1}{M}\sum_{m=p}^{M-1}\frac{x_m}{h_m}.
\end{eqnarray}
Note that $p$ depends upon the choice of $\mathfrak{X}$.
For fixed $p$, the continuity and differentiability of $P_M(\mathfrak{X})$ are obvious.
Thus, it suffices to show that these properties also hold in a point of $\mathfrak{T}$ where $p$ changes.

If we can show continuity and differentiability when $p$ is decreased by $1$, then larger variations of $p$
can be treated as multiple changes by $1$, and continuity still holds.

Recall the assumption that the channel coefficients are always indexed
in decreasing order of the quantities $\frac{x_m}{h_m}$.
Let $\mathfrak{X}_k=(x_{k,m})_{m=0}^{M-1}$ be a point of $\mathfrak{T}$ where
the transmitter decreases the index of the block over which it starts to transmit from $p_k$ to $p_k-1$,
and denote by $\mathfrak{T_1}$ the part of the trajectory $\mathfrak{T}$ that is between
$\mathfrak{X}_1$ and $\mathfrak{X}_k$, and $\mathfrak{T_2}=\mathfrak{T}\setminus \mathfrak{T_1}$.

Since $P_{p_k-1}=0$, we know that $\lambda$ does not change in this point, since
\be
\frac{1}{M}\sum_{m=p}^{M-1}\left[\lambda-\frac{x_m}{h_m}\right]=
\frac{1}{M}\sum_{m=p-1}^{M-1}\left[\lambda-\frac{x_m}{h_m}\right]=P_M.
\ee

Define the ``left'' and ``right'' limits  $P_M(\mathfrak{X}_k-)$ and $P_M(\mathfrak{X}_k+)$ as:
\begin{gather}\label{leftrightlimit1}
P_M(\mathfrak{X}_k-)=\lim_{\substack{\mathfrak{X}\to \mathfrak{X}^k\\ \mathfrak{X}\in
\mathfrak{T_1}}} P_M(\mathfrak{X}),
\end{gather}
\begin{gather}\label{leftrightlimit2}
P_M(\mathfrak{X}_k+)=\lim_{\substack{\mathfrak{X}\to \mathfrak{X}^k\\ \mathfrak{X}\in
\mathfrak{T_2}}} P_M(\mathfrak{X}).
\end{gather}
Since $\mathbb{R}_+^M$ is Hausdorff \cite{munkres}, there exists a small enough neighborhood $\mathfrak{U}\subset \mathbb{R}_+^M$
of $\mathfrak{X}_k$, such that
$p(\mathfrak{X})=p_k$ to the ``left'' and $p(\mathfrak{X})=p_k-1$ to the ``right'' of $\mathfrak{X}_k$ on $\mathfrak{U}$.
We can now write: 
\begin{gather}
P_M(\mathfrak{X}_k+)={}\nonumber\\
{}=\lambda \frac{M-p_k+1}{M}-
\frac{1}{M}\sum_{m=p_k-1}^{M-1}\frac{x_{k,m}}{h_{m}}={}\nonumber\\
{}=\lambda \frac{M-p_k}{M}-
\frac{1}{M}\sum_{m=p_k}^{M-1}\frac{x_{k,m}}{h_{m}}+{}\nonumber\\
{}+\frac{1}{M}(\lambda-\frac{x_{k,p_k-1}}{h_{p_k-1}})=P_M(\mathfrak{X}_k-),
\end{gather}
where the last equality follows because $\lambda=\frac{x_{k,p_k-1}}{h_{p_k-1}}$.
This proves continuity.

Similar arguments can be used to show the continuity of the derivatives
\begin{gather}
\frac{dP_M}{dx_{n}}=\frac{1}{M}\left(\frac{\lambda}{x_{n}}-\frac{1}{h_n}\right)
\end{gather}
in $\mathfrak{X}_k$ (note that $\frac{\lambda}{x_{k,p_k-1}}=\frac{1}{h_{p_k-1}}$).

Therefore, $P_M(\mathfrak{X})$ is continuous and has first-order derivatives that are continuous along any
trajectory $\mathfrak{T}$ between any two points $\mathfrak{X}_1$ and $\mathfrak{X}_2$.
\end{proof}
\vspace*{4pt}

Finally, for the last part of our proof:
\vspace*{4pt}
\begin{lemma}\label{propapp53}
For fixed $p$ and $j$, the function $P_M(J_M)$ is concave.
\end{lemma}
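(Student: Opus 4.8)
The plan is to exhibit $P_M$, restricted to the segment of $J_M$ on which the two indices $p$ and $j$ stay constant, as the value function of a parametric \emph{concave} maximization and then invoke the standard fact that such value functions are concave. First I would pass to a more convenient variable. For fixed $p$ and $j$ the jammer is silent on blocks $0,\ldots,j-1$ (so $x_m=\sigma_N^2$ there) and active on $j,\ldots,M-1$, while the transmitter water-fills on $p,\ldots,M-1$; the frame constraint reads $\sum_{m=j}^{M-1}x_m = MJ_M+(M-j)\sigma_N^2 =: S$, which is affine and strictly increasing in $J_M$, so it suffices to prove concavity in $S$. Isolating the terms that are constant for fixed $p,j$ in (\ref{relPM}) and using $\prod_{m=p}^{M-1}x_m=(\sigma_N^2)^{j-p}\prod_{m=j}^{M-1}x_m$, I would write
\[
P_M = B\Big(\prod_{m=j}^{M-1}x_m\Big)^{1/M} - \frac{1}{M}\sum_{m=j}^{M-1}\frac{x_m}{h_m} + \textrm{const},
\]
with $B=\frac{M-p}{M}\big(c/\prod_{m=p}^{M-1}h_m\big)^{1/M}(\sigma_N^2)^{(j-p)/M}>0$, and regard the right-hand side as a function $\phi(x_j,\ldots,x_{M-1})$ that the jammer maximizes over the simplex $\mathcal{F}(S)=\{x_m\ge\sigma_N^2,\ \sum_{m=j}^{M-1}x_m=S\}$.

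The key step is to show that $\phi$ is concave on the positive orthant. The second summand is linear, so everything reduces to the weighted geometric mean $\big(\prod_{m=j}^{M-1}x_m\big)^{1/M}$. I would invoke the standard fact that $\prod_i x_i^{\theta_i}$ with $\theta_i\ge 0$ is concave on $\mathbb{R}_{++}^n$ whenever $\sum_i\theta_i\le 1$; here each exponent is $1/M$ and there are $M-j$ of them, so the exponents sum to $(M-j)/M\le 1$, which holds precisely because $j\ge 0$. (If a self-contained argument is wanted, one checks that the Hessian of the logarithm plus the rank-one correction is negative semidefinite, or applies H\"older's inequality to $\phi$ evaluated at a convex combination.) Multiplying by $B>0$ and adding the linear term preserves concavity, so $\phi$ is concave in $(x_j,\ldots,x_{M-1})$.

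With $\phi$ concave and $\mathcal{F}(S)$ a convex set depending affinely on the scalar $S$, the value function $g(S)=\max_{\mathcal{F}(S)}\phi$ is concave by the usual convex-combination argument: given maximizers $x^1\in\mathcal{F}(S_1)$, $x^2\in\mathcal{F}(S_2)$ and $\theta\in[0,1]$, the point $\theta x^1+(1-\theta)x^2$ lies in $\mathcal{F}(\theta S_1+(1-\theta)S_2)$ since both the lower bounds $x_m\ge\sigma_N^2$ and the linear sum are preserved, whence $g(\theta S_1+(1-\theta)S_2)\ge \phi(\theta x^1+(1-\theta)x^2)\ge \theta g(S_1)+(1-\theta)g(S_2)$. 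On the segment where $p,j$ are the genuine optimal indices we have $\mathscr{P}_M(J_M)=g(S)$, and since $S$ is affine in $J_M$, concavity of $g$ transfers to $P_M(J_M)$.

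I expect the concavity of the weighted geometric mean to be the main technical point, and the single place where the index hypothesis really enters (the exponents must sum to at most $1$). A secondary subtlety is the identification $\mathscr{P}_M(J_M)=g(S)$: one must be sure that throughout the segment the genuine transmitter index stays equal to the fixed $p$, so that formula (\ref{relPM}) with that $p$ is exact at the optimizer. Because $\phi$ is concave, a point that is locally optimal for the genuine required power (where the fixed-$p$ formula applies) is automatically the global maximizer over $\mathcal{F}(S)$; this, together with the uniqueness and continuity already established in Proposition \ref{uniqueness_prop} and Lemma \ref{propapp52}, makes the segment decomposition well defined and the identification consistent.
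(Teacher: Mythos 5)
Your proof is correct, but it takes a genuinely different route from the paper's. The paper parametrizes the optimal intra-frame solution by the multiplier $\mu$, writes $J_M$ and $P_M$ as affine images of two explicit functions $g(\mu)$ and $f(\mu)$, and establishes concavity by a direct computation showing that $df/dg$ is monotone in $\mu$ --- a term-by-term analysis of a ratio of sums that ultimately rests on the algebraic identity $(a^2+b^2)^2-(a+b)(a^3+b^3)=-ab(a-b)^2$. You instead exhibit $\mathscr{P}_M$ on a fixed-$(p,j)$ segment as the value function of maximizing a concave objective (a positive multiple of a weighted geometric mean whose exponents sum to at most $1$, minus a linear term) over a feasible set depending affinely on $J_M$, and invoke preservation of concavity under partial maximization. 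Your route is shorter and more conceptual: it makes transparent where the index hypothesis enters (the exponent sum is at most $1$, with equality exactly when $p=j$, which recovers the paper's observation that the segment is affine precisely in that case), and it replaces the derivative computation by two standard facts of convex analysis. The price is the identification $\mathscr{P}_M(J_M)=g(S)$: the jammer's true objective is the required power, which is a minimum over transmitter responses and agrees with the fixed-$p$ formula $\phi$ only locally, so one must argue that the jammer's optimizer is also the global maximizer of the concave surrogate $\phi$ over $\mathcal{F}(S)$; your local-max-implies-global-max step, combined with the uniqueness and continuity already established in Proposition \ref{uniqueness_prop} and Lemma \ref{propapp52}, closes this gap correctly. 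One cosmetic remark: the water-filling level satisfies $\lambda^{M-p}=c\prod_{m=p}^{M-1}x_m/h_m$, so the exponent in (\ref{relPM}) should read $1/(M-p)$ rather than $1/M$; with that correction the exponents on the free variables sum to $(M-j)/(M-p)\le 1$ because $p\le j$, and your argument goes through verbatim.
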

\vspace*{4pt}
\begin{proof}

We can write
\begin{eqnarray}\label{reljm}
MJ_M+(M-j)\sigma_N^2=\Bigg[c\prod_{m=p}^{j-1}\left(\frac{\sigma_N^2}{h_m}\right)^{\frac{1}{M}}\cdot\nonumber\\
\cdot\prod_{m=j}^{M-1}\left(\frac{1}{1+\mu h_m}\right)^{\frac{1}{M}}\Bigg]^{\frac{M}{j-p}}
\sum_{m=j}^{M-1}\frac{h_m}{1+\mu h_m},
\end{eqnarray}
and denote
\begin{gather}
g(\mu)=\prod_{m=j}^{M-1}\left(\frac{1}{1+\mu h_m}\right)^{\frac{1}{j-p}}\sum_{m=j}^{M-1}\frac{h_m}{1+\mu h_m}
\end{gather}
Note that for fixed $p$ and $j$, $J_M$ is a linear function of $g$.

A similar relation can be found for the required transmitter power $P_M$:
\begin{eqnarray}\label{relpm}
MP_M+\frac{1}{M}\sum_{m=p}^{j-1}\frac{\sigma_N^2}{h_m}=
\Bigg[c\prod_{m=p}^{j-1}\left(\frac{\sigma_N^2}{h_m}\right)^{\frac{1}{M}}\cdot\nonumber\\
\cdot\prod_{m=j}^{M-1}\left(\frac{1}{1+\mu h_m}\right)^{\frac{1}{M}}\Bigg]^{\frac{M}{j-p}}\cdot{}\nonumber\\
{}\cdot \left[\frac{M-p}{M}-\frac{1}{M}\sum_{m=j}^{M-1}\frac{1}{1+\mu h_m}\right].
\end{eqnarray}
Denote
\begin{gather}
f(\mu)=\prod_{m=j}^{M-1}\left(\frac{1}{1+\mu h_m}\right)^{\frac{1}{j-p}}\cdot{}\nonumber\\
{}\cdot \left[(M-p)-\sum_{m=j}^{M-1}\frac{1}{1+\mu h_m}\right]
\end{gather}
and note that for fixed $p$ and $j$, $P_M$ is a linear function of $f$.

It suffices to show that $f(g)$ is concave.
For this purpose, the derivative $\frac{df}{dg}=\frac{df}{d\mu}(\frac{d\mu}{dg})^{-1}$ should be
a decreasing function of $g$, and hence an increasing function of $\mu$.

Computing the derivatives from (\ref{reljm}) and (\ref{relpm}) we get:
\begin{eqnarray}\label{relg}
\frac{df}{dg}=\frac{\frac{df}{d\mu}}{\frac{dg}{d\mu}}={}\nonumber\\
{}=\frac{\frac{1}{j-p}\left((M-p)-\sum_{m=j}^{M-1}\frac{1}{1+\mu h_m}\right)
-\frac{\sum_{m=j}^{M-1}\frac{h_m}{(1+\mu h_m)^2}}{\sum_{m=j}^{M-1}\frac{h_m}{1+\mu h_m}}}
{\frac{1}{j-p}\sum_{m=j}^{M-1}\frac{h_m}{(1+\mu h_m)^2}+
\frac{\sum_{m=j}^{M-1}\frac{h_m^2}{(1+\mu h_m)^2}}{\sum_{m=j}^{M-1}\frac{h_m}{1+\mu h_m}}}
\end{eqnarray}

Arguments similar to those in \cite{myself3} apply in proving that above the derivative increases with $\mu$.
Looking at the right hand side of (\ref{relg}) (the ``large fraction''), we notice that the
first term in the numerator increases with $\mu$.
For the second term in the numerator, it is clear that as $\mu$ increases, its numerator decreases
faster than its denominator. This implies that the whole numerator of the ``large fraction'' is an increasing function of $\mu$.
Similarly, the first term in the denominator is clearly a decreasing function of $\mu$.
The only thing left is the second term of the denominator.
It is straightforward to show that its derivative with respect to $\mu$
can be written as
\be\label{jvouscblajkghilu1}
\frac{d}{d\mu}\frac{\sum_{m=j}^{M-1}\frac{h_m^2}{(1+\mu h_m)^2}}{\sum_{m=j}^{M-1}\frac{h_m}{1+\mu h_m}}
=\frac{1}{\left[\sum_{m=j}^{M-1}\frac{h_m}{1+\mu h_m}\right]^2}\cdot\nonumber\\
\cdot\Bigg\{ \left[\sum_{m=j}^{M-1}\frac{h_m^2}{(1+\mu h_m)^2}\right]^2-\sum_{m=j}^{M-1}\frac{h_m^3}{(1+\mu h_m)^3}\cdot\nonumber\\
\cdot \sum_{m=j}^{M-1}\frac{h_m}{(1+\mu h_m)}\Bigg\}
\ee

If we consider the fact that for any two real numbers $a$ and $b$ we have
\be
(a^2+b^2)^2-(a+b)(a^3+b^3)=-ab(a-b)^2
\ee 
and the summations in (\ref{jvouscblajkghilu1}) are positive, it is easy to see that the second term of the denominator
of the ``large fraction'' is decreasing with $\mu$. 
Hence overall the derivative in (\ref{relg}) increases with $\mu$.

\end{proof}
\vspace*{4pt}


\section{Long Term Power Constraints: Mixed Strategies}\label{app3}

\subsection{Proof of Theorem \ref{thm1_long_term_mixed_sf}}

Denote the solution of the game in (\ref{game21}) and (\ref{game22}), where the jammer is constrained
to $\frac{1}{M}\sum_{m=1}^{M-1}J_m \leq J_M(p_M)$ and the transmitter is constrained to
$\frac{1}{M}\sum_{m=1}^{M-1}P_m \leq p_M$ by $(\{P_{m,1}\},\{J_{m,1}\})$, and the solution of the game in
(\ref{game21}) and (\ref{game22}), where the
transmitter is constrained to $\frac{1}{M}\sum_{m=1}^{M-1}P_m \leq P_M(j_M)$ and the jammer is constrained to
$\frac{1}{M}\sum_{m=1}^{M-1}J_m \leq j_M$ by $(\{P_{m,2}\},\{J_{m,2}\})$.

Denote the solution of the game in (\ref{game21}) and (\ref{game22}), where the jammer is constrained
to $\frac{1}{M}\sum_{m=1}^{M-1}J_m\leq j_M$ and the transmitter is constrained to $\frac{1}{M}\sum_{m=1}^{M-1}P_m\leq p_M$
by $(\{P_{m,0}\},\{J_{m,0}\})$..

By the Proposition \ref{circ_pr_prop1}, we must have
$I_M(\{P_{m,1}\},\{J_{m,1}\})=R$ and $I_M(\{P_{m,2}\},\{J_{m,2}\})=R$,
where $I_M(\{P_m\},\{J_m\})=\frac{1}{M}\sum_{m=0}^{M-1}\log(1+\frac{P_m h_m}{J_m+\sigma_N^2})$.

We will show that (i) even if the jammer's power $j_M$ is different from $J_M(p_M)$,
the transmitter's strategy is still optimal; (ii) even if the transmitter's power
$p_M$ is different from $P_M(j_M)$, the jammer's strategy is still optimal.

Assume the transmitter plays the strategy given by $\{P_{m,1}\}$. 

If $j_M=J_M(p_M)$, it is clear that the optimal solution for both transmitter and jammer is the solution
of the game in (\ref{game21}) and (\ref{game22}), where the jammer is constrained
to $\frac{1}{M}\sum_{m=1}^{M-1}J_m\leq j_M$ and the transmitter is constrained to $\frac{1}{M}\sum_{m=1}^{M-1}P_m\leq p_M$. In this case,
it is as if each player knows the other player's power constraint.

If $j_M<J_M(p_M)$, then by Lemma \ref{propapp51} we have that $J_{m,0}<J_{m,1}~ \forall m$. Since $I_M(\{P_m\},\{J_m\})$
is a strictly decreasing function of $\{J_m\}$ (under the order relation defined in the Appendix III of \cite{myself3}),
this implies that $I_M(\{P_{m,1}\},\{J_{m,1}\})>R$. Note that $\{J_{m,0}\}$ is the jammer's strategy when the jammer knows
the transmitter's power constraint $p_M$.
Thus we have shown that when the transmitter plays $\{P_{m,1}\}$ and $j_M<J_M(p_M)$, the jammer cannot induce outage
over the frame even if it knew the value of $p_M$.

Assume that the jammer plays the strategy given by $\{J_{m,2}\}$. A similar argument shows that if $p_M<P_M(j_M)$, or
equivalently $j_M>J_M(p_M)$, the transmitter cannot achieve reliable communication over
the frame even if it knew the exact value of $j_M$.

This shows that $(\{P_{m,1}\}, \{J_{m,2}\})$ is a Bayes equilibrium \cite{meyerson} for the game with incomplete information
describing the power allocation within a frame.


\subsection{Proof of Proposition \ref{prop_saddlepoints_kkt}}

Take any solution $\{P_M(h)^*\},\{J_M(h)^*\}$ of the KKT conditions and denote by $P_{out}^*$ the
outage probability obtained under these strategies.
By maintaining $\{J_M(h)^*\}$ constant and changing $\{P_M(h)^*\}$, the resulting
probability of outage can only be greater than or equal to $P_{out}^*$, since the original $\{P_M(h)^*\}$
is the solution of a minimization problem with convex cost function and linear constraints.

Similarly, by maintaining $\{P_M(h)^*\}$ constant and changing $\{J_M(h)^*\}$, the resulting
probability of outage can only be less than or equal to $P_{out}^*$, since the original $\{J_M(h)^*\}$
is the solution of a maximization problem with concave cost function and linear constraints.

These arguments imply that $\{P_M(h)^*\},\{J_M(h)^*\}$ is a Nash equilibrium of the game.

\bibliographystyle{IEEEtran}
\bibliography{jamming}
\end{document}